\documentclass{article}

\usepackage{arxiv}

\usepackage[utf8]{inputenc} 
\usepackage[T1]{fontenc}    
\usepackage{hyperref}       
\usepackage{url}            
\usepackage{booktabs}       
\usepackage{array}
\usepackage{amsfonts}       
\usepackage{nicefrac}       
\usepackage{graphicx}
\usepackage{subcaption}
\usepackage{multirow}
\usepackage{adjustbox}
\usepackage[numbers]{natbib}
\usepackage{doi}
\usepackage{amsmath}
\usepackage{amssymb}
\usepackage{algorithm}
\usepackage{algpseudocode}
\usepackage{xcolor}
\usepackage{verbatim}
\usepackage{fancyhdr}
\usepackage{amsthm}
\usepackage{cleveref}
\usepackage{bm}
\usepackage[mathscr]{euscript}
\usepackage{textcomp}

\usepackage{mathrsfs}

\DeclareMathAlphabet{\pazocal}{OMS}{zplm}{m}{n}

\usepackage[normalem]{ulem}

\usepackage{lscape}		
\usepackage{fancyvrb}

\newtheoremstyle{remarkstyle}  
  {5pt}                        
  {5pt}                        
  {}                           
  {}                           
  {\bfseries}                  
  {.}                          
  { }                          
  {}                           
\newtheorem{theorem}{Theorem}[section]

\newtheorem{lemma}[theorem]{Lemma}

\theoremstyle{remarkstyle}
\newtheorem{remark}[theorem]{Remark}

\crefname{figure}{Fig.}{Figs.}
\Crefname{figure}{Figure}{Figures}
\crefname{algorithm}{Algorithm}{Algorithms}
\crefname{equation}{Eq.}{Eqs.}
\crefname{section}{Appendix}{Appendices}
\Crefname{section}{Appendix}{Appendices}

\definecolor{darkorange}{RGB}{0,0,0}

\title{
Constitutive Priors for Inverse Design
}

\author{
Jinkyo Han$^{1}$, 
Bahador Bahmani$^{1,2}$\thanks{Corresponding author: \texttt{bahador.bahmani@northwestern.edu}} \\
\\
$^{1}$Department of Mechanical Engineering, Northwestern University, Evanston, IL 60208, USA \\
$^{2}$Theoretical and Applied Mechanics, Northwestern University, Evanston, IL 60208, USA \\
}

\date{}

\renewcommand{\shorttitle}{Constitutive Prior}

\hypersetup{
    colorlinks=true,      
    linkcolor=green,      
    urlcolor=blue,        
    citecolor=blue        
}

\begin{document}
\maketitle

\begin{abstract}
With recent advances in material synthesis and additive manufacturing, material systems can be designed to achieve prescribed mechanical responses. 
An important class of such problems is the inverse design of elastic networks that attain a target configuration under loading, with applications in robotics, aerospace, and shape-morphing structures. 
This work presents a framework that formulates inverse design directly within the learned family of constitutive behaviors.
Given a collection of stress--strain responses, we construct a constitutive prior, defined as a low-dimensional latent representation of admissible material laws learned directly from these responses. Spatially varying latent variables are then optimized subject to the governing equilibrium equations so that the deformed network matches a target configuration. The constitutive prior is represented using an energy-based, partially input-convex neural network that enforces the constitutive constraints by construction. To improve robustness in the resulting nonconvex optimization problem, the framework combines homotopy-continuation-based optimization with correspondence-free point cloud matching, allowing the target and optimized geometries to have different discretizations. The proposed approach is demonstrated on several inverse design problems for nonlinear elastic networks, and quantitative comparisons with alternative optimization strategies show improved robustness and optimization performance.
\end{abstract}

\keywords{
Shape Morphing, Elastic Network, PDE-Constrained Optimization, Homotopy Continuation, Constitutive Laws, Scientific Machine Learning
}

\section{Introduction}

Recent advances in additive manufacturing, architected materials, and programmable mechanical systems have enabled structures whose mechanical functionality can be tailored through geometry, topology, and material composition
\cite{steeves2007concepts,bertoldi2017flexible,zheng2014ultralight}.
Much of the existing work in inverse mechanical design has therefore focused on identifying geometric or topological configurations that achieve prescribed structural responses
\cite{clausen2015topology,rafsanjani2018kirigami,da2020topology}.
Despite their success, such approaches may lead to intricate structural layouts that are difficult to manufacture, particularly when multiscale features or highly complex topologies are required
\cite{garaigordobil2019overhang,brackett2011topology,guo2017self,li2018topology}.
At the same time, developments in stimuli-responsive, multimaterial, and functionally graded systems have demonstrated that complex structural behavior can also arise from spatially varying material responses while retaining comparatively simple geometries
\cite{kuang2019advances,boley2019shape,kim2018printing,wang2023inverse}.
These developments motivate inverse-design formulations in which the unknowns are not only geometric, but may also describe spatially distributed constitutive behavior coupled through nonlinear equilibrium constraints.

Material-aware inverse design methods commonly prescribe a constitutive family and optimize its associated material parameters.
Related inverse homogenization and topology-optimization approaches seek material architectures with prescribed effective properties~\cite{sigmund1994materials,sigmund1995tailoring,osanov2016topology}.
More recently, data-driven constitutive models have enabled mechanical responses to be learned directly from experimental or simulation data
\cite{as2022mechanics,klein2022polyconvex,linden2023neural,bahmani2024physics,bahmani2026conformal}.
These developments include methods that identify or represent individual classes of constitutive behavior,
as well as models that map material or microstructural descriptors to effective constitutive responses
\cite{vlassis2020geometric,yang2018deep}.
Together with advances in differentiable programming and differentiable mechanical solvers
\cite{farrell2013automated,schoenholz2020jax,xue2023jax,pundir2026versatile},
such constitutive surrogates have been incorporated into multiscale inverse problems and design optimization
\cite{xia2015multiscale,white2019multiscale,zheng2021data,vijayakumaran2025consistent,stollberg2025multiscale}.

These approaches typically assume that either a constitutive parameterization or a descriptor associated with each material response is available.
The present work considers a  setting in which a collection of constitutive datasets is available, but the material, chemical, microstructural, or manufacturing descriptors responsible for the differences among those datasets are unknown or unavailable.

To address this setting, we introduce a \emph{constitutive prior}: a latent representation of a family of constitutive responses learned jointly from the available stress--strain data.
A shared energy-based model captures structure common to the observed responses, while latent coordinates distinguish individual members of the learned family.
After training, the shared model is fixed, and spatially distributed latent coordinates are used as the inverse design variables.
Thus, the optimization remains finite-dimensional, and the latent variables select constitutive responses from a learned family, whereas these variables may not necessarily be physically meaningful.

The proposed formulation is not intended to replace classical parameterized constitutive models when an appropriate analytical form and physically meaningful material parameters are available.
Rather, it targets settings in which constitutive responses are observed without paired material, microstructural, chemical, or manufacturing descriptors, so that neither a prescribed parameterization nor a descriptor-to-response mapping is available.
The learned family restricts the inverse design problem to responses generated by a model fitted to the available data and satisfying the constitutive conditions encoded in its representation.

The term \emph{prior} reflects the role of this learned family in the inverse problem.
Once trained, it is fixed before inverse design and restricts the constitutive responses that may be assigned throughout the structure.
This perspective allows candidate constitutive responses to be inferred from a desired global mechanical state without first requiring their underlying material descriptors to be observed.
Realizing an inferred response through a particular composition, microstructure, or manufacturing process may then be addressed as a separate downstream problem.
Such a realizability assessment lies beyond the scope of the present work.
The overall perspective is illustrated in \cref{fig:motivation}.

\begin{figure}[htbp]
    \centering
    \begin{subfigure}{0.75\textwidth}
        \centering
        \includegraphics[width=\textwidth]{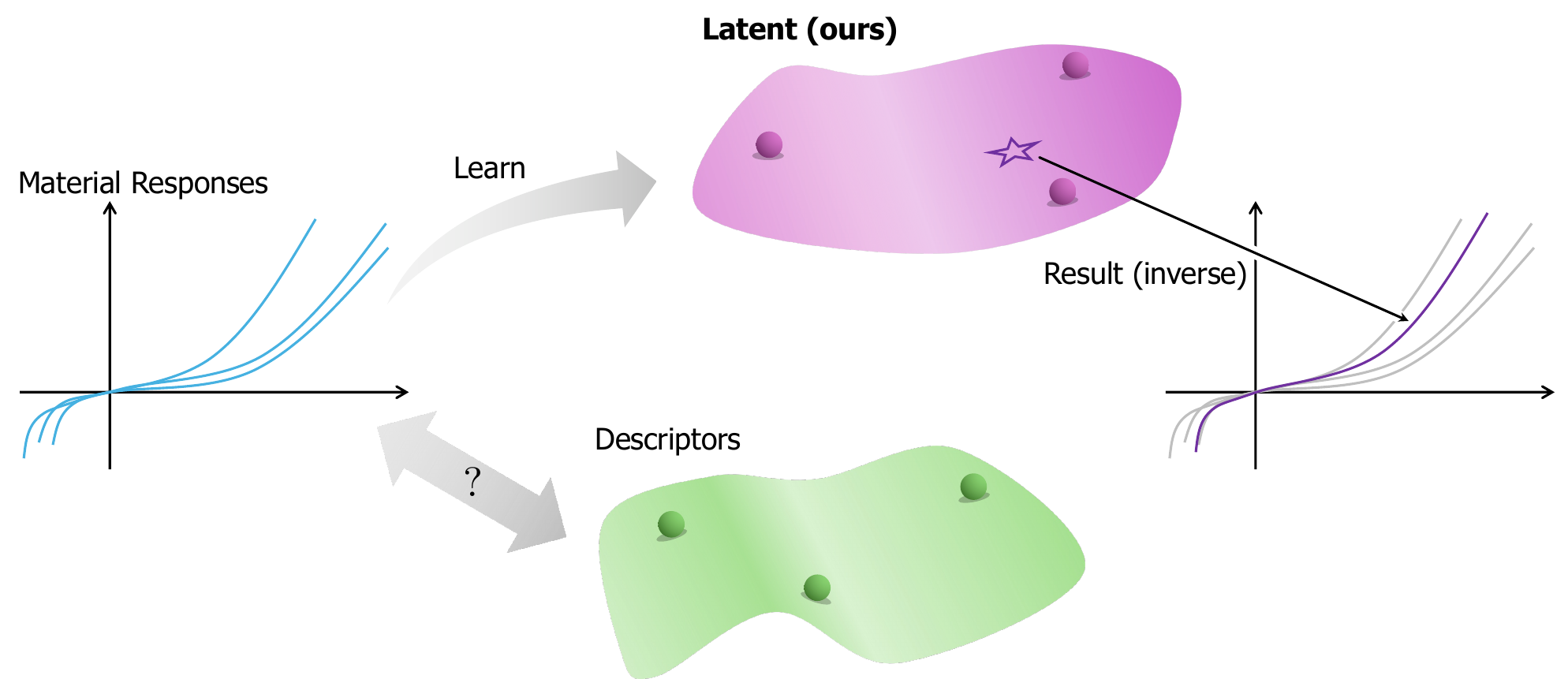}
    \end{subfigure}
    \caption{
    Illustration of the perspective underlying the constitutive prior.
    A latent family of constitutive responses is learned from observed material-response data without requiring the corresponding material or microstructural descriptors.
    Inverse design then selects responses from this learned family, while their physical realization constitutes a separate downstream problem.
    }
    \label{fig:motivation}
\end{figure}

The constitutive prior is inspired by DeepSDF~\cite{park2019deepsdf}, in which shared network parameters and object-specific latent coordinates represent a continuous family of shapes, and by the deep image prior~\cite{ulyanov2018deep}, which uses the structure of a neural network itself to restrict admissible solutions in an inverse problem. Here, the represented objects are constitutive responses, and the learned representation is coupled to the governing mechanical equations and constitutive admissibility conditions.

We investigate the proposed framework in nonlinear elastic networks designed to attain prescribed deformed configurations, with applications in shape-morphing structures and soft robotics
\cite{chen2026physical,wang2025autonomous,kotikian2024liquid}.
The constitutive prior is represented using an energy-based partially input-convex neural network (PICNN) \cite{amos2017input},
allowing continuous interpolation within the learned constitutive family while enforcing the constitutive conditions adopted for the axial members.

A further challenge arises when the target and simulated configurations do not share a common discretization or pointwise correspondence. We therefore represent both configurations as point clouds and measure their discrepancy using a correspondence-free geometric objective~\cite{fan2017point,dubuisson1994modified}. This formulation permits the target and structural configurations to contain different numbers of points or sampling resolutions without requiring nodal correspondence or interpolation onto a common discretization, in contrast to the formulations in~\cite{wang2025autonomous,kotikian2024liquid}, which assume corresponding discretizations or construct a common representation for shape comparison.

The resulting inverse-design problem is strongly nonlinear and nonconvex because the spatially distributed constitutive responses, equilibrium configuration, and geometric objective are coupled through the governing equilibrium equations.
To improve robustness, we employ a homotopy-continuation strategy~\cite{allgower2003introduction,wayburn1987homotopy} that progressively deforms an initial target point cloud toward the desired target.
When only the final target is available, affine registration~\cite{du2010affine,feldmar1996rigid} is used to initialize the continuation path.
We also investigate continuous neural parameterizations of the spatial latent field and use a graph-based metric~\cite{zhou2003learning} to quantify the smoothness of the inferred material distribution.

The main contributions of this work are:
\begin{enumerate}
    \item a formulation of inverse design over a latent family of constitutive responses learned jointly from multiple constitutive datasets without requiring known material or microstructural descriptors;
    
    \item an adjoint-based computational framework that couples spatially varying latent constitutive representations with nonlinear mechanical equilibrium;
    
    \item a correspondence-free point cloud objective combined with homotopy continuation and affine registration for inverse design toward geometrically mismatched targets; and
    
    \item a graph-based metric for quantifying the spatial smoothness of the resulting constitutive design.
\end{enumerate}

The remainder of this paper is organized as follows.
In Section~\ref{sec:formulations}, we present the constitutive-prior inverse-design formulation, its specialization to elastic networks, the geometric objective, sensitivity analysis, continuation strategy, and spatial-regularity metric.
In Section~\ref{sec:result}, we demonstrate the framework through representative inverse-identification and shape-morphing problems. 
Finally, Section~\ref{sec:conclusion} summarizes the main observations, limitations, and directions for future work.

\section{Constitutive Inference Framework}
\label{sec:formulations}

This section introduces the proposed constitutive-prior inverse design framework, in which constitutive responses themselves are treated as the optimization object rather than parameters within a prescribed constitutive model. Section~\ref{sec:formulations:inverse} formulates the general PDE-constrained constitutive inference problem considered in this work, while Section~\ref{sec:formulations:constitutive} introduces the constitutive prior together with its data-driven construction. Section~\ref{sec:formulations:network} then specializes the formulation to elastic networks.

The remaining sections introduce the computational components enabling robust solution of the resulting nonlinear inverse problem. Sections~\ref{sec:formulations:objective}--\ref{sec:formulations:homotopy} define the geometric objective, derive the adjoint formulation, and introduce the homotopy-continuation strategy. Finally, Section~\ref{sec:formulations:smoothness} introduces the graph-based metric used to quantify spatial smoothness of inferred material distributions.

\subsection{Design Problem Formulation}
\label{sec:formulations:inverse}

We consider the inverse design of mechanical systems whose response is governed by nonlinear physics and material behavior. The objective is to determine design variables such that the resulting system configuration matches a prescribed target under given loading conditions.

A challenge is that the appropriate constitutive family and the parameterization of the observed responses may not be known \textit{a priori}. 
Rather than prescribing an analytical constitutive form and optimizing its parameters, we consider a setting in which the admissible family of constitutive responses and its low-dimensional parameterization are learned from observed constitutive data.
To represent this family, we introduce a \emph{constitutive prior}, i.e., a latent-parameterized family of admissible constitutive relations.
This representation restricts the design space to material responses supported by the available data and enables interpolation within the learned family.

Within this framework, the design variables correspond to latent parameters that select material responses from 
the learned constitutive family,
thereby transforming the inverse design problem into an optimization over latent coordinates that parameterize admissible constitutive functions.
For heterogeneous systems, these latent variables are assigned at the level of structural components, enabling spatially varying material behavior.
The inverse design problem is thus posed as a PDE-constrained optimization problem over the latent variables, where the governing equations enforce physical consistency and the objective measures discrepancy between the system configuration and a target configuration.
The general formulation of this latent-space inverse design problem is introduced next.

\begin{remark}
    By a configuration, we refer here to the geometric shape occupied by the structure in physical space, represented as a subset of $\mathbb{R}^d$, with $d\in{2,3}$. The system configuration is the deformed shape obtained under the applied loading, whereas the target configuration is the prescribed shape to be matched.
\end{remark}

Let $\boldsymbol{\theta}\in\mathbb{R}^{n_\theta}$ denote the vector of design variables, 
whose entries are latent coordinates that select material responses from the learned constitutive prior. These variables are defined more explicitly in the following sections.
Given a prescribed loading trajectory $\boldsymbol{\eta}(t)$ for $t \in [0,t_f]$, the resulting system configuration is obtained as the solution of a nonlinear physics-constrained problem, which we denote by the state $\boldsymbol{u}(t;\boldsymbol{\theta})$.

The target configuration trajectory is denoted by $\mathcal{T}(t)$. Since the system and target configurations may not share a common discretization or pointwise correspondence, we represent them as empirical probability measures supported on point sets in physical space.
This representation accommodates different numbers of points, is invariant to their ordering, and provides a common notation for defining geometric discrepancies as expectations over the two configurations~\citep{peyre2019computational}.

For each time $t$, we define the empirical probability measures
\begin{equation}
\mu_{\boldsymbol{\theta}}(t)
=
\frac{1}{N_{\mathrm{sys}}}
\sum_{i=1}^{N_{\mathrm{sys}}}
\delta_{\boldsymbol{x}_i(\boldsymbol{\theta},t)},
\qquad
\nu(t)
=
\frac{1}{N_{\mathrm{targ}}}
\sum_{k=1}^{N_{\mathrm{targ}}}
\delta_{\boldsymbol{y}_k(t)},
\end{equation}
where $\delta_{(\cdot)}$ is a Dirac measure centered at $(\cdot)$, 
$\{\boldsymbol{x}_i(\boldsymbol{\theta},t)\}_{i=1}^{N_{\mathrm{sys}}}$ are sampled points from the system configuration and $\{\boldsymbol{y}_k(t)\}_{k=1}^{N_{\mathrm{targ}}}$ are sampled points from the target configuration.
Each measure assigns uniform weight to the points in the corresponding configuration, so that expectations with respect to these measures reduce to averages over the associated point sets, as used below in the definition of the Chamfer discrepancy.
The points defining each measure need not correspond between configurations, e.g., a common mesh.

We define the trajectory matching objective using a discrepancy functional $\rho$ between probability measures:
\begin{equation}
\mathcal{J}(\boldsymbol{\theta})
=
\int_0^{t_f}
\rho
\left(
\mu_{\boldsymbol{\theta}}(t),
\nu(t)
\right)
\, dt
+
\mathcal{R}(\boldsymbol{\theta}),
\end{equation}
and $\mathcal{R}$ is a regularization function. In general, different choices of the discrepancy $\rho$ are possible depending on the application and the available data. In this work, we specialize $\rho$ to a point-cloud-based discrepancy, namely the Chamfer distance, as it naturally accommodates mismatched discretizations and does not require pointwise correspondence between configurations.

The inverse design problem is then formulated as
\begin{equation}
\begin{aligned}
\boldsymbol{\theta}^{\star}
=
\underset{\boldsymbol{\theta}}{\mathrm{argmin}}
\quad &
\mathcal{J}(\boldsymbol{u})
\\
\text{subject to}
\quad &
\mathcal{G}
\left(
\boldsymbol{u}(t),
\boldsymbol{\theta},
\boldsymbol{\eta}(t)
\right)
=
\boldsymbol{0},
\qquad t \in [0,t_f],
\\
&
\mathcal{B}
\left(
\boldsymbol{u}(t),
\boldsymbol{\eta}(t)
\right)
=
\boldsymbol{0}.
\end{aligned}
\label{eq:gen-optim}
\end{equation}

The dependence on the design variables $\boldsymbol{\theta}$ is implicit through the equilibrium solution $\boldsymbol{u}(\boldsymbol{\theta})$.

Here, $\boldsymbol{u}(t)$ denotes the system state (e.g., displacement field), while $\mathcal{G}=\boldsymbol{0}$ represents the governing equations arising from physical principles, and $\mathcal{B}=\boldsymbol{0}$ encodes boundary and loading conditions.

\subsection{Constitutive Prior Representation}
\label{sec:formulations:constitutive}

We first formulate the ingredients of the constitutive prior for a general
three-dimensional isotropic hyperelastic continuum and then specialize the
construction to elastic networks composed of axial rods.

Suppose that $M$ datasets are obtained from nominally identical experimental protocols applied to different material specimens.
The dataset associated with specimen $m$ is written as
\begin{equation}
\label{eq:data}
    \mathcal{D}^{(m)}
    =
    \{
        (
            \boldsymbol{F}^{(m,i)},
            \boldsymbol{P}^{(m,i)}
        )
    \}_{i=1}^{N^{(m)}},
    \qquad
    m=1,\ldots,M,
\end{equation}
where $\boldsymbol{F}^{(m,i)}$ and
$\boldsymbol{P}^{(m,i)}$ denote the measured deformation gradient and the
corresponding first Piola--Kirchhoff stress at observation $i$ for specimen
$m$, respectively. 
We do not require the datasets to contain the same number
of observations, nor do we require the sampled deformation states to coincide
across specimens. 
Thus, both $N^{(m)}$ and the values of
$\boldsymbol{F}^{(m,i)}$ may vary among the datasets.

We restrict attention to isotropic hyperelastic materials.
Although the specimen-level constitutive responses may differ, we assume that they belong to a common material family and share an underlying constitutive structure.
These differences are attributed to unobserved specimen-specific variations, such as variations in microstructure or manufacturing history. 
We further hypothesize that this variability admits a low-dimensional latent representation, such that the associated family of constitutive responses can be parameterized by a shared nonlinear mapping.
Motivated by implicit neural priors for images and joint latent representations for geometric shapes \citep{ulyanov2018deep,park2019deepsdf}, we adopt an analogous latent parameterization in which the represented objects are constitutive responses.

We represent the strain-energy density by a neural mapping
$\mathcal{M}_{\boldsymbol{\phi}}$ with shared learnable parameters
$\boldsymbol{\phi}$. Its inputs are a deformation measure derived from
$\boldsymbol{F}$ and a material-specific latent code
$\boldsymbol{z}\in\mathcal{Z}\subset\mathbb{R}^{n_z}$.
To enforce material-frame indifference and isotropy, the deformation input is chosen as the vector of principal invariants of right stretch tensor $\boldsymbol{U}$, denoted by $\boldsymbol{i}(\boldsymbol{F})=(i_1,i_2,i_3)$ with $i_1=\mathrm{tr}\,\boldsymbol{U}$, $i_2=\mathrm{tr}(\mathrm{cof}\,\boldsymbol{U})$, and $i_3=\det\boldsymbol{U}=J$~\cite{steigmann2003isotropic,ball1976convexity}.
The strain-energy
density is therefore written as $\mathcal{M}_{\boldsymbol{\phi}}
(
    \boldsymbol{i}(\boldsymbol{F});
    \boldsymbol{z}
)$.
The associated first Piola--Kirchhoff stress is obtained by differentiation as follows,
\begin{equation}
    \boldsymbol{P}_{\boldsymbol{\phi}}
    \left(
        \boldsymbol{F};
        \boldsymbol{z}
    \right)
    =
    \partial_{\boldsymbol{F}}
    \mathcal{M}_{\boldsymbol{\phi}}
    \left(
        \boldsymbol{i}(\boldsymbol{F});
        \boldsymbol{z}
    \right).
    \label{eq:constitutive_stress_mapping}
\end{equation}

To enforce an energy-free and stress-free reference configuration, we construct $\mathcal{M}_{\boldsymbol{\phi}}$ from an auxiliary scalar-valued mapping $g_{\boldsymbol{\phi}}$ as
\begin{align}
    \mathcal{M}_{\boldsymbol{\phi}}
    \left(
        \boldsymbol{i};
        \boldsymbol{z}
    \right)
    ={}&
    g_{\boldsymbol{\phi}}
    \left(
        \boldsymbol{i};
        \boldsymbol{z}
    \right)
    -
    g_{\boldsymbol{\phi}}
    \left(
        \boldsymbol{i}_0;
        \boldsymbol{z}
    \right)
    -
    c(\boldsymbol{z})
    (i_3-1)
    \label{eq:stress_free}
    ,\\
    c(\boldsymbol{z})
    ={}&
    \left.
    \left(
    \frac{
    \partial g_{\boldsymbol{\phi}}
    (\boldsymbol{i};\boldsymbol{z})
    }{
    \partial i_1
    }
    +
    2
    \frac{
    \partial g_{\boldsymbol{\phi}}
    (\boldsymbol{i};\boldsymbol{z})
    }{
    \partial i_2
    }
    +
    \frac{
    \partial g_{\boldsymbol{\phi}}
    (\boldsymbol{i};\boldsymbol{z})
    }{
    \partial i_3
    }
    \right)
    \right|_{\boldsymbol{i}=\boldsymbol{i}_0},
    \label{eq:stress_free_fact}
\end{align}
where
$\boldsymbol{i}_0
=
\boldsymbol{i}(\boldsymbol{F}_0)$
denotes the invariant representation of the reference state.
Here, the last term in \cref{eq:stress_free} combines the stress-free condition from \cite{steigmann2003isotropic} with the linear-in-$J$ correction term in \cite{linden2023neural}.
This formulation is thermodynamically consistent for hyperelastic materials by construction, since the stress is derived from a strain-energy potential and the resulting response exhibits zero mechanical dissipation.

Additional restrictions may be required to obtain the appropriate
stability and well-posedness properties for a particular constitutive
setting. In general finite-strain continuum hyperelasticity, these
requirements commonly involve coercivity and polyconvexity with respect to
the deformation gradients~\cite{ball1976convexity}.
In the three-dimensional case, polyconvexity is preserved if $g_{\boldsymbol{\phi}}$ is convex in its arguments and non-decreasing with respect to $i_1$ and $i_2$ \cite{steigmann2003isotropic}.
In a general setting, $g_{\boldsymbol{\phi}}$ may be parameterized by any function that is convex in $\boldsymbol{i}$ by construction. The required monotonicity with respect to $i_1$ and $i_2$ may then be enforced weakly by augmenting the training objective introduced below with a penalty of the form
$\max(0,-\partial g_{\boldsymbol{\phi}}/\partial i_1)
+\max(0,-\partial g_{\boldsymbol{\phi}}/\partial i_2)$. 
In Appendix~\ref{apdx:stress_free}, we show that the normalization introduced in \cref{eq:stress_free} yields a zero-energy, stress-free reference configuration and preserves convexity with respect to $\boldsymbol{i}$ whenever $g_{\boldsymbol{\phi}}$ is convex in $\boldsymbol{i}$.

Each observed material is assigned a training latent code
$\boldsymbol{z}^{(m)}\in\mathcal{Z}$. These latent codes are not prescribed
material descriptors; they are inferred jointly with the shared model
parameters $\boldsymbol{\phi}$. The shared parameters and training latent
codes are obtained by solving
\begin{equation}
\label{eq:surrogate_loss}
(
    \boldsymbol{\phi}^{\star},
    \{
        \boldsymbol{z}^{(m)\star}
    \}_{m=1}^{M}
)
=
\underset
{
    \boldsymbol{\phi},
    \{\boldsymbol{z}^{(m)}\}_{m=1}^M
}
{\text{argmin}}
\;
\mathbb{E}_m
\Big[
\mathbb{E}_{(\boldsymbol{F}, \boldsymbol{P}) \sim \mathcal{D}^{(m)}}
\big[
\|
\boldsymbol{P}_{\boldsymbol{\phi}}
(\boldsymbol{F}; \boldsymbol{z}^{(m)})
-
\boldsymbol{P}\|_F^2
\big]
+
\frac{1}{s^2}
\|\boldsymbol{z}^{(m)}\|_2^2
\Big],
\end{equation}
where $\lVert \cdot \rVert_F$ denotes the Frobenius norm, and $\lVert \cdot \rVert_2$ denotes the Euclidean norm.
The first term measures the discrepancy between the predicted and observed
stress responses.
The second term is a quadratic latent-code regularizer,
equivalent, up to scaling and an additive constant, to the negative
log-density of a zero-mean isotropic Gaussian distribution.
It discourages
the individual training codes from diverging during joint optimization
\citep{park2019deepsdf}.
Smaller values of $s$ impose stronger
regularization, whereas larger values permit greater variation among the training latent codes.

\begin{remark}
The proposed formulation is itself parametrized, but differs from a conventional parametrized constitutive model in how material-to-material variability is represented. In a conventional approach, a constitutive functional form, such as a Neo-Hookean model or a neural strain-energy model, is selected in advance, and variability may be represented through specimen-dependent model parameters, possibly endowed with probability distributions. In the present formulation, the shared parameters $\boldsymbol{\phi}$ are treated as deterministic, learned jointly from all datasets, and common to all specimens. Variability is instead represented by specimen-specific latent codes $\boldsymbol{z}^{(m)}$, which enter the shared nonlinear mapping as inputs and select individual responses within the learned constitutive family. Thus, $\boldsymbol{\phi}$ captures the common constitutive family of responses, whereas the low-dimensional latent codes distinguish the specimen-specific responses. This latent-variable construction is inspired by DeepSDF \citep{park2019deepsdf} and represents one possible approach for learning a family of constitutive responses.
\end{remark}

For fixed $\boldsymbol{\phi}$, variation of the latent coordinate defines
the constitutive family
\begin{equation}
    \mathcal{C}_{\boldsymbol{\phi}}
    =
    \left\{
        \boldsymbol{F}
        \mapsto
        \mathcal{M}_{\boldsymbol{\phi}}
        \left(
            \boldsymbol{i}(\boldsymbol{F});
            \boldsymbol{z}
        \right)
        \;\middle|\;
        \boldsymbol{z}\in\mathcal{Z}
    \right\}.
    \label{eq:constitutive_prior}
\end{equation}
We refer to $\mathcal{C}_{\boldsymbol{\phi}}$ as the
\emph{constitutive prior}.
Here, \emph{prior} is used in a
non-probabilistic sense: after training, the learned family is fixed before
the inverse-design problem is solved and restricts the constitutive
responses available to the design procedure. Thus,
$\mathcal{M}_{\boldsymbol{\phi}}$ denotes the shared strain-energy mapping,
whereas $\mathcal{C}_{\boldsymbol{\phi}}$ denotes the family generated by
varying $\boldsymbol{z}$.

In this work, the general construction is applied to lattice systems composed of axial rods undergoing finite deformation. 
The kinematics of each rod are described by the positive axial stretch $\lambda>0$, introduced in Section~\ref{sec:formulations:network}.
Accordingly, we use $\lambda$  directly as the sole deformation input to the auxiliary mapping.
The normalization term becomes $c(\boldsymbol{z})
=
\left.
\partial g_{\boldsymbol{\phi}}(\lambda;\boldsymbol{z})
/
\partial\lambda
\right|_{\lambda=1}$ multiplied by $(\lambda-1)$.
The monotonicity requirements associated with $i_1$ and $i_2$ do not arise in this axial-rod setting, and the remaining constitutive condition is convexity with respect to $\lambda$.
To impose this condition exactly by construction, 
we represent $g_{\boldsymbol{\phi}}(\lambda;\boldsymbol{z})$ using a PICNN \citep{amos2017input} that is convex with respect to $\lambda$ for every fixed $\boldsymbol z$, while remaining
unrestricted with respect to $\boldsymbol z$. Since the normalization in \cref{eq:stress_free} subtracts an affine function of $\lambda$, the resulting strain-energy model $\mathcal{M}_{\boldsymbol{\phi}}$ remains convex with respect to $\lambda$. Details of the PICNN architecture are provided in Appendix~\ref{apdx:icnn}.

\begin{remark}
For other constitutive settings, the invariant representation, network
architecture, and construction used to enforce a stress-free and energy-free reference configuration must be selected consistently with the relevant kinematics,
material symmetries, and stability requirements
\citep{ball1976convexity,schroder2003invariant,
klein2022polyconvex,linden2023neural,bahmani2026conformal}.
\end{remark}

Once the training problem in \cref{eq:surrogate_loss} is solved, the shared
parameters are fixed at $\boldsymbol{\phi}^{\star}$. 
For a lattice with
element set $\mathcal{E}$, each element is assigned a latent design variable
$\boldsymbol{z}_e\in\mathcal{Z}$, and the complete design vector is
\begin{equation}
    \boldsymbol{\theta}
    =
    \operatorname{vec}
    \left(
        \{
            \boldsymbol{z}_e
        \}_{e\in\mathcal{E}}
    \right)
    \in
    \mathcal{Z}^{|\mathcal{E}|}
    \subset
    \mathbb{R}^{|\mathcal{E}|n_z}.
    \label{eq:element_latent_design_variables}
\end{equation}
Each $\boldsymbol{z}_e$ selects a local strain-energy function from
$\mathcal{C}_{\boldsymbol{\phi}^{\star}}$. The inverse-design problem
therefore optimizes the spatial distribution of constitutive responses
within the learned family while keeping the shared model parameters fixed.

\subsection{Specialization to Elastic Networks}
\label{sec:formulations:network}

We now specialize the general formulation to mechanical systems represented as elastic networks composed of axial elements \cite{Bonet_Gil_Wood_2016,de2012nonlinear,wriggers2008nonlinear}.
We consider a structure defined by a graph 
$G = (\mathcal{V}, \mathcal{E})$,
where $\mathcal{V}$ denotes the set of nodes.
Each element $e = (i,j) \in \mathcal{E}$ connects nodes
$i,j \in \mathcal{V}$.
Let $\boldsymbol{X}_i, \boldsymbol{x}_i \in \mathbb{R}^d$ denote the reference
and current positions of node $i$, respectively. The displacement is defined as $\boldsymbol{x}_i = \boldsymbol{X}_i + \boldsymbol{u}_i$.
For each element $e=(i,j)$, we define the total stretch
\begin{equation}\label{eq:stretch}
\lambda_e
=
\frac{\|\boldsymbol{x}_j - \boldsymbol{x}_i\|_2}{L_e},
\qquad
L_e = \|\boldsymbol{X}_j - \boldsymbol{X}_i\|_2.
\end{equation}
The strain energy density is defined as $\Psi_e(\lambda_e; \boldsymbol{z}_e)$,
where $\boldsymbol{z}_e$ is the latent material parameter of the learned constitutive prior.
The corresponding first Piola--Kirchhoff stress is $P_e = \partial_{\lambda_e} \Psi_e$.

The total potential energy of the system is given by
\begin{equation}
\Pi(\boldsymbol{u})
=
\sum_{e \in \mathcal{E}}
A_e L_e \, \Psi_e(\lambda_e; \boldsymbol{z}_e)
-
\sum_{i \in \mathcal{V}}
\boldsymbol{f}_i^\top \boldsymbol{u}_i,
\end{equation}
where $A_e$ is the cross-sectional area and $\boldsymbol{f}_i$ are external forces.
The equilibrium configuration is obtained as a stationary point of the energy,
\begin{equation}\label{eq:equilibrium}
\boldsymbol{R}(\boldsymbol{u};\boldsymbol{\theta})
=
\frac{\partial \Pi}{\partial \boldsymbol{u}}
=
\boldsymbol{0},
\end{equation}
which provides a realization of the governing operator $\mathcal{G}$ introduced in \cref{eq:gen-optim}, i.e., $\boldsymbol{R}(\boldsymbol{u};\boldsymbol{\theta})$ corresponds to a specific instance of $\mathcal{G}$ for the lattice system.

\subsection{Geometric Objective Formulation}
\label{sec:formulations:objective}

The objective of the inverse design problem is to match a prescribed target
configuration under given loading conditions. In the general formulation,
this is expressed through a discrepancy operator $\rho$ between the
system configuration and the target configuration.
When the configurations share a common discretization and pointwise
correspondence, a natural choice is an $\ell^2$-based discrepancy between
corresponding degrees of freedom. However, such formulations are inherently
mesh-dependent and impose artificial constraints when the objective concerns
geometric shape rather than specific nodal locations.

\begin{figure}[htbp]
    \centering
    \begin{subfigure}{0.55\textwidth}
        \centering
        \includegraphics[width=\textwidth]{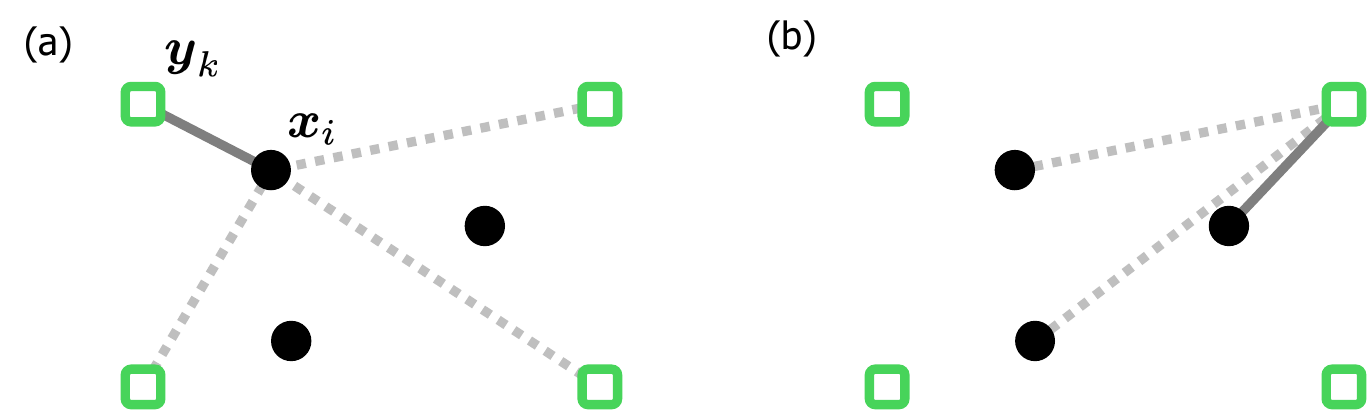}
    \end{subfigure}
    \hfill
    \caption{
        Illustration of the Chamfer distance evaluation between two point sets, depicted by black circles and green open squares.
        (a) For a selected point in one set, distances to all points in the other set are evaluated; the solid segment denotes the nearest-neighbor distance, whereas the dotted segments denote the remaining candidate distances.
        This operation is repeated for all points in that set.
        (b) The same procedure is performed in the reverse direction.
        The resulting directional collections of nearest-neighbor distances are aggregated to obtain the Chamfer distance.
    }
    \label{fig:chamfer_distance}
\end{figure}

To obtain a practical and computable realization of the discrepancy operator $\rho$, we specialize the measure-based formulation to empirical point clouds, resulting in a point-cloud-based approximation of the underlying distributional discrepancy.
Let
\begin{equation}
\mathcal{T}
=
\{\, \boldsymbol{y}_k \in \mathbb{R}^d \,\}_{k=1}^{N_{\mathrm{targ}}}
\end{equation}
denote the target point set.
Let $\tilde{\mathcal{V}} \subset \mathcal{V}$ denote the subset of system nodes whose positions are used for matching. 
Let $\mu_{\mathcal{S}}$ and $\mu_{\mathcal{T}}$ denote the empirical uniform measures over the sets 
$\mathcal{S}=\{\boldsymbol{x}_i\}_{i \in \tilde{\mathcal{V}}}$ and $\mathcal{T}$, respectively. 
We define the discrepancy using the symmetric Chamfer distance~\cite{fan2017point,dubuisson1994modified} between the current configuration and the target configuration as
\begin{equation}\label{eq:chamfer_distance}
\mathcal{J}(\boldsymbol{u};\mathcal{T}, \mathcal{S})
=
\frac12
\mathbb{E}_{\boldsymbol{x} \sim \mu_{\mathcal{S}}}
[
\min_{\boldsymbol{y}\in\mathcal{T}}
\| \boldsymbol{x} - \boldsymbol{y} \|_2^2
]
+
\frac12
\mathbb{E}_{\boldsymbol{y} \sim \mu_{\mathcal{T}}}
[
\min_{\boldsymbol{x}\in \mathcal{S}}
\| \boldsymbol{y} - \boldsymbol{x} \|_2^2
]
\;,
\end{equation}
where its graphical description is illustrated in \cref{fig:chamfer_distance}.
In practice, an expectation with respect to an empirical probability measure reduces to a finite average over its support points.
The dependence on the design variables $\boldsymbol{\theta}$ is implicit through the equilibrium configuration $\boldsymbol{u}(\boldsymbol{\theta})$.

For problems involving loading trajectories, we consider a sequence of
target geometric configurations
\begin{equation}
\mathcal{T}_{\mathrm{traj}}
=
\{\mathcal{T}^{(\tau)}\}_{\tau=1}^{N_{\mathrm{step}}},
\end{equation}
where $\mathcal{T}^{(\tau)}$ denotes the target configuration at load step $\tau=1,\dots,N_{\mathrm{step}}$.
The corresponding time-aggregated objective is defined as
\begin{equation}\label{eq:temporal_loss}
\mathcal{J}_{\mathrm{time}}
(\boldsymbol{u};\mathcal{T}_{\mathrm{traj}})
=
\frac{1}{N_{\mathrm{step}}}
\sum_{\tau=1}^{N_{\mathrm{step}}}
\mathcal{J}\bigl(
\boldsymbol{u}^{(\tau)},
\mathcal{T}^{(\tau)}
\bigr),
\end{equation}
where $\boldsymbol{u}^{(\tau)}$ denotes the equilibrium configuration under
the $\tau$-th loading condition.

It is important to note that the objective depends on the design variables
$\boldsymbol{\theta}$ only implicitly through the equilibrium configuration
$\boldsymbol{u}(\boldsymbol{\theta})$.

\subsection{Adjoint Formulation}\label{sec:formulations:adjoint}

We seek to optimize the latent design variables $\boldsymbol{\theta}
=
\{\boldsymbol{z}_e\}_{e \in \mathcal{E}}$ such that the equilibrium
configuration of the system matches the prescribed objective.
The resulting design problem is formulated as
\begin{equation}\label{eq:constrained_optim}
    \underset{\boldsymbol{\theta}}{\mathrm{argmin}}\;
    \mathcal{J}(\boldsymbol{u};\mathcal{T})
    \quad
    \text{subject to}
    \quad
    \boldsymbol{R}(\boldsymbol{u};\boldsymbol{\theta})
    =
    \boldsymbol{0},
\end{equation}
where $\boldsymbol{R}(\boldsymbol{u};\boldsymbol{\theta})$ denotes the
residual of the governing equations, and $\boldsymbol{u}$ is implicitly
defined through equilibrium. This implicit dependence of $\boldsymbol{u}$ on $\boldsymbol{\theta}$ makes direct differentiation of the objective intractable, motivating the use of an adjoint-based approach.
Following the Lagrangian viewpoint of the adjoint approach~\cite{giles2000introduction,hinze2008optimization},
to enable gradient-based optimization we introduce a Lagrange multiplier $\boldsymbol{\Lambda}$ and define the Lagrangian
\begin{equation}
    \mathscr{L}(
    \boldsymbol{u},
    \boldsymbol{\theta},
    \boldsymbol{\Lambda}
    )
    =
    \mathcal{J}(\boldsymbol{u};\mathcal{T})
    +
    \boldsymbol{\Lambda}^\top
    \boldsymbol{R}(\boldsymbol{u};\boldsymbol{\theta}).
\end{equation}

The constrained optimization problem is equivalently written as the
saddle-point problem
\begin{equation}
    \underset{\boldsymbol{u},\,\boldsymbol{\theta}}{\mathrm{argmin}}\;
    \underset{\boldsymbol{\Lambda}}{\mathrm{sup}}\;
    \mathscr{L}(\boldsymbol{u},\boldsymbol{\theta},\boldsymbol{\Lambda}).
\end{equation}
Stationarity of the Lagrangian with respect to $\boldsymbol{\Lambda}$
recovers the equilibrium constraint
\begin{equation}
    \frac{\partial\mathscr{L}}{\partial\boldsymbol{\Lambda}}
    =
    \boldsymbol{R}(\boldsymbol{u};\boldsymbol{\theta})
    =
    \boldsymbol{0},
\end{equation}
while stationarity with respect to $\boldsymbol{u}$ yields the adjoint equation
\begin{equation}
    \frac{\partial\mathscr{L}}{\partial\boldsymbol{u}}
    =
    \frac{\partial\mathcal{J}}{\partial\boldsymbol{u}}
    +
    \left(
    \frac{\partial\boldsymbol{R}}{\partial\boldsymbol{u}}
    \right)
    \boldsymbol{\Lambda}
    =
    \boldsymbol{0}.
\end{equation}

Solving for the adjoint variable gives
\begin{equation}
    \boldsymbol{\Lambda}
    =
    -\left(
    \frac{\partial\boldsymbol{R}}{\partial\boldsymbol{u}}
    \right)^{-1}
    \frac{\partial\mathcal{J}}{\partial\boldsymbol{u}}.
\end{equation}

At the stationary point with respect to $\boldsymbol{u}$ and $\boldsymbol{\Lambda}$, the gradient of the objective with respect to the design variables reduces to
\begin{equation}\label{eq:adjoint_gradient}
    \frac{\partial\mathscr{L}}{\partial\boldsymbol{\theta}}
    =
    \left(
    \frac{\partial\boldsymbol{R}}{\partial\boldsymbol{\theta}}
    \right)
    \boldsymbol{\Lambda}
    =
    -\left(
    \frac{\partial\boldsymbol{R}}{\partial\boldsymbol{\theta}}
    \right)
    \left(
    \frac{\partial\boldsymbol{R}}{\partial\boldsymbol{u}}
    \right)^{-1}
    \frac{\partial\mathcal{J}}{\partial\boldsymbol{u}}.
\end{equation}

This expression enables efficient gradient-based updates of the latent
design variables $\boldsymbol{\theta}$ without explicitly differentiating
through the solution of the nonlinear equilibrium problem.

\begin{figure}[htbp]
    \centering
    \begin{subfigure}{0.7\textwidth}
        \centering
        \includegraphics[width=\textwidth]{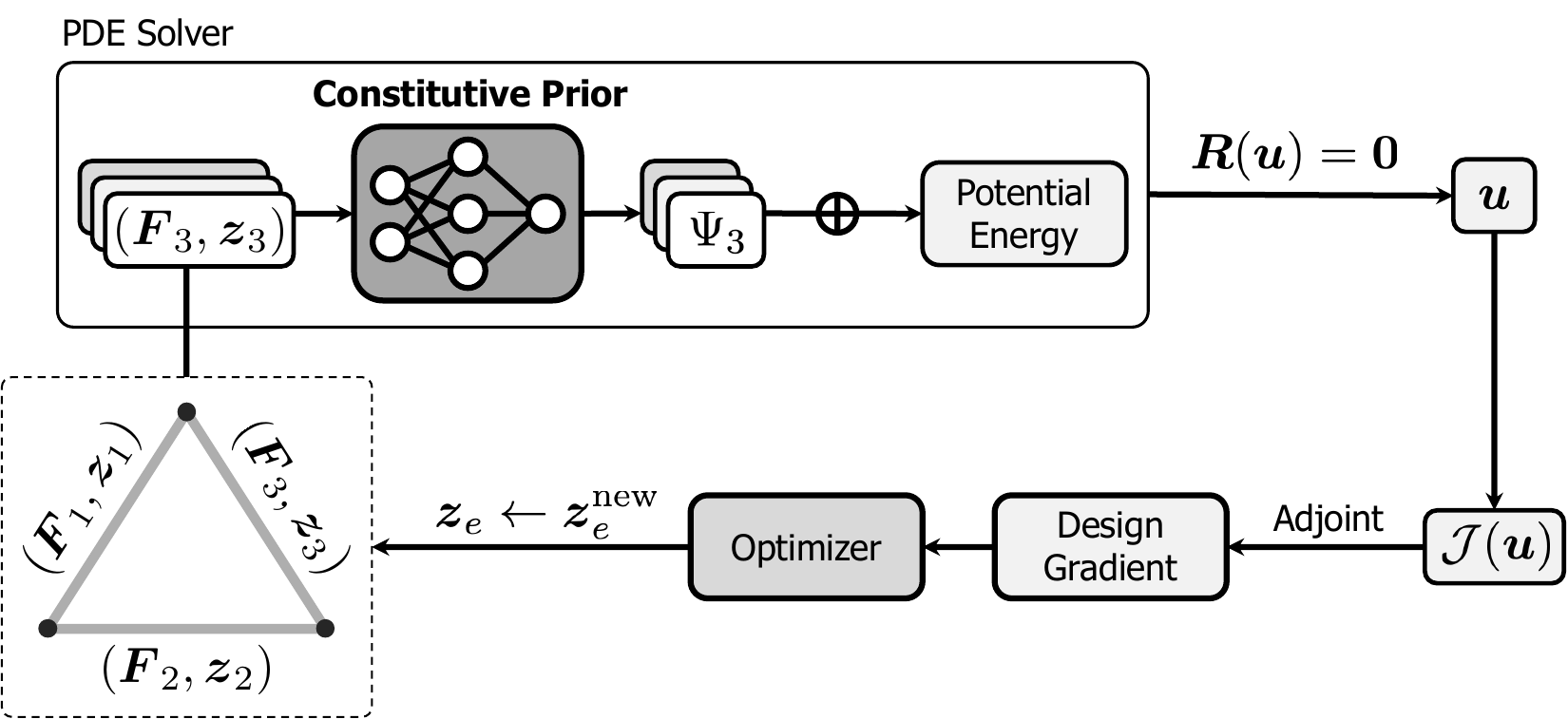}
    \end{subfigure}
    \hfill
    \caption{
        Schematic illustration of the adjoint-based optimization for heterogeneous material distributions in elastic networks using constitutive priors.
    }
    \label{fig:main_scheme}
\end{figure}

For the time-aggregated setting, the objective in the constrained optimization problem in \cref{eq:constrained_optim} is replaced by the objective in \cref{eq:temporal_loss}.
Because equilibrium is imposed at each load step along the prescribed trajectory, the corresponding Lagrangian is
\begin{equation}
    \mathscr{L}
    =
    \mathcal{J}_{\mathrm{time}}+
    \sum^{N_{\mathrm{step}}}_{\tau=1}
    (\boldsymbol{\Lambda}^{(\tau)})^\top
    \boldsymbol{R}^{(\tau)}
    (\boldsymbol{u}^{(\tau)};\boldsymbol{\theta}),
\end{equation}
where $\boldsymbol{\Lambda}^{(\tau)}$ and $\boldsymbol{R}^{(\tau)}$ denote the Lagrange multiplier and the PDE residual at the $\tau$-th load step, respectively.
Stationarity with respect to each $\boldsymbol{\Lambda}^{(\tau)}$ recovers the equilibrium constraint at each load step.
Stationarity with respect to each $\boldsymbol{u}^{(\tau)}$ yields
\begin{equation}
    \frac{1}{N_{\mathrm{step}}}
    \frac{
        \partial\mathcal{J}(
            \boldsymbol{u}^{(\tau)};
            \mathcal{T}^{(\tau)}
        )
    }
    {\partial\boldsymbol{u}^{(\tau)}}
    +
    \left(
        \frac
        {\partial\boldsymbol{R}^{(\tau)}}
        {\partial\boldsymbol{u}^{(\tau)}}
    \right)
    \boldsymbol{\Lambda}^{(\tau)}
    =
    \boldsymbol{0},
    \qquad
    \tau=1,\dots,N_{\mathrm{step}}.
\end{equation}
Solving these systems gives $\boldsymbol{\Lambda}^{(\tau)}$.
Then, substituting $\boldsymbol{\Lambda}^{(\tau)}$ into
\begin{equation}\label{eq:adjoint_gradient_temporal}
    \frac
    {\partial\mathscr{L}}
    {\partial\boldsymbol{\theta}}
    =
    \sum^{N_{\mathrm{step}}}_{\tau=1}
    \left(
        \frac
        {\partial\boldsymbol{R}^{(\tau)}}
        {\partial\boldsymbol{\theta}}
    \right)
    \boldsymbol{\Lambda}^{(\tau)}
\end{equation}
gives the gradient, which is used to update $\boldsymbol{\theta}$.

The material update loop using the constitutive prior is shown in \cref{fig:main_scheme}.
For each truss member, the deformation gradient $\boldsymbol{F}_e$ and latent material parameter $\boldsymbol{z}_e$ define the corresponding strain energy through the constitutive prior.
The resulting strain energies are assembled into the total potential energy, and the PDE solver updates the nodal displacement $\boldsymbol{u}_i$ until the equilibrium condition in \cref{eq:equilibrium} is satisfied.
The objective functional is evaluated at the equilibrium configuration.
The optimizer updates the member wise latent material parameters $\boldsymbol{z}_e$ using the adjoint gradient in \cref{eq:adjoint_gradient}.

\subsection{Homotopy-Continuation Strategy}
\label{sec:formulations:homotopy}

The inverse design problem considered in this work is nonlinear and nonconvex with respect to the design variables
$\boldsymbol{\theta}$, which can lead to poor convergence and
sensitivity to initialization.
To improve robustness, we employ a homotopy-based continuation strategy.
In contrast to classical continuation methods defined over parameter spaces~\cite{riks1979incremental,riks1984some}, the proposed homotopy operates directly on geometric configurations.

The key idea is to construct a sequence of intermediate target configurations that progressively deform the geometry from an initial configuration to the desired target, while using the solution at each stage to initialize the next.
This follows the homotopy-continuation strategy of connecting an easy problem to a difficult problem~\cite{allgower2003introduction,wayburn1987homotopy},
adopted as an initial target and a final target, respectively.

Let $\boldsymbol{u}(\boldsymbol{\theta})$ denote the equilibrium
configuration corresponding to the design variables.
Given an initial target configuration 
$\mathcal{T}^{(0)}=\{\boldsymbol{y}^{(0)}_k\}_{k=1}^{N_{\mathrm{targ}}}$ 
and a final target 
$\mathcal{T}^{(N_{\mathrm{cont}})}=\{\boldsymbol{y}^{(N_{\mathrm{cont}})}_k\}_{k=1}^{N_{\mathrm{targ}}}$, 
we define a family of intermediate targets
\begin{equation}\label{eq:homotopy_construction}
\mathcal{T}(\gamma) = 
\{
(1-\gamma)\,\boldsymbol{y}^{(0)}_k 
+ 
\gamma\,\boldsymbol{y}^{(N_{\mathrm{cont}})}_k
\}_{k=1}^{N_{\mathrm{targ}}},
\qquad \gamma \in [0,1].
\end{equation}
For each $r=0,\dots,N_{\mathrm{cont}}$ with $0=\gamma_0<\gamma_1<\dots<\gamma_{N_{\mathrm{cont}}}=1$ and $\mathcal{T}^{(r)}:=\mathcal{T}(\gamma_r)$, 
we solve the auxiliary problem
\begin{equation}
\boldsymbol{\theta}^{(r)}
\in
\underset{\boldsymbol{\theta}}{\mathrm{argmin}}
\;
\mathcal{J}(\boldsymbol{u}; \mathcal{T}^{(r)})
\quad
\text{subject to}
\quad
\boldsymbol{R}(\boldsymbol{u};\boldsymbol{\theta}) = \boldsymbol{0}.
\end{equation}
Starting from $\gamma_r=0$, where the problem is easier, the homotopy
parameter is gradually increased toward $\gamma_{N_{\mathrm{cont}}}=1$.
At each step, the solution $\boldsymbol{\theta}^{(r)}$ is used to initialize the optimization at $r+1$.
This continuation strategy smooths the optimization landscape and improves convergence.

In some practical settings, the target configuration is specified only partially as a point cloud $\mathcal{T}^{(N_{\mathrm{cont}})}$, without a corresponding initial configuration or explicit trajectory.
To construct a consistent homotopy path in the absence of a known initial target configuration,  we introduce an affine registration procedure~\cite{du2010affine,feldmar1996rigid}
that aligns the observed target point cloud with the reference configuration of the structure.
We define an affine transformation parameterized by $\boldsymbol{A} \in \mathbb{R}^{d \times d}$ and $\boldsymbol{b} \in \mathbb{R}^d$,
\begin{equation}
\mathcal{T}_{\mathrm{affine}}(\boldsymbol{A}, \boldsymbol{b})
=
\{ \boldsymbol{A} \boldsymbol{y}_k + \boldsymbol{b} \}_{k=1}^{N_{\mathrm{targ}}},
\end{equation}
where $\{\boldsymbol{y}_k\}_{k=1}^{N_{\mathrm{targ}}}$ denotes the given target configuration.

The optimal transformation $(\boldsymbol{A}^\star, \boldsymbol{b}^\star)$
is obtained by minimizing the Chamfer distance \cref{eq:chamfer_distance}
between the transformed target point cloud and the reference configuration of the structure.
We then define the initial target configuration as
\begin{equation}
\mathcal{T}^{(0)} := \mathcal{T}_{\mathrm{affine}}(\boldsymbol{A}^\star, \boldsymbol{b}^\star).
\end{equation}
This procedure enables the construction of a consistent homotopy path from partial geometric observations and provides a stable initialization for the continuation strategy.
We note that this affine registration step is applied only when the initial target configuration is not available.

Algorithm~\ref{alg:overview_inverse} summarizes the overall computational procedure combining homotopy continuation, equilibrium solves, and adjoint-based optimization updates. The convergence criterion at each continuation step is not prescribed by the framework and may instead be selected according to the requirements of the particular application.

\begin{algorithm}[htbp]
\caption{Homotopy-based inverse design with nested equilibrium solve}
\label{alg:overview_inverse}
\begin{algorithmic}[1]
\Require Initial state $\boldsymbol{u}^{(0)}$, initial design $\boldsymbol{\theta}^{(0)}$, targets $\{\mathcal{T}^{(r)}\}_{r=1}^{N_{\mathrm{cont}}}$
\For{$r=1,\dots,N_{\mathrm{cont}}$}
    \State $\boldsymbol{u}_{\mathrm{init}} \gets \boldsymbol{u}^{(r-1)}$, $\boldsymbol{\theta}^{(r,0)} \gets \boldsymbol{\theta}^{(r-1)}$
    \For{$\kappa=1,\dots,N_{\max}$}
        \State Solve equilibrium: 
        $\boldsymbol{R}(\boldsymbol{u};\boldsymbol{\theta}^{(r,\kappa-1)}) = \boldsymbol{0}$ with initial guess $\boldsymbol{u}_{\mathrm{init}}$
        \State Compute loss: $\mathcal{J} \leftarrow \mathcal{J}(\boldsymbol{u}; \mathcal{T}^{(r)})$
        \If{converged}
            \State \textbf{break}
        \EndIf
        \State Compute gradient: $\nabla_{\boldsymbol{\theta}} \mathscr{L}$
        \State Update design: $\boldsymbol{\theta}^{(r,\kappa)} \leftarrow \mathrm{Optimizer}(\boldsymbol{\theta}^{(r,\kappa-1)}, \nabla_{\boldsymbol{\theta}} \mathscr{L})$
        \State $
        \boldsymbol{u}_{\mathrm{init}} \gets \boldsymbol{u},\quad
        \boldsymbol{\theta} \gets \boldsymbol{\theta}^{(r,\kappa)}
        $
    \EndFor
    \State $\boldsymbol{u}^{(r)} \gets \boldsymbol{u}$,\quad $\boldsymbol{\theta}^{(r)} \gets \boldsymbol{\theta}$
\EndFor
\end{algorithmic}
\end{algorithm}

\subsection{Graph-based Smoothness Metric}
\label{sec:formulations:smoothness}

We introduce a graph-based metric to quantify spatial variation in
member-wise material fields. This is used to quantify the spatial smoothness of the inferred material distributions in post-analysis and to assess the extent of spatial regularity induced by the learned design.

Let $\boldsymbol{a} \in \mathbb{R}^{N_e}$ denote a scalar field defined
over the structural elements, where $N_e = |\mathcal{E}|$. For example,
$\boldsymbol{a}$ may represent thermal expansion coefficients or other
material parameters. To define adjacency, we construct an element-adjacency graph in which
each element is treated as a node, and edges connect elements that share
a common node in the structure. Let $\boldsymbol{W} \in \mathbb{R}^{N_e \times N_e}$ denote the adjacency matrix, where $W_{ee'}=1$ if elements $e$ and $e'$ are adjacent, and $W_{ee'}=0$ otherwise.
Let $\boldsymbol{D}$ be the degree matrix with $D_{ee} = \sum_{e'} W_{ee'}$,
and define the graph Laplacian $\boldsymbol{\Delta} = \boldsymbol{D} - \boldsymbol{W}$.
We quantify spatial variation using the normalized graph Dirichlet energy \cite{zhou2003learning}
\begin{equation}
\widetilde{\mathscr{E}}(\boldsymbol{a}) =
\frac{\boldsymbol{a}^\top \boldsymbol{\Delta} \boldsymbol{a}}
{\boldsymbol{a}^\top \boldsymbol{a}},
\end{equation}
which measures the relative variation of the field across adjacent elements. Smaller values indicate smoother spatial variation, while larger values indicate sharper transitions in material properties.

\section{Numerical Results}\label{sec:result}

Throughout this section, we illustrate the application of the present framework of constitutive prior to several inverse design problems for shape-morphing material systems.
In all examples, the design variable $\boldsymbol{z}$ is associated with the material response governing the system, where we assume that the material response can vary continuously with respect to parameters such as material constituents or manufacturing processes.
Accordingly, each member-wise parameter $\boldsymbol{z}_e$ evolves over the manifold induced by the constitutive prior, naturally yielding heterogeneous material designs.
Both data-driven constitutive priors and classical material models are considered.

Unless otherwise specified, the numerical examples follow common settings for geometry generation and optimization.
The lattice structures in Sections~\ref{sec:result:airfoil}, \ref{sec:result:crack}, \ref{sec:result:circle}, and \ref{sec:result:heart} are generated using \textsc{Gmsh}~\cite{geuzaine2009gmsh}.
The design variables are assigned heterogeneously on a member-wise basis, subject to admissible bounds and initialized within those bounds.
Inverse design procedures and model training processes in this work are carried out using Adam~\cite{kingma2014adam}.
This choice provides a unified gradient-based optimization scheme across the approaches considered in the present framework, while distinguishing it from conventional topology optimization settings, where MMA~\cite{svanberg1987method,bendsoe2013topology} is commonly adopted.
For the affine-registration example in Section~\ref{sec:result:crack}, the stopping tolerance is adjusted across continuation steps according to the procedure described in that section.

The remainder of this section is organized as follows.
The data-driven construction of the constitutive prior and the effect of surrogate smoothness on the optimization are shown in Section~\ref{sec:result:C-prior}.
The inverse identification of space--time trajectories, together with a comparison of performance across different surrogate landscapes, is illustrated in Section~\ref{sec:result:bar}.
Applicability of the data-driven surrogate to an engineering-motivated problem setting on airfoil geometry is shown in Section~\ref{sec:result:airfoil}.
The construction of an initial target shape via affine registration under topological mismatch is presented in Section~\ref{sec:result:crack}.
The effect of design-variable representation on thermoelastic material distributions, together with a comparison with MMA, is presented in Section~\ref{sec:result:circle}.
Inverse design for a geometrically complex target shape (nonconvex), together with an ablation of the homotopy-based continuation and the effect of the smoothness of the latent landscape, is demonstrated in Section~\ref{sec:result:heart}.
The non-unique nature of the inverse problems considered here is discussed in Section~\ref{sec:nonunique}.
Details on hyperparameter settings are provided in Appendix~\ref{apdx:optimization}.
The hardware, software, and computational costs associated with the selected examples are summarized in Appendix~\ref{apdx:compute_cost}.

\begin{remark}
Throughout this work, the term ``time'' refers to a pseudo-time or loading/continuation step used to parameterize a quasi-static deformation path.
We do not consider transient dynamics, or inertial effects.
\end{remark}

\subsection{Constitutive Prior Construction}\label{sec:result:C-prior}

The inverse design problems considered in this work may require a prescribed class of admissible material responses, such as manufacturable material responses, over which the design variables $\boldsymbol{z}$ can be optimized.
In general settings where no closed-form constitutive model is available, this admissible class may instead be specified through unordered data collected from experiments, simulations, or their combination.

In this section, we illustrate the data-driven surrogate modeling of hyperelastic material responses for constructing the constitutive prior.
The unknown latent design variable $\boldsymbol{z}$ selects a stress--strain response from the learned constitutive family. Thermodynamic consistency in the isothermal hyperelastic setting, in the sense of vanishing mechanical dissipation, follows from deriving the stress from a strain-energy potential, while \cref{eq:stress_free} enforces a zero-energy and stress-free reference configuration.
We further examine the smoothness of the resulting gradient landscape, since the behavior of gradient-based inverse design depends not only on the accuracy of the fitted material responses, but also on how the constitutive response varies with respect to the latent design variable.

\begin{figure}[htbp]
    \centering
    \begin{subfigure}{0.4\textwidth}
        \centering
        \includegraphics[width=\textwidth]{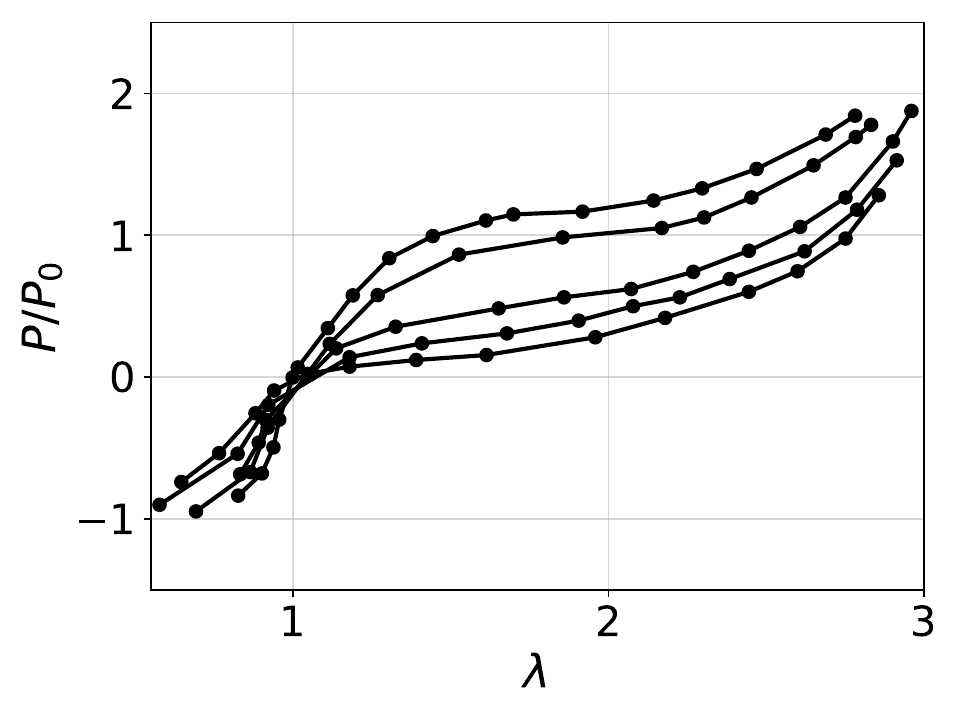}
        \caption{}\label{fig:pr1_deepSDF_a}
    \end{subfigure}
    \begin{subfigure}{0.4\textwidth}
        \centering
        \includegraphics[width=\textwidth]{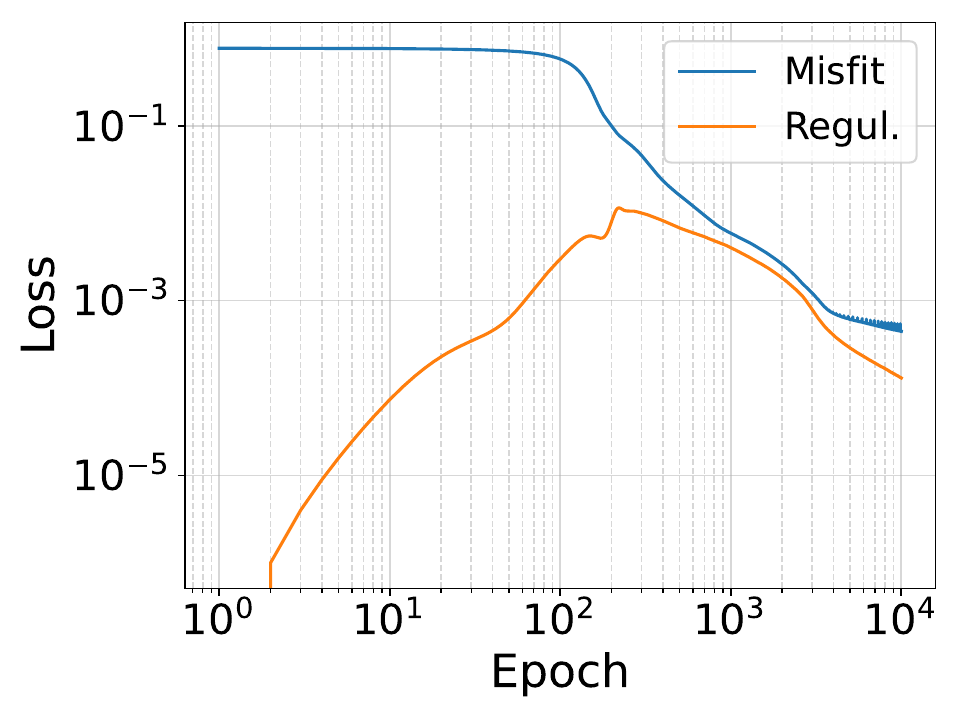}
        \caption{}\label{fig:pr1_deepSDF_b}
    \end{subfigure}
    \begin{subfigure}{0.4\textwidth}
        \centering
        \includegraphics[width=\textwidth]{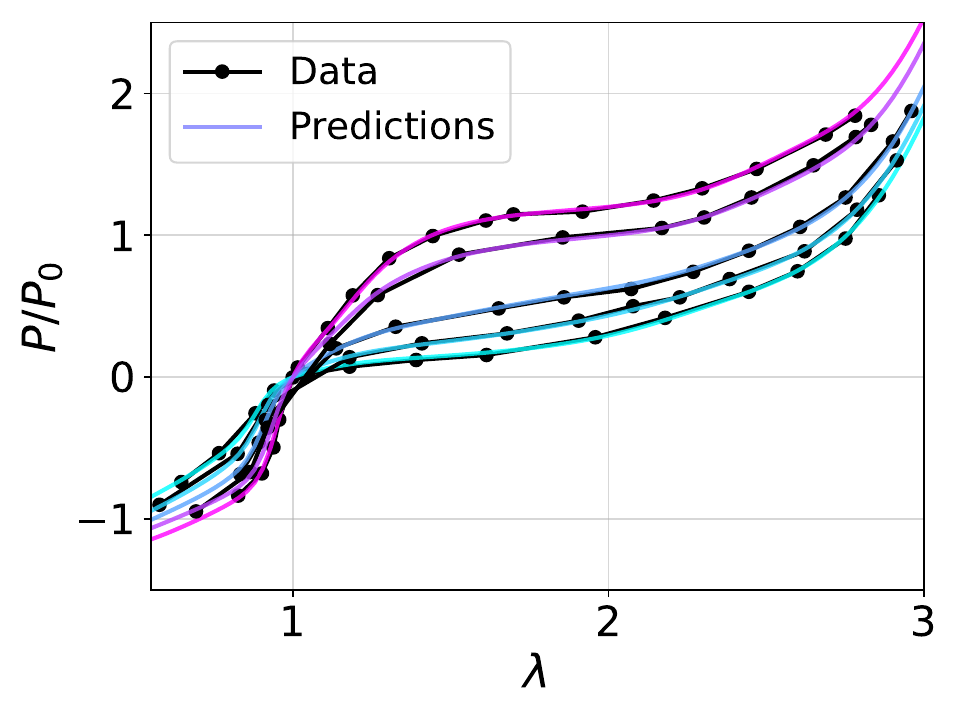}
        \caption{}\label{fig:pr1_deepSDF_c}
    \end{subfigure}
    \begin{subfigure}{0.4\textwidth}
        \centering
        \includegraphics[width=\textwidth]{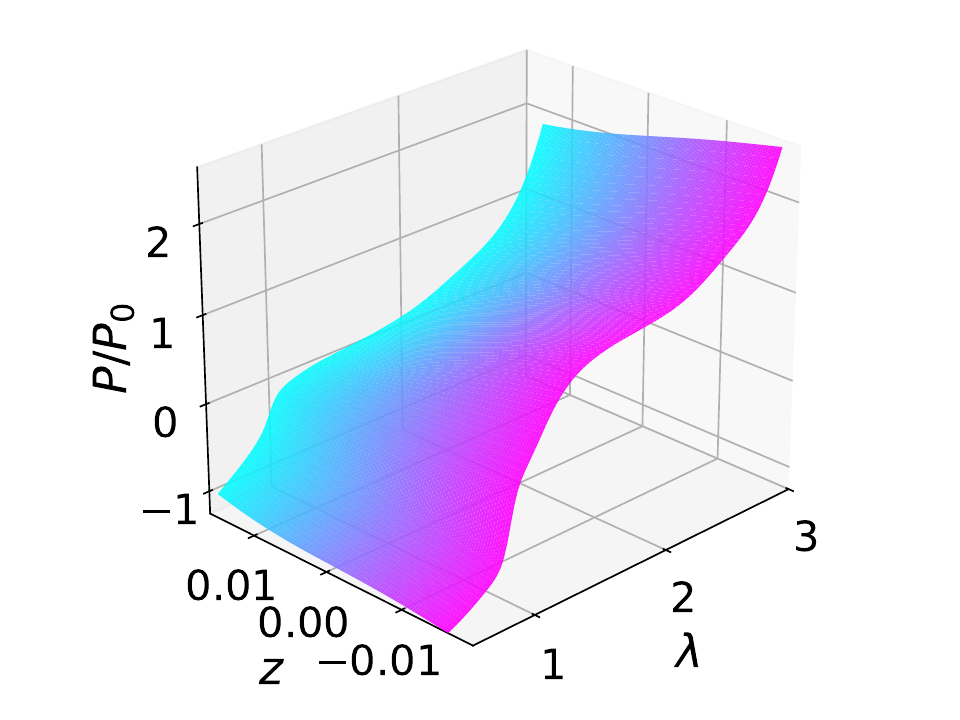}
        \caption{}\label{fig:pr1_deepSDF_d}
    \end{subfigure}
    \hfill
    \caption{
        Training of the material surrogate with trainable material parameter $z$.
        (\subref{fig:pr1_deepSDF_a}) Given data.
        (\subref{fig:pr1_deepSDF_b}) Loss histories of the data-misfit and regularization terms during training.
        Predictions of the trained model
        (\subref{fig:pr1_deepSDF_c}) at the optimized latent values $z$ for each realization, and
        (\subref{fig:pr1_deepSDF_d}) under continuous variation of $z$.
        The first Piola--Kirchhoff stress $P$ is normalized by an arbitrary reference stress $P_0$.
    }
    \label{fig:pr1_deepSDF}
\end{figure}

To represent a general setting, we consider an arbitrary class of hyperelastic materials characterized by five \textit{hand-drawn} datasets, as illustrated in \cref{fig:pr1_deepSDF_a}. The datasets are intentionally constructed to exhibit strong nonlinear behavior. 
The first Piola--Kirchhoff stress $P$ is normalized by the reference value $P_0 = 1\,\mathrm{MPa}$, resulting in a nondimensional constitutive response that can later be rescaled according to the application of interest. The datasets are not assumed to be thermodynamically consistent \textit{a priori}, reflecting potential noise, inconsistencies, or imperfections in experimental observations. 
Furthermore, both the sampling locations and the number of observations vary across datasets, consistent with the unordered point cloud representation employed in the surrogate construction framework.
The effects of noise in the training data on model training and subsequent inverse-design optimization are investigated in Appendix~\ref{apdx:noise}.

The material surrogate $\mathcal{M}_{\boldsymbol{\phi}}$ is trained using the loss in \cref{eq:surrogate_loss}, with the corresponding loss histories shown in \cref{fig:pr1_deepSDF_b}.
As a result, each realization is assigned a specific value of the latent variable $z\in\mathbb{R}$ as shown in \cref{fig:pr1_deepSDF_c}, which clearly reflects the overall monotonically increasing material response and the stress-free behavior at the unstretched state $\lambda=1$.
The trained surrogate further yields a continuous representation of the stress--stretch response parameterized by $z$, as shown in \cref{fig:pr1_deepSDF_d}. Overall, this surrogate can be viewed as a continuous material template grounded in the given data and defines a continuous family of candidate constitutive responses over the selected latent interval. Determining whether an inferred response can be realized through a particular composition, microstructure, or manufacturing process is a separate downstream problem.
The optimized latent values define the admissible interval $\mathcal{Z} := [z_{\min}, z_{\max}]$, 
which is used as the design domain for subsequent inverse design examples involving this surrogate.
This material surrogate is adopted as a constitutive prior in selected examples in Section~\ref{sec:result}, while classical hyperelastic models are also considered where appropriate.

Meanwhile, the smooth landscape shown in \cref{fig:pr1_deepSDF_d} results from assigning the latent variable $z$ to each material response adaptively during training, and this smoothness affects the subsequent optimization performance.
Revisiting \cref{eq:adjoint_gradient}, the sensitivity of the force resultant landscape with respect to the design parameters is reflected in $\partial\boldsymbol{R}/\partial\boldsymbol{\theta}$.
A non-smooth surrogate can therefore induce an irregular objective landscape and degrade the behavior of gradient-based optimization~\cite{nocedal2006numerical}.

To further investigate the effect of this smoothness, we train another surrogate with fixed latent variables $z$ as a counterpart that exhibits a non-smooth landscape.
The material-specific parameters are independently sampled from the standard normal distribution $\mathcal{N}(0,1)$ and assigned to each constitutive response curve. The surrogate model is then trained using the loss formulation defined in \cref{eq:surrogate_loss}, excluding both the regularization term and the non-trainable latent variables $\boldsymbol{z}$.

\begin{figure}[htbp]
    \centering
    \begin{subfigure}{0.4\textwidth}
        \centering
        \includegraphics[width=\textwidth]{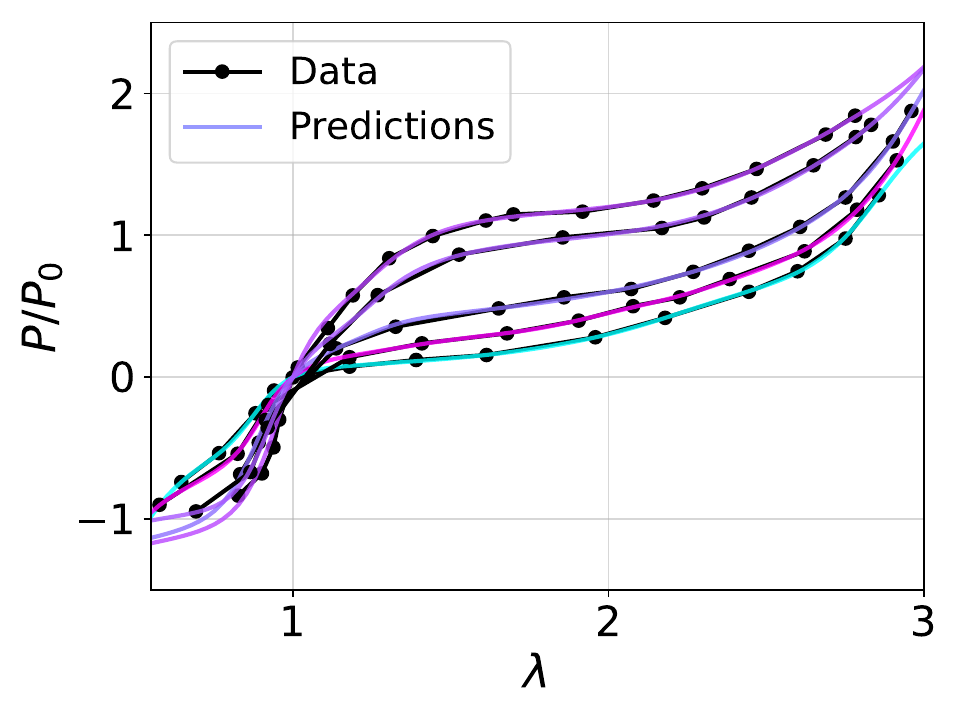}
        \caption{}
        \label{fig:pr1_bar_fixed_latent_a}
    \end{subfigure}
    \begin{subfigure}{0.4\textwidth}
        \centering
        \includegraphics[width=\textwidth]{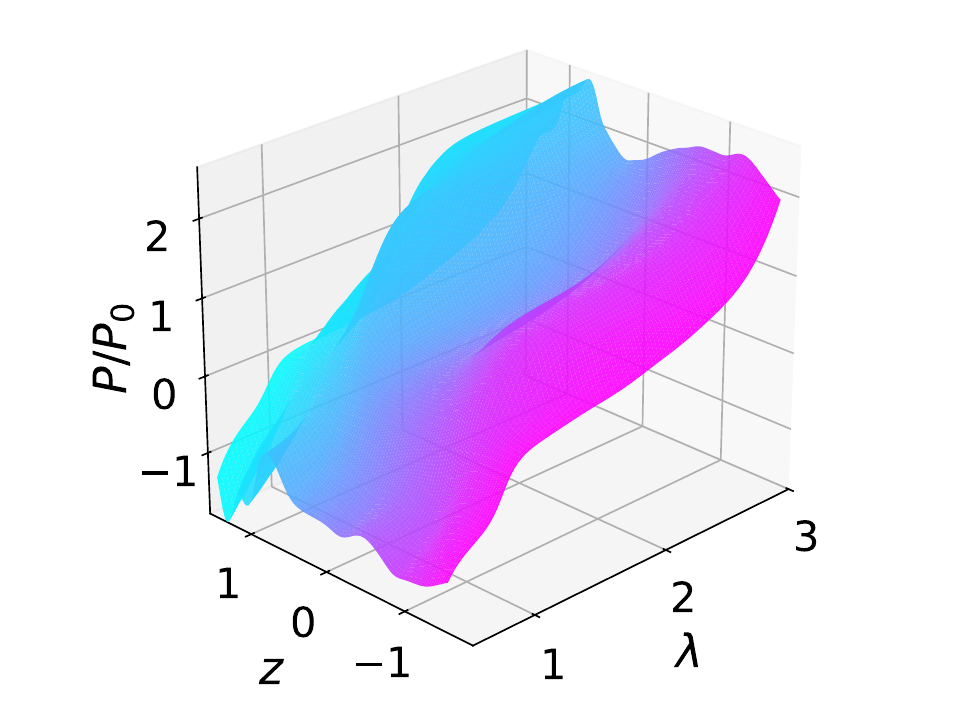}
        \caption{}
        \label{fig:pr1_bar_fixed_latent_b}
    \end{subfigure}
    \hfill
    \caption{
        Results of training the material surrogate with ``fixed'' (non trainable) latent variables $z$.
        Predictions of the trained model (\subref{fig:pr1_bar_fixed_latent_a}) at the given latent values $z$ for each realization, and (\subref{fig:pr1_bar_fixed_latent_b}) under continuous variation of $z$.
    }
    \label{fig:pr1_bar_fixed_latent}
\end{figure}

The results are shown in \cref{fig:pr1_bar_fixed_latent}.
As shown in \cref{fig:pr1_bar_fixed_latent_a}, the trained surrogate reproduces the data with quality comparable to that in \cref{fig:pr1_deepSDF_c}.
The notable difference appears in \cref{fig:pr1_bar_fixed_latent_b}: the surrogate output along $z$ is markedly more oscillatory than the corresponding landscape in \cref{fig:pr1_deepSDF_d}.
We note that this non-smooth surrogate remains continuous and differentiable with respect to $z$.
However, compared with the adaptively assigned latent representation, it yields less smooth interpolation between material responses and is therefore expected to provide less favorable gradients for subsequent optimization procedures.
This non-smooth surrogate is adopted as a constitutive prior in selected examples below to investigate the effect of non-smoothness on the optimization behavior of inverse design procedures.
For conciseness in the following examples, we refer to the surrogate with trainable latent variables in \cref{fig:pr1_deepSDF_d} as constitutive prior A, and the surrogate with fixed latent variables in \cref{fig:pr1_bar_fixed_latent_b} as constitutive prior B.

\subsection{One-dimensional Bar}\label{sec:result:bar}

In this section, we present a pedagogical example based on a one-dimensional bar exhibiting hyperelastic behavior.
This example illustrates the capacity of the present framework to recover pseudo-temporal trajectories.
Furthermore, we investigate the effect of the smoothness of the constitutive prior on the optimization behavior.

\begin{figure}[htbp]
    \centering
    \begin{subfigure}[b]{0.23\textwidth}
        \centering
        \includegraphics[width=\textwidth]{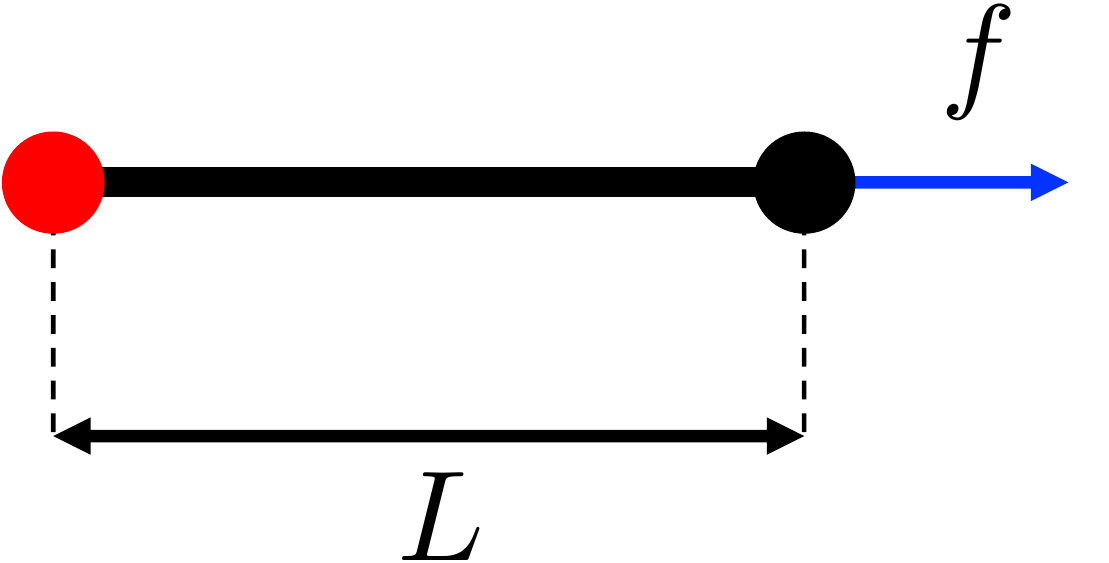}
    \end{subfigure}
    \caption{
        A one-dimensional bar with length $L=1\,\mathrm{m}$.
        The left node is fixed (red dot), and a tensile nodal force $f=2\,\mathrm{N}$ (blue arrow) is applied at the right node, where the cross sectional area of the bar is set to $A=1\,\mathrm{mm}^2$.
    }
    \label{fig:pr1_bar_setting}
\end{figure}

The bar shown in \cref{fig:pr1_bar_setting}~is elongated over multiple load steps, and the nodal displacement is recorded at each step under step-wise incremental nodal loading.
The target trajectory is generated using constitutive prior A with a target latent parameter $z_{\mathrm{targ}}$ sampled from the admissible range $\mathcal{Z}$, as shown in \cref{fig:pr1_trainsient_bar_a}.
The parameter $z$ is subsequently identified via adjoint-based design optimization by minimizing the time-aggregated objective \cref{eq:temporal_loss}, where the gradient in \cref{eq:adjoint_gradient_temporal} is used to update $z$, initialized from the same sampling strategy.
After each update, $z$ is clamped to the admissible range $\mathcal{Z}$.

\begin{figure}[htbp]
    \centering
    \begin{subfigure}{0.4\textwidth}
        \centering
        \includegraphics[width=\textwidth]{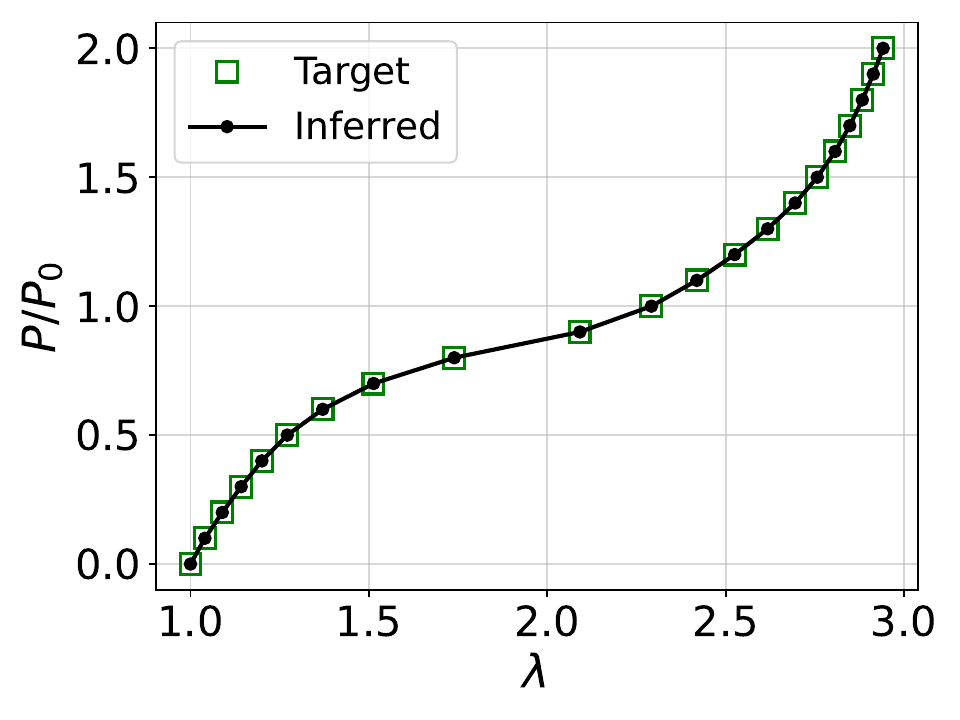}
        \caption{}
        \label{fig:pr1_trainsient_bar_a}
    \end{subfigure}
    \begin{subfigure}{0.4\textwidth}
        \centering
        \includegraphics[width=\textwidth]{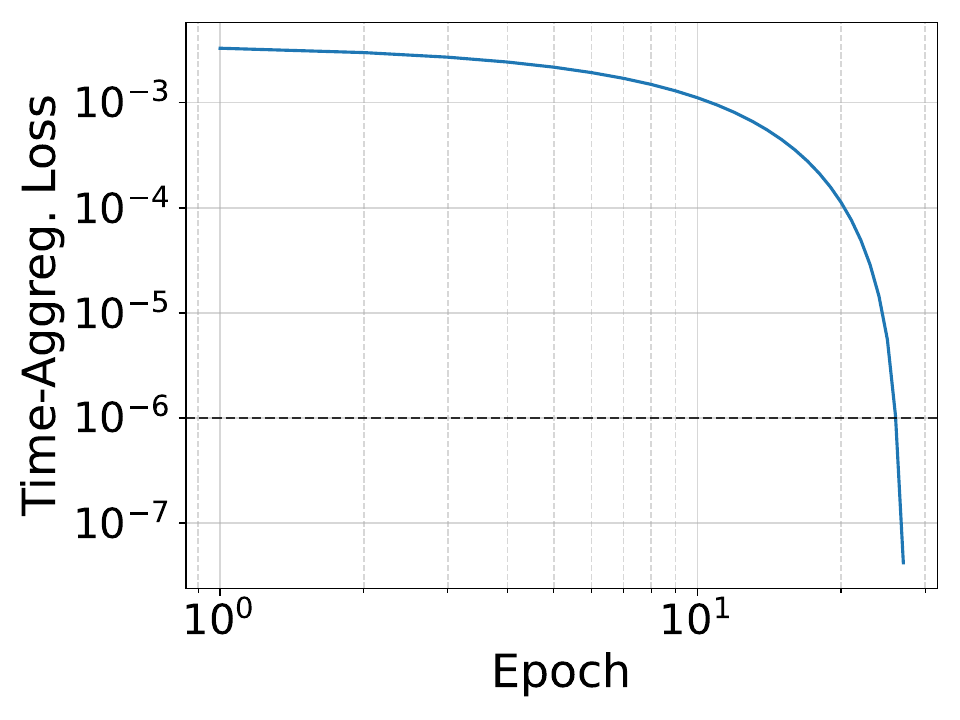}
        \caption{}
        \label{fig:pr1_trainsient_bar_b}
    \end{subfigure}
    \hfill
    \caption{
    Results of the inverse problem for the uniaxial extension of a bar, converged to a loss threshold for the displacement trajectory.
    (\subref{fig:pr1_trainsient_bar_a}) $P$--$\lambda$ trajectories of the target bar and the inferred bar.
    (\subref{fig:pr1_trainsient_bar_b}) The loss history during the inverse identification.
    The threshold is shown as a black dashed line.
    }
    \label{fig:pr1_trainsient_bar}
\end{figure}

The results are shown in \cref{fig:pr1_trainsient_bar}.
The inferred $P$--$\lambda$ curve is shown in \cref{fig:pr1_trainsient_bar_a}, demonstrating close agreement with the prescribed space--time trajectory.
As shown in \cref{fig:pr1_trainsient_bar_b}, the loss decreases smoothly and reaches the stopping criterion (final loss: $4.191\times10^{-8}$), yielding an inferred latent parameter $z=-4.43\times10^{-3}$, compared with the target value $z_{\mathrm{targ}}=-4.44\times10^{-3}$.
Overall, these results indicate that the proposed framework can match the full space--time trajectory of the structural response while accurately identifying the underlying material parameter.

\subsubsection{Effect of the Constitutive Prior Landscape}
\label{sec:result:bar:compare}

The smoothness of the constitutive prior with respect to $z$ directly affects the design gradient used in the inverse design procedure.
Thus, even when two surrogates reproduce the observed material responses with comparable accuracy, they may induce different optimization behavior.
To examine this effect, we compare constitutive prior A with constitutive prior B.

We use the same problem setting as in Section~\ref{sec:result:bar} for both surrogates.
For each surrogate, a target trajectory is generated by sampling a target latent value $z_{\mathrm{targ}}$ uniformly at random from the corresponding admissible range and solving the forward problem under the same loading sequence.
The inverse identification is then performed under three independent initial guesses, each drawn uniformly at random from the admissible range.
Because the two surrogates induce inverse optimization landscapes with different scalings along the latent variable $z$, a single learning rate is not directly comparable between them. Consequently, their performance cannot be fairly assessed using the same optimization hyperparameters. We hence sweep the learning rate over nine representative values $\{1\times10^{-4}, 3\times10^{-4}, 1\times10^{-3}, 3\times10^{-3}, \dots, 1\times10^{0}\}$ and report, for each surrogate, the three learning rates that yield the smallest average number of iterations to reach the stopping criterion across the trials.

\begin{figure}[htbp]
    \centering
    \begin{subfigure}{0.5\textwidth}
        \centering
        \includegraphics[width=\textwidth]{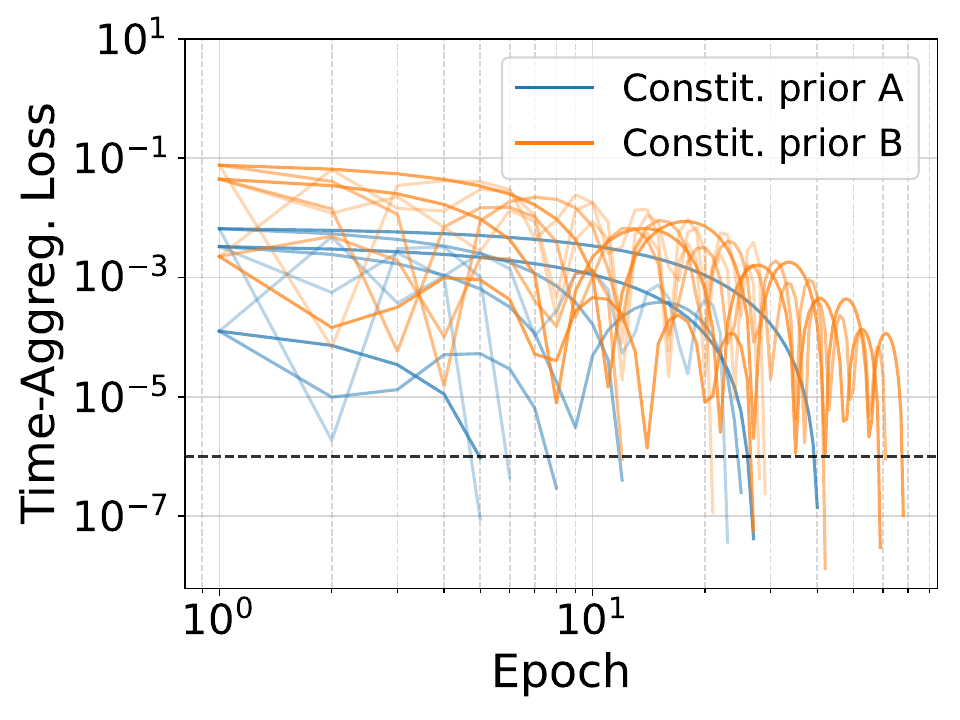}
    \end{subfigure}
    \hfill
    \caption{
    The comparison between the two surrogates in the inverse identification task.
    The prescribed threshold is shown as a black dashed line.
    For constitutive prior A, the learning rates $1\times10^{-4}$, $3\times10^{-4}$, and $3\times10^{-3}$ are shown.
    For constitutive prior B, the learning rates $3\times10^{-2}$, $1\times10^{-1}$, and $3\times10^{-1}$ are shown.}
    \label{fig:pr1_bar_compare}
\end{figure}

The loss histories from the multiple trials are shown in \cref{fig:pr1_bar_compare}.
For the cases shown, constitutive prior A reaches the stopping criterion in $16.8$ iterations on average, whereas constitutive prior B requires $38.6$ iterations on average.
This supports that, although both surrogates reproduce the observed material responses with comparable accuracy, the more oscillatory latent landscape leads to slower convergence in the inverse design procedure.

\subsection{Airfoil under Wind Gusts}\label{sec:result:airfoil}

In this section, we examine the use of the data-driven constitutive prior in an engineering-motivated problem setting.
Specifically, we consider an airfoil-shaped truss system under a prescribed wind gust loading profile and seek a heterogeneous material assignment within the admissible range that reproduces a target deformed configuration induced by a different loading profile.

The setup follows \cite{chen2026physical}, where a NACA 2412 airfoil is subjected to a wind-gust loading profile and is required to match a prescribed target shape under the induced deformation.
The airfoil coordinates are obtained from \cite{gorissen_naca_github}.
This problem setting is motivated by the control of airfoil deformation under aerodynamic loading.
Unlike \cite{chen2026physical}, where two linear materials are considered, we allow the material response to be selected from the templates introduced in \cref{fig:pr1_deepSDF_a} through the model in \cref{fig:pr1_deepSDF_d}.
Accordingly, we adopt the same material surrogate, with the reference stress value set to $P_0 = 10\,\mathrm{MPa}$.

\begin{figure}[htbp]
    \centering
    \begin{subfigure}{0.4\textwidth}
        \centering
        \includegraphics[width=\textwidth]{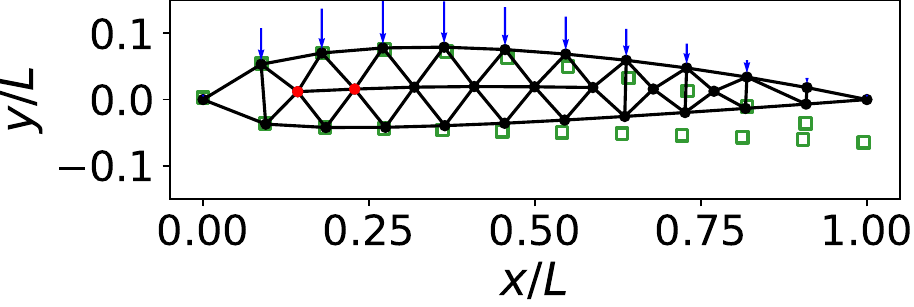}
        \caption{}
        \label{fig:pr2_airfoil_a}
    \end{subfigure}
    \hspace{0.04\textwidth}
    \begin{subfigure}{0.4\textwidth}
        \centering
        \includegraphics[width=\textwidth]{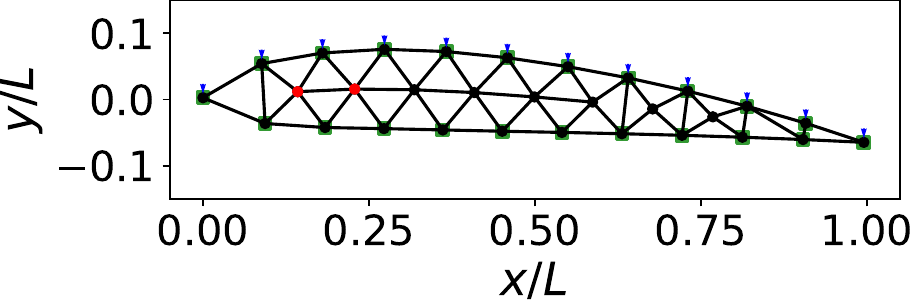}
        \caption{}
        \label{fig:pr2_airfoil_b}
    \end{subfigure}
    \hfill
    \caption{
        Problem setting of the airfoil-shaped truss system with $L=1\,\mathrm{m}$.
        The cross-sectional area of each member is prescribed as $A_e = 100\,\mathrm{mm}^2$.
        Fixed supports are shown as red markers, the target shape as green open squares, and the applied nodal forces as blue arrows.
        The original NACA 2412 airfoil under the prescribed load profile is shown in (\subref{fig:pr2_airfoil_a}), while the target shape generated under uniform nodal forces $f=1\,\mathrm{N}$ is shown in (\subref{fig:pr2_airfoil_b}).
    }
    \label{fig:pr2_airfoil}
\end{figure}

The problem setting is illustrated in \cref{fig:pr2_airfoil}.
As shown in \cref{fig:pr2_airfoil_a}, the truss system is subjected to a downward load profile prescribed as
\begin{align}
    f(\tilde{x}) =& -75\tilde{x}\left(1-\tilde{x}\right)^{2.5},\\
    \tilde{x} :=& \;x/L,
\end{align}
which is applied as downward nodal forces on the top-surface nodes according to their normalized $\tilde{x}$-coordinates.
The objective is not simply to stiffen the structure everywhere, since the target is not the undeformed configuration, but another deformed configuration associated with a smaller uniform loading level, as shown in \cref{fig:pr2_airfoil_b}.
The target configuration is generated by assigning the same value of $z_0$ to all members; here, $z_0$ is chosen as the midpoint of the admissible range $\mathcal{Z}$.
Intermediate target shapes are then constructed between the reference shape and the target shape for homotopy-based continuation using \cref{eq:homotopy_construction}, after which the inverse design procedure in \cref{alg:overview_inverse} is applied.

\begin{figure}[htbp]
    \centering
    \begin{subfigure}{0.55\textwidth}
        \centering
        \includegraphics[width=\textwidth]{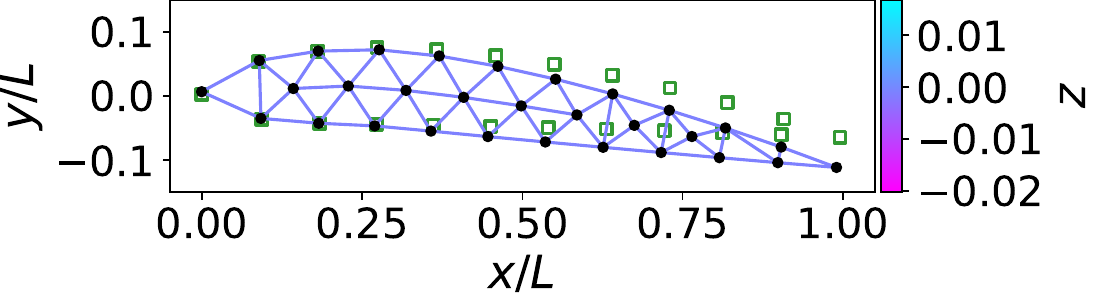}
        \caption{}
        \label{fig:pr2_airfoil_result_a}
    \end{subfigure}
    \hspace{0.04\textwidth}
    \begin{subfigure}{0.55\textwidth}
        \centering
        \includegraphics[width=\textwidth]{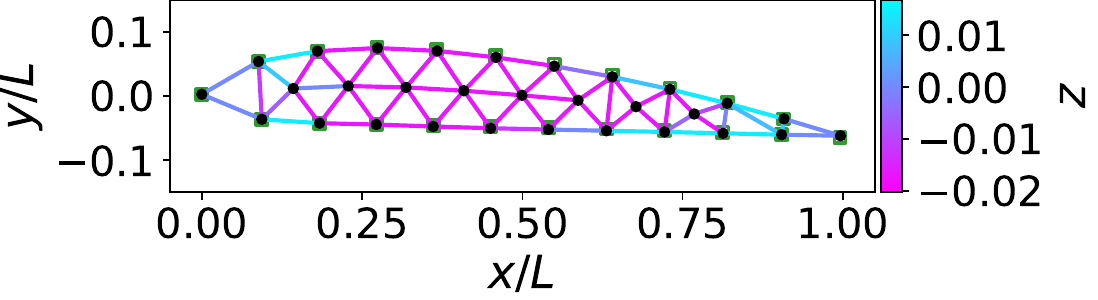}
        \caption{}
        \label{fig:pr2_airfoil_result_b}
    \end{subfigure}
    \hfill
    \caption{
        Inverse design results for the airfoil-shaped truss system under the arbitrary wind-gust loading profile.
        The target shape is shown as green open squares, and the values of latent parameter $z$ are indicated by color.
        (\subref{fig:pr2_airfoil_result_a}) Deformed configuration without optimization under the load shown in \cref{fig:pr2_airfoil_a}.
        (\subref{fig:pr2_airfoil_result_b}) Inferred design after optimization.
    }
    \label{fig:pr2_airfoil_result}
\end{figure}

The results are shown in \cref{fig:pr2_airfoil_result}.
The latent variables are initialized uniformly across the truss members.
As seen in \cref{fig:pr2_airfoil_result_a}, this initial design produces a substantially larger deflection than the target shape.
The obtained distribution of $z$ yields a deformed shape that closely matches the prescribed target (final Chamfer distance: $2.937\times10^{-6}$), as shown in \cref{fig:pr2_airfoil_result_b}.

\subsection{Edge Crack Lattice Identification}\label{sec:result:crack}

In the third example, we consider an inverse design problem in which a square truss system is driven toward a target configuration containing a central crack-like notch. This setting introduces an additional challenge beyond constitutive identification: the target geometry is topologically mismatched with the reference configuration, and therefore a suitable initial target configuration for the homotopy continuation is not available \textit{a priori}. 
To isolate the effect of geometric mismatch from constitutive modeling complexity, a classical material model is adopted, while the initial target configuration is constructed through the proposed affine registration procedure.

\begin{figure}[htbp]
    \centering
    \begin{subfigure}{0.25\textwidth}
        \centering
        \includegraphics[width=\textwidth]{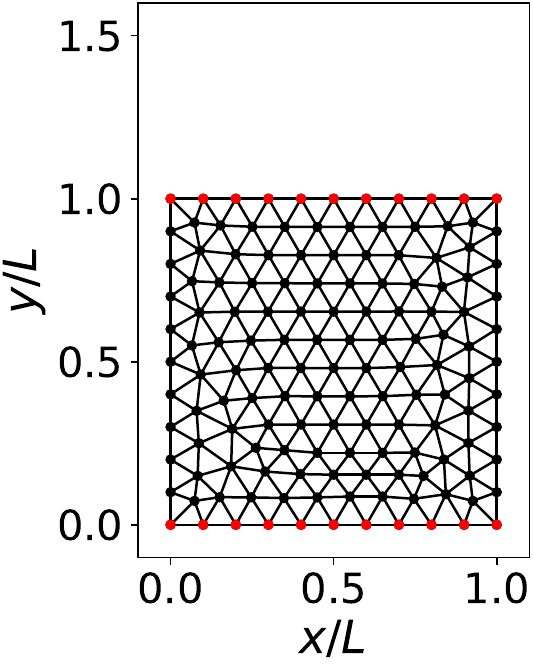}
        \caption{}
        \label{fig:pr3_crack_setting_a}
    \end{subfigure}
    \hspace{0.1\textwidth}
    \begin{subfigure}{0.25\textwidth}
        \centering
        \includegraphics[width=\textwidth]{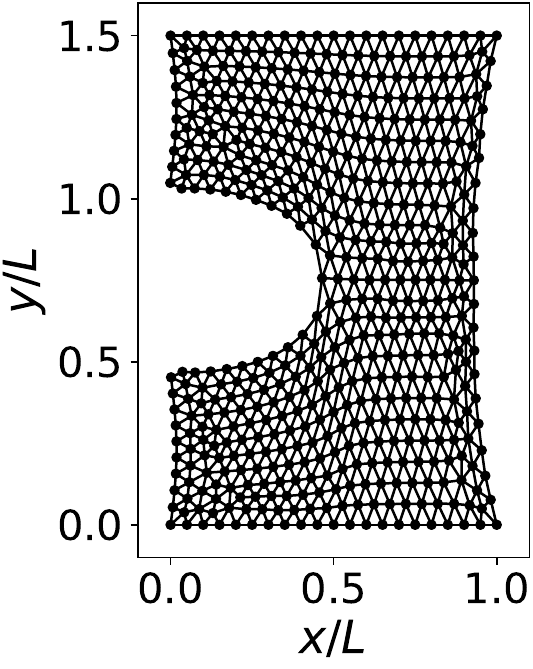}
        \caption{}
        \label{fig:pr3_crack_setting_b}
    \end{subfigure}
    \hfill
    \caption{
        Problem setting of the edge crack lattice example.
        (\subref{fig:pr3_crack_setting_a}) Reference square truss layout, shown by black points and solid lines, with nodes subject to prescribed displacements shown in red.
        (\subref{fig:pr3_crack_setting_b}) Target shape generated using a mesh with different topology.
        The cross-sectional areas of the truss members are prescribed so that every member has the same volume, i.e., $A_eL_e=1{,}000\mathrm{mm}^3$.
    }
    \label{fig:pr3_crack_setting}
\end{figure}

The problem setting and associated geometric configurations are shown in \cref{fig:pr3_crack_setting}.
The reference structure is a relatively coarse truss network that does not explicitly contain a crack, as shown in \cref{fig:pr3_crack_setting_a}.
The structure is then deformed by prescribing vertical displacements on the top and bottom boundaries so that its height reaches $150\%$ of the reference height.
We use the Neo-Hookean material model 
\begin{equation}\label{eq:neohookean}
    \Psi_{\mathrm{NH}}
    =
    \frac{E}{4}
    \left(
    \lambda^2 - 1 - 2\ln(\lambda)
    \right),
\end{equation}
where $E$ denotes the initial tangent modulus.

The target shape in \cref{fig:pr3_crack_setting_b} is generated under the same boundary condition and the material model \cref{eq:neohookean} with initial tangent modulus $E=1\,\mathrm{MPa}$.
The reference configuration of the target truss is generated from the same unit square but with finer mesh resolution, where a crack-like notch is introduced by removing nodes and connections along half of the midline.
As a result, the target shape differs from the system in \cref{fig:pr3_crack_setting_a} in both resolution and topology; the reference system has four outer sides, whereas the target shape has seven outer sides due to the notch.

In the inverse problem, we seek to infer the member-wise (heterogeneous) initial tangent modulus $\{E_e\}_{e\in\mathcal{E}}$.
Only the final nodal positions are assumed to be observed, and the modulus field is identified so that the resulting deformed configuration matches the target notched shape.
A central difficulty is that the final target shape alone does not directly provide suitable intermediate target shapes for the continuation procedure.
To address this issue, we first fit an affine transformation of the target shape in \cref{fig:pr3_crack_setting_b} to the reference shape in \cref{fig:pr3_crack_setting_a} by minimizing the Chamfer distance through gradient-based optimization.

\begin{figure}[htbp]
    \centering
    \begin{subfigure}{0.25\textwidth}
        \centering
        \includegraphics[width=\textwidth]{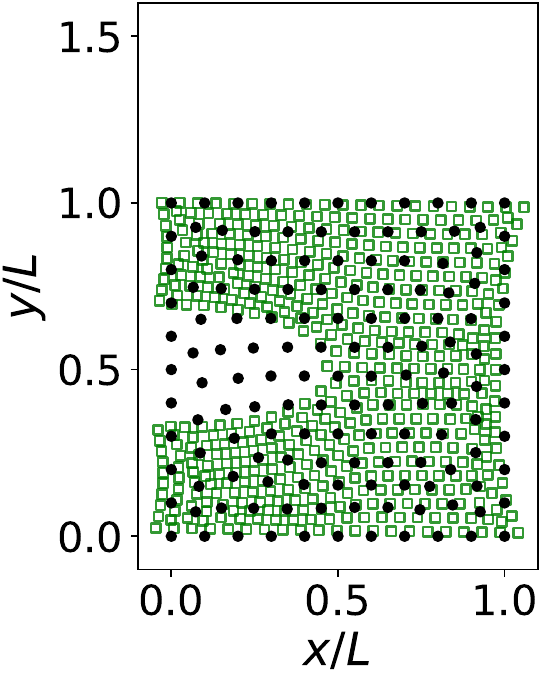}
        \caption{}
        \label{fig:pr3_crack_result_a}
    \end{subfigure}
    \hspace{0.1\textwidth}
    \begin{subfigure}{0.25\textwidth}
        \centering
        \includegraphics[width=\textwidth]{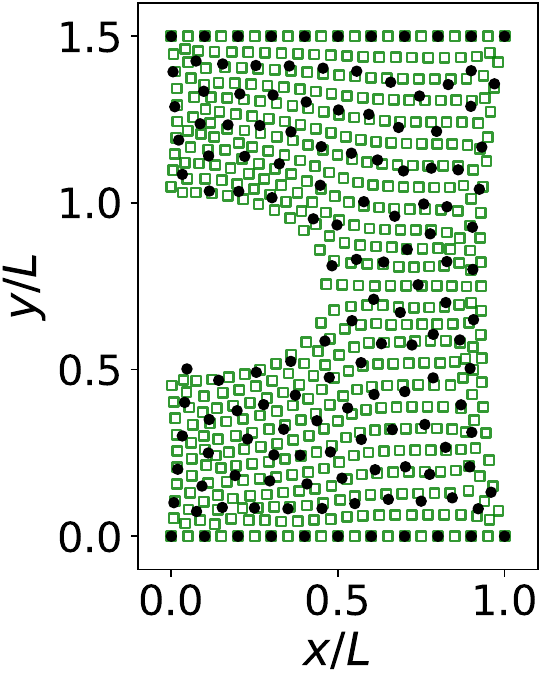}
        \caption{}
        \label{fig:pr3_crack_result_b}
    \end{subfigure}
    \hfill
    \caption{
    Inverse identification of heterogeneous stiffness of the truss system.
    (\subref{fig:pr3_crack_result_a}) Initial locations of nodes as black scatters, with an obtained initial target shape as green open squares.
    (\subref{fig:pr3_crack_result_b}) Inferred truss configuration as black scatters, aligned with the target point set in green open squares.
    }
    \label{fig:pr3_crack_result}
\end{figure}

The resulting affine-registered target point set, together with the initial hinge positions, is shown in \cref{fig:pr3_crack_result_a}.
We then construct homotopy-based guidance point clouds over multiple load steps between the affine-registered target and the final target shape.
The Chamfer-distance tolerance is prescribed separately across load steps, initialized by the initial Chamfer distance of $1.382\times10^{-3}$ and varied linearly toward the final target tolerance.
As a result, the final nodal point set closely matches the target notched shape after stretching (final Chamfer distance: $9.975\times10^{-4}$), as shown in \cref{fig:pr3_crack_result_b}.

\subsection{Thermal Shape Morphing into a Circle}\label{sec:result:circle}

In this example, we consider another form of admissibility from the perspective of manufacturing constraints on material distributions.
While the objective is to find thermoelastic material distributions that morph the reference square truss into a circular target shape, we use this setting to compare direct member-wise updates with a continuous representation of the design field and assess the induced structural bias in the optimized distribution.
We further compare the present optimization framework with MMA~\cite{svanberg1987method} as a representative choice in related structural design problems.

Here, we introduce the thermoelastic setting used to demonstrate the effect of design-variable representation on the inferred material distribution in inverse design of shape morphing.
The design variable in this example is the distribution of thermal expansion coefficients $\alpha$ over the truss members.
To account for thermal effects, we decompose the total stretch in \cref{eq:stretch} into thermal and elastic contributions, $\lambda_{e}^{\mathrm{th}}$ and $\lambda_{e}^{\mathrm{el}}$, as
\begin{equation}
\label{eq:thermoelastic_decomposition}
\lambda_e
=
\lambda_e^{\mathrm{el}}
\lambda_e^{\mathrm{th}}
, \quad 
\lambda_e^{\mathrm{th}} = 1 + \alpha_e \Delta \vartheta_e,
\end{equation}
where $\Delta \vartheta_e$ denotes the temperature change \cite{lu1975decomposition}.
Accordingly, the Neo-Hookean model in \cref{eq:neohookean} is evaluated using only the elastic stretch $\lambda_{e}^{\mathrm{el}}$, with the initial tangent modulus set to $E=1\,\mathrm{MPa}$.

We set $\Delta \vartheta = 150\,\mathrm{K}$ for each truss member and prescribe the admissible range of the thermal expansion coefficient as
\begin{equation}
    \alpha \in
    (
    4\times10^{-4}\,\mathrm{K^{-1}},
    \,4\times10^{-3}\,\mathrm{K^{-1}}
    ).
\end{equation}
We examine two parameterizations: a continuous representation of $\alpha$ based on sinusoidal representation networks (SIRENs)~\cite{sitzmann2020implicit}, and a member-wise parameterization of thermal expansion coefficients $\alpha_e$.
In the SIREN-based approach, spatial correlation is incorporated by assigning the value of $\alpha$ for each truss member from its position in the reference configuration.

\begin{figure}[htbp]
    \centering
    \begin{subfigure}{0.4\textwidth}
        \centering
        \includegraphics[width=\textwidth]{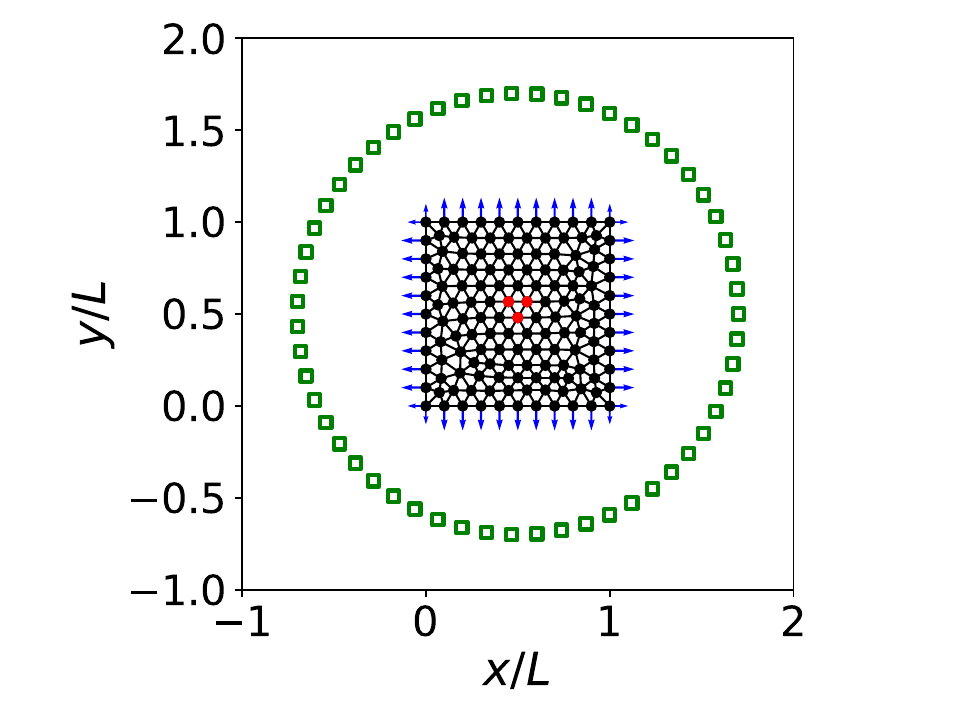}
    \end{subfigure}
    \hfill
    \caption{
        A reference configuration of the truss system with $L=1\,\text{m}$.
        The cross sectional areas of each member are assumed to be $A_e=1{,}000\,\mathrm{mm^3}/L_e$.
        The fixed supports applied as red scatters, and the applied forces on nodes as blue arrows.
        A uniform traction of $90\,\text{N/m}$ is applied to the boundary elements in the outward normal directions.
        The final target shape given as $55$ points on a circle with radius $1.2L$ shown by green hollow squares. The truss's boundary has $40$ nodes. 
    }
    \label{fig:pr4_mesh_circle}
\end{figure}

The reference truss, boundary conditions, and final target point cloud are shown in \cref{fig:pr4_mesh_circle}.
The initial target point cloud is constructed by sampling points along the square boundary with the same number of points as the final target point cloud.
The intermediate target shapes are then constructed according to \cref{eq:homotopy_construction}.
We note that as the discrepancy is measured by the Chamfer distance in \cref{eq:chamfer_distance}, explicit nodal correspondence between the truss system boundary and the target shape is not required.

\begin{figure}[htbp]
    \centering
    \begin{subfigure}{0.4\textwidth}
        \centering
        \includegraphics[width=\textwidth]{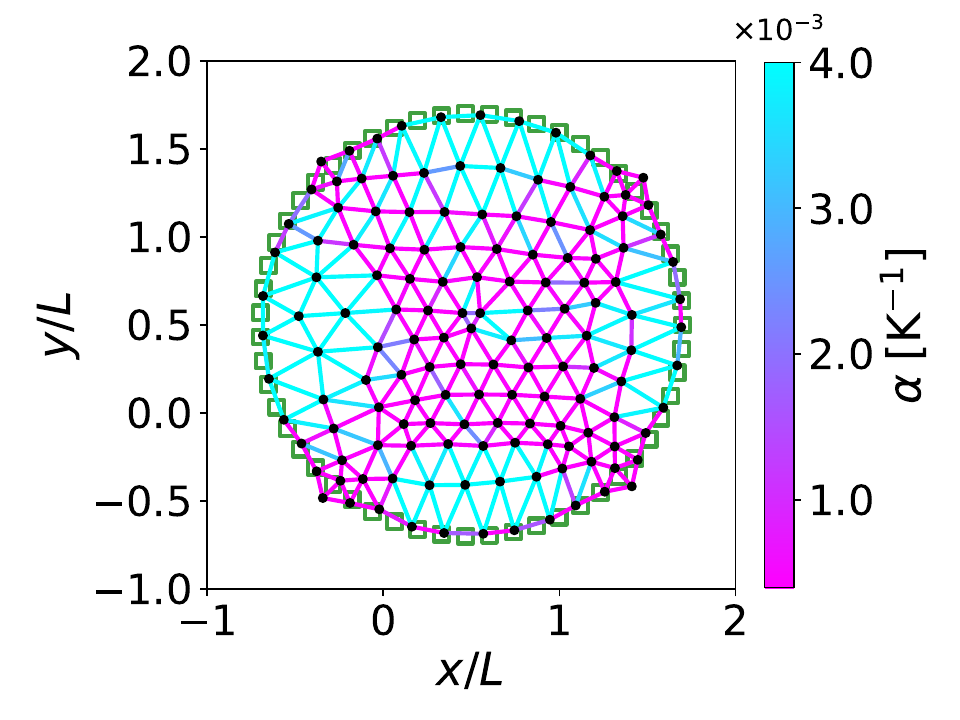}
        \caption{}
        \label{fig:pr4_circle_a}
    \end{subfigure}
    \begin{subfigure}{0.4\textwidth}
        \centering
        \includegraphics[width=\textwidth]{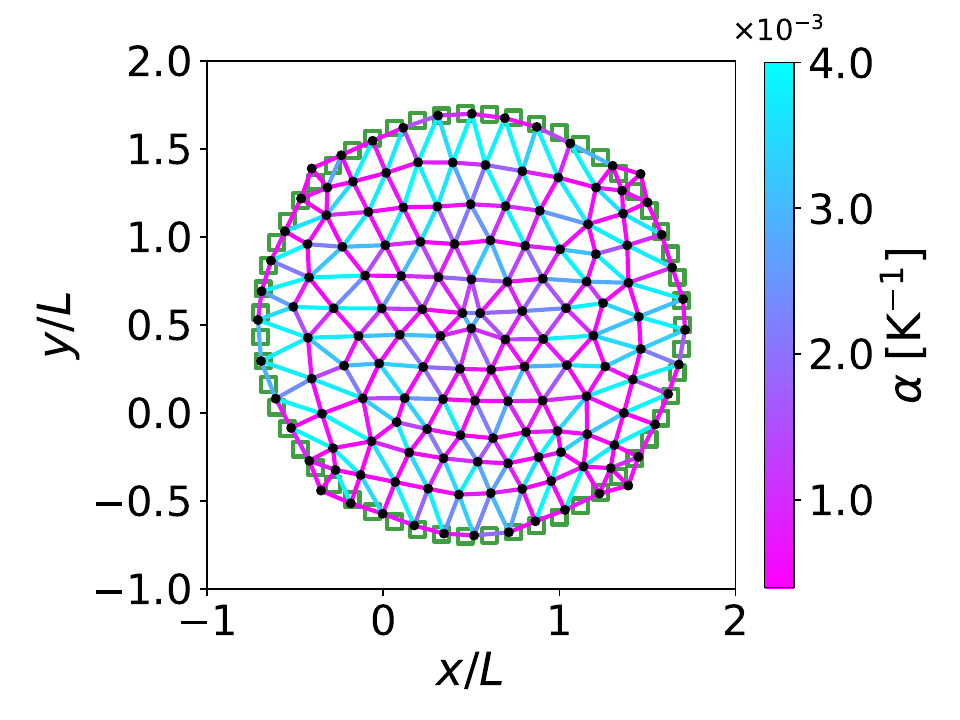}
        \caption{}
        \label{fig:pr4_circle_b}
    \end{subfigure}
    \hfill
    \caption{
    Inverse design of the distribution $\alpha$ in the Neo-Hookean truss system.
    (\subref{fig:pr4_circle_a}) The optimized truss system using SIREN, and
    (\subref{fig:pr4_circle_b}) the element-wise parametrization of $\alpha$.
    }
    \label{fig:pr4_circle}
\end{figure}

\cref{fig:pr4_circle} presents the inverse design results for the distribution of the thermal expansion coefficient $\alpha$ over the truss members, with final Chamfer distances of $2.797\times10^{-3}$ and $2.792\times10^{-3}$, respectively.
A notable difference between the SIREN-based parameterization in \cref{fig:pr4_circle_a} and the member-wise optimization in \cref{fig:pr4_circle_b} is that the former produces a visibly more spatially correlated distribution of $\alpha$.
To quantify this observation, we compute the normalized Dirichlet energy for both cases, obtaining $2.66$ and $3.72$, respectively.
The lower value supports that the continuous representation induced by SIREN yields smaller variation of $\alpha$ between adjacent members.
This behavior arises from the inductive bias of neural network parameterizations, where smooth dependence of the output on the input leads to spatially coherent variations when design variables are represented as functions of position. In contrast, member-wise optimization does not impose such structure, allowing for more abrupt variations across neighboring elements.
This tendency toward smoother designs with SIREN compared with direct member-wise optimization is consistently observed in additional trials, as reported in Appendix~\ref{apdx:continuous}.

Taken together, these results show that the prior assumption that material properties are correlated with the location of truss members can lead to a distinct family of design candidates.
More broadly, this example illustrates that highly ill-posed inverse problems can admit qualitatively different yet plausible solutions depending on the prior imposed on the unknown field.
In practical settings, this suggests that physically or structurally reasonable priors should be incorporated with care, as they may steer the optimization toward solutions that are not only admissible but also more meaningful for the intended design task.

We further assess the strengths of the proposed approach by comparing it with the MMA, which is widely used in structural topology optimization \cite{bendsoe2013topology}. We examine its behavior for the present inverse problem when the thermal expansion coefficient is parameterized using a SIREN. The details of the MMA implementation are provided in Appendix~\ref{apdx:mma}.

\begin{figure}[htbp]
    \centering
    \begin{subfigure}{0.4\textwidth}
        \centering
        \includegraphics[width=\textwidth]{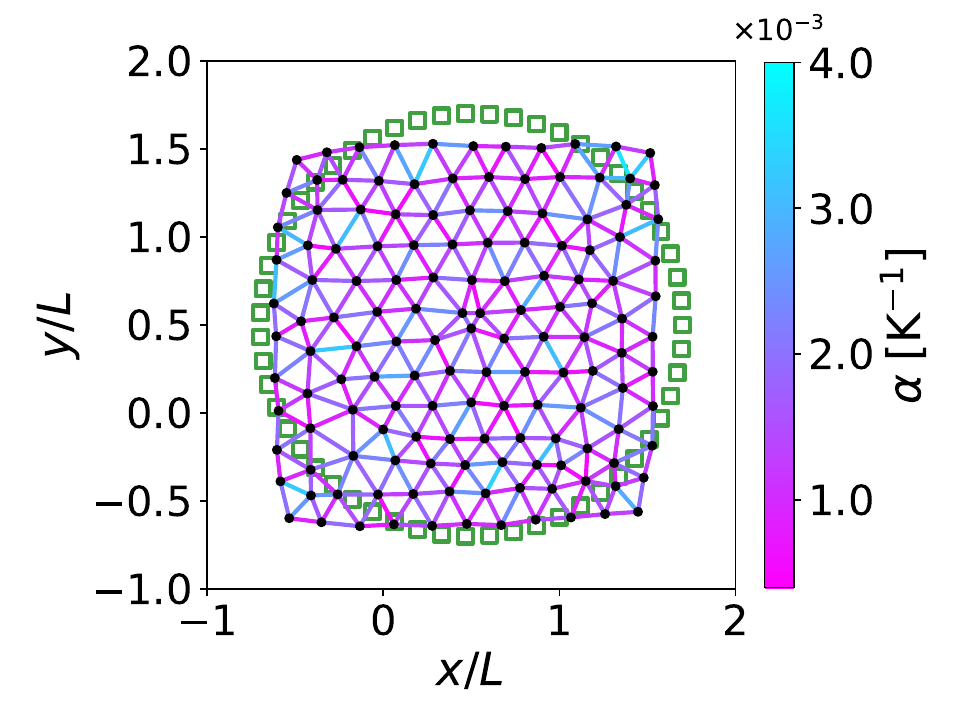}
        \caption{}\label{fig:pr4_circle_mma_a}
    \end{subfigure}
    \begin{subfigure}{0.4\textwidth}
        \centering
        \includegraphics[width=\textwidth]{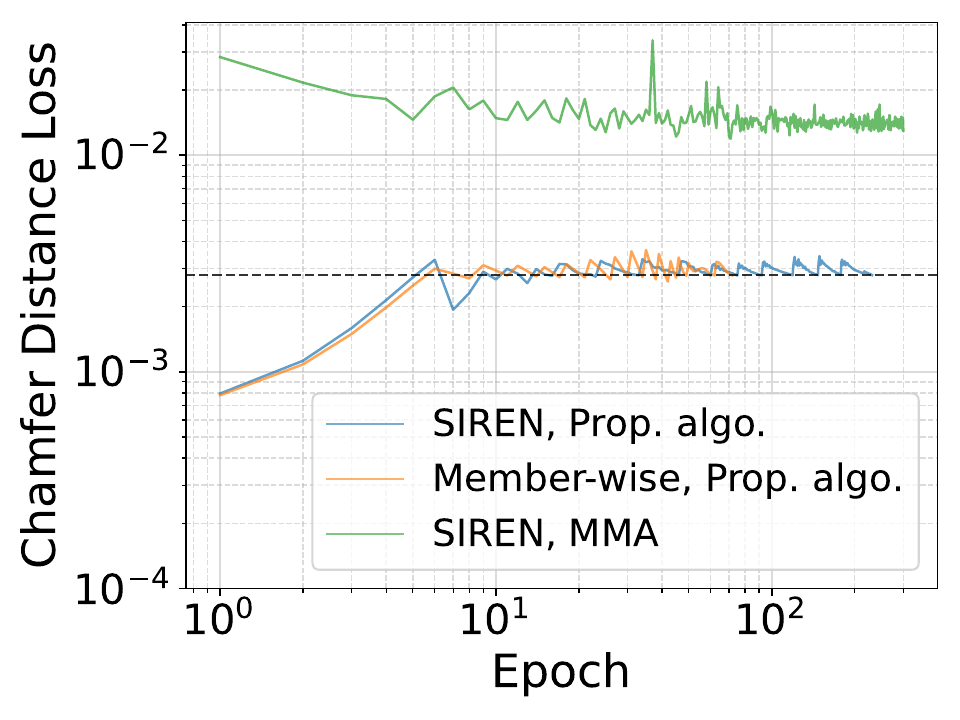}
        \caption{}\label{fig:pr4_circle_mma_b}
    \end{subfigure}
    \hfill
    \caption{
        Inverse design of the distribution of $\alpha$ in the thermoelastic truss system under a continuous representation, using the MMA as the optimizer.
        (\subref{fig:pr4_circle_mma_a}) Obtained distribution of the thermal expansion coefficient $\alpha$, parameterized by SIREN.
        (\subref{fig:pr4_circle_mma_b}) Corresponding loss history, shown together with the loss histories obtained by the proposed algorithm in \cref{fig:pr4_circle}.
    }
    \label{fig:pr4_circle_mma}
\end{figure}

The results are shown in \cref{fig:pr4_circle_mma}.
As shown in \cref{fig:pr4_circle_mma_a}, MMA does not fully recover the target shape in the present setting.
This behavior is more clearly illustrated in \cref{fig:pr4_circle_mma_b}. The proposed homotopy-continuation framework combined with the Adam optimizer successfully reaches the target configuration, while MMA was unable to recover the desired target configuration.
These results are consistent with the observations reported in \cite{sanu2025neural} in the context of topology optimization, where the MMA has been shown to be less effective for inverse problems involving the simultaneous training of neural network models, a behavior also observed in the present setting.

\subsection{Mechanical Heart Shape Morphing}\label{sec:result:heart}

In this section, we consider an inverse design problem involving a nonconvex heart-shaped target point cloud and a nonlinear data-driven constitutive prior.
Beyond reproducing the target shape through heterogeneous design, this example is used to empirically assess two optimization-related ingredients introduced in the present work: homotopy-based continuation and the smoothness of the constitutive prior.
Specifically, we ablate the continuation strategy to examine its effect on computational efficiency, and compare smooth and non-smooth constitutive priors to examine their effect on optimization performance.

\begin{figure}[htbp]
    \centering
    \begin{subfigure}{0.4\textwidth}
        \centering
        \includegraphics[width=\textwidth]{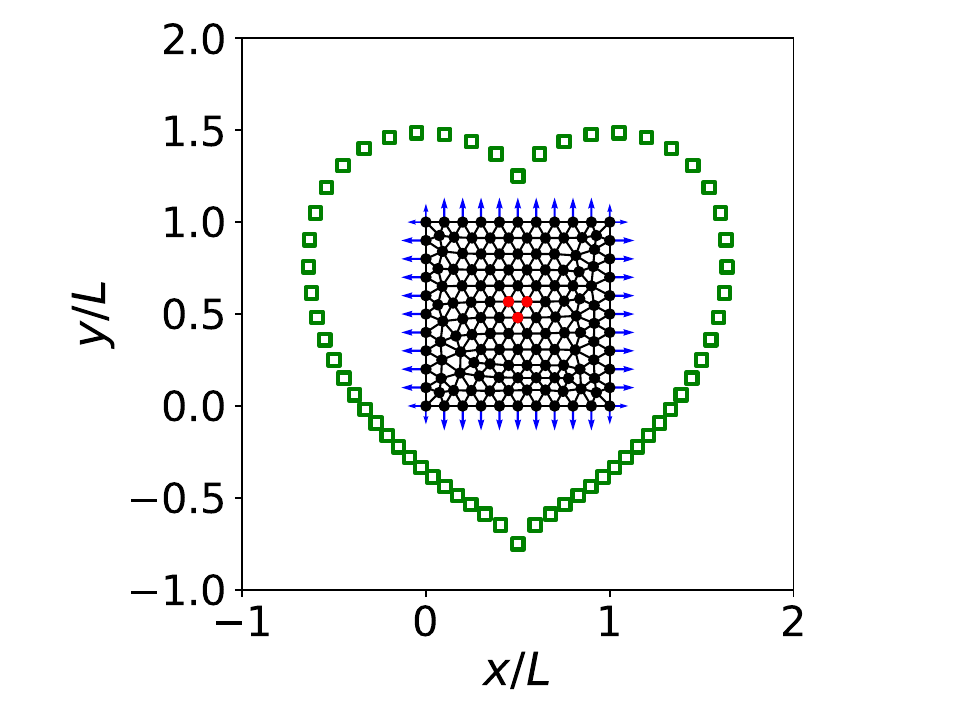}
    \end{subfigure}
    \hfill
    \caption{
        A reference configuration of the truss system with $L=1\,\text{m}$.
        The cross sectional areas of each member are assumed to be $A_e=1{,}000\,\mathrm{mm^3}/L_e$.
        The fixed supports applied as red scatters, and the applied forces on nodes as blue arrows.
        A uniform traction of $100\,\text{N/m}$ is applied to the boundary elements in the outward normal directions.
        The target shape consists of $60$ points sampled on a heart-shaped geometry, shown as green open squares. The boundary of the truss contains $40$ nodes.
    }
    \label{fig:pr3_mesh_heart}
\end{figure}

The problem setting is illustrated in \cref{fig:pr3_mesh_heart}.
We consider inverse design toward a heart-shaped target represented only by a point cloud sampling of its boundary, using constitutive prior A.
The target boundary is constructed from the implicit curve in \cite{taubin1994rasterizing},
\begin{equation}
    \left(x^2+y^2-1\right)^3 - x^2 y^3 = 0,
\end{equation}
by computing the outermost radial intersection in polar coordinates for each uniformly sampled angle.

\begin{figure}[htbp]
    \centering
    \begin{subfigure}{0.4\textwidth}
        \centering
        \includegraphics[width=\textwidth]{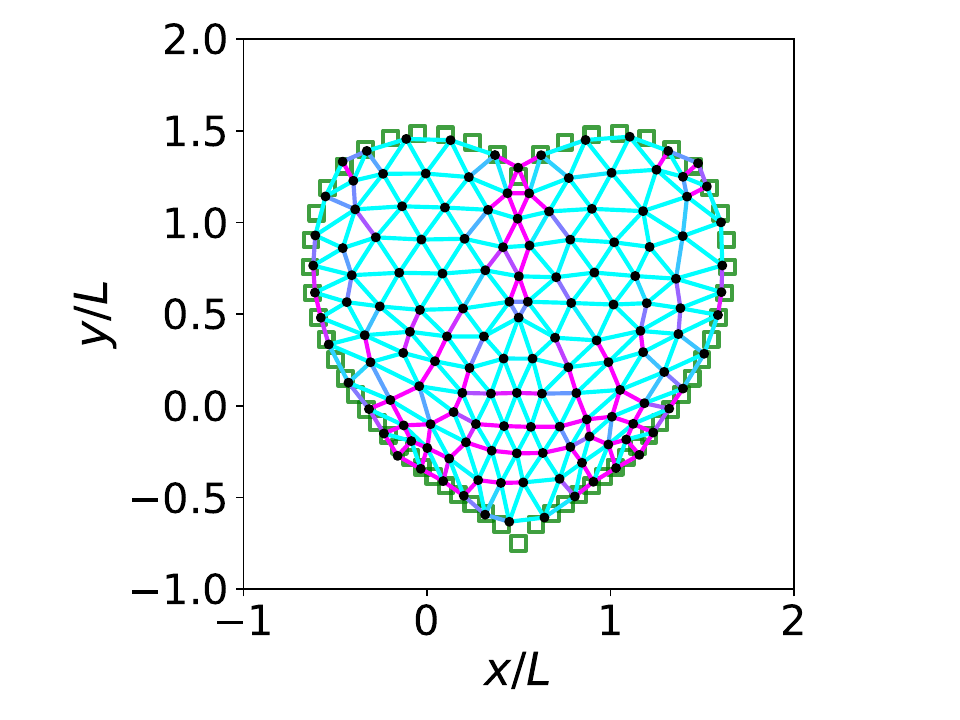}
        \caption{}
        \label{fig:pr3_heart_a}
    \end{subfigure}
    \begin{subfigure}{0.4\textwidth}
        \centering
        \includegraphics[width=\textwidth]{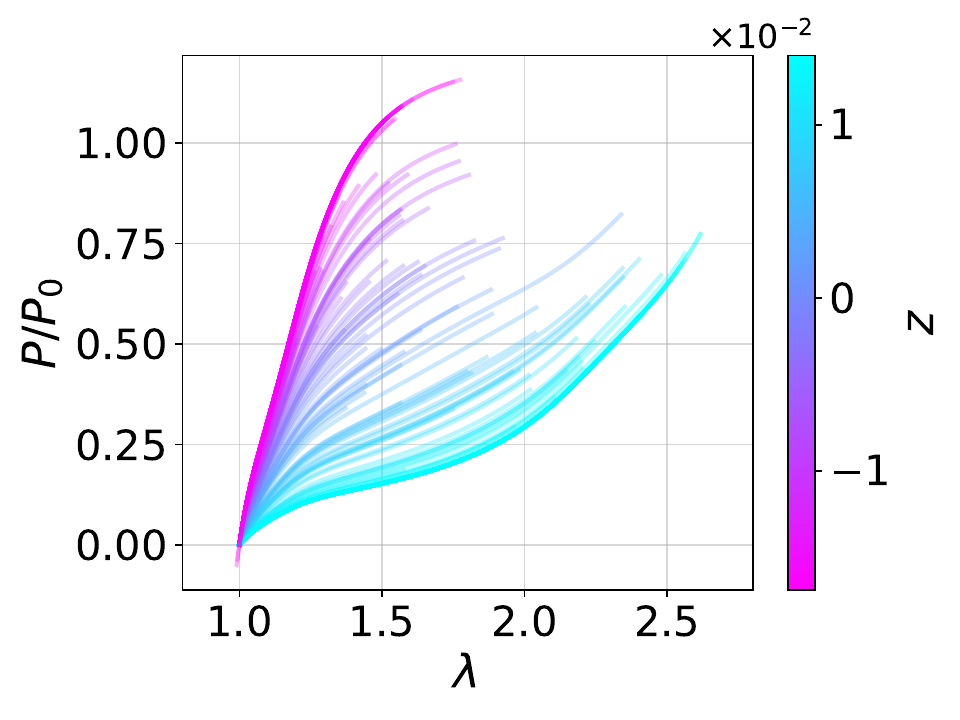}
        \caption{}
        \label{fig:pr3_heart_b}
    \end{subfigure}
    \hfill
    \caption{
    Inverse design of the material distribution in the truss system.
    (\subref{fig:pr3_heart_a}) The optimized truss system using the material surrogate, the final target shape as green scatters.
    (\subref{fig:pr3_heart_b}) The $P$--$\lambda$ curves of the inferred configuration. The colors of the bar elements in (\subref{fig:pr3_heart_a}) correspond to those shown in (\subref{fig:pr3_heart_b}).
    }
    \label{fig:pr3_heart}
\end{figure}

The results are illustrated in \cref{fig:pr3_heart}.
As shown in \cref{fig:pr3_heart_a}, despite the geometric complexity of the heart-shaped target, the proposed framework identifies a distribution of material models that effectively reproduces key features of the shape (final Chamfer distance: $2.099\times10^{-3}$).
This behavior can be examined more closely in \cref{fig:pr3_heart_b}, where corresponding elements are shown in the same color as in \cref{fig:pr3_heart_a}. 
In particular, relatively stiff properties are assigned along the upper indentation, which must recess inward, and in the region near the bottom, to match the structured V-shape in the bottom, whereas softer properties are distributed near the rounded lobes of the heart, thereby enabling a deformation pattern that differs substantially from the reference square configuration.

\begin{figure}[htbp]
    \centering
    \begin{subfigure}{0.4\textwidth}
        \centering
        \includegraphics[width=\textwidth]{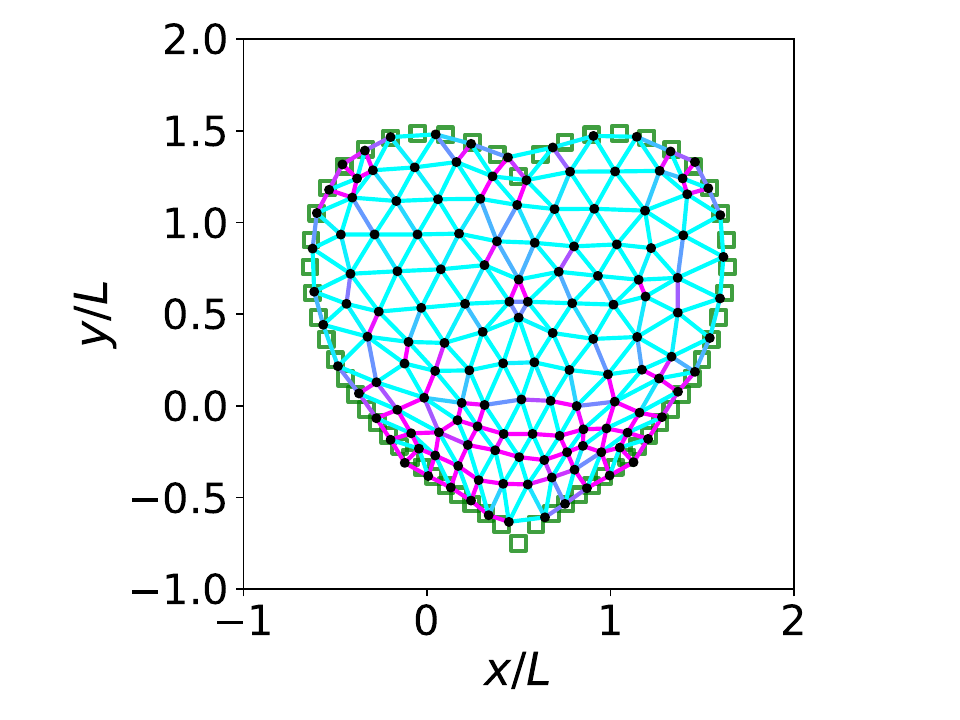}
        \caption{}
        \label{fig:pr3_heart_non_a}
    \end{subfigure}
    \begin{subfigure}{0.4\textwidth}
        \centering
        \includegraphics[width=\textwidth]{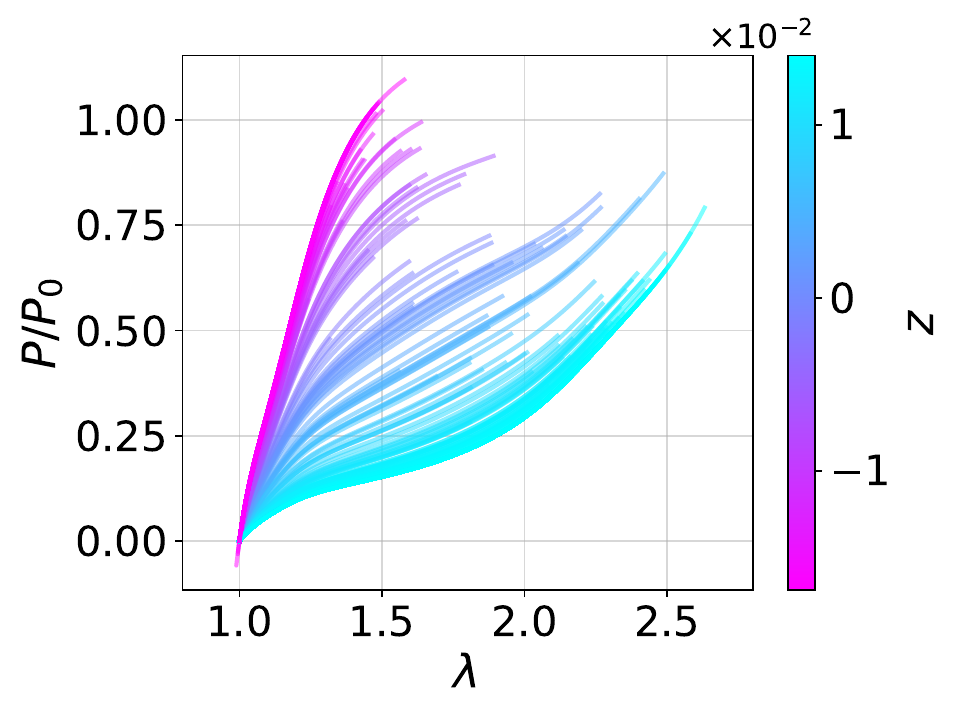}
        \caption{}
        \label{fig:pr3_heart_non_b}
    \end{subfigure}
    \hfill
    \caption{
    Inverse design of the material distribution in the truss system, resulted without the proposed homotopy-based method.
    (\subref{fig:pr3_heart_non_a}) The optimized truss system using the material surrogate, the final target shape as green scatters.
    (\subref{fig:pr3_heart_non_b}) The $P$--$\lambda$ curves of the inferred configuration.
    }
    \label{fig:pr3_heart_non}
\end{figure}

Next, we compare the proposed framework with and without the homotopy-based continuation.
The results are shown in \cref{fig:pr3_heart_non}.
A first observation is that the optimization remains reachable even without the continuation, attaining the prescribed Chamfer distance tolerance as shown in \cref{fig:pr3_heart_non_a}.
The identified material distribution is also qualitatively consistent with the homotopy-based case, with both relatively stiffer and softer members assigned to capture the heart-shaped target, as shown in \cref{fig:pr3_heart_non_b}.

However, a significant difference is observed in terms of the number of iterations.
The histories of both settings are shown in \cref{fig:pr_heart_history}.
The Chamfer distance histories in \cref{fig:pr_heart_history_a} contain comparable numbers of recorded loss values, 76 for the setting with homotopy and 71 without.
The numbers of parameter updates, however, differ: 56 and 70, respectively.
This is because the homotopy-based setting uses 20 continuation steps, whereas the non-homotopy setting uses a single target shape, and under the Chamfer distance-based stopping criterion no parameter update is performed at the converged state of each continuation step.
As a result, the number of gradient evaluations is smaller for the proposed approach in the case considered here.
A more pronounced difference is observed in the Newton--Raphson iteration count, shown in \cref{fig:pr_heart_history_b}.
The warm-started continuation structure of the proposed approach places the load-step iteration as the outer loop of the optimization, whereas the non-homotopy setting places it as the inner loop, which substantially amplifies the total number of Newton--Raphson iterations.
Overall, this comparison supports the interpretation that the present homotopy-based continuation splits the problem into a sequence of easier subproblems, yielding benefits in both the number of optimizer calls and the computational cost of the equilibrium solves.

\begin{figure}[htbp]
    \centering
    \begin{subfigure}{0.4\textwidth}
        \centering
        \includegraphics[width=\textwidth]{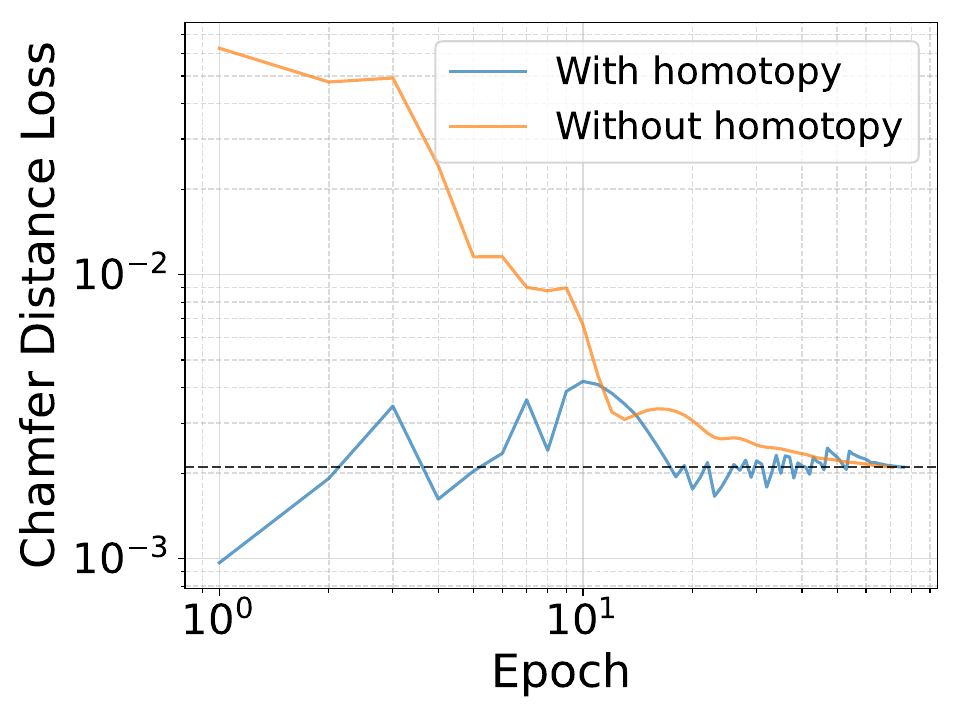}
        \caption{}
        \label{fig:pr_heart_history_a}
    \end{subfigure}
    \begin{subfigure}{0.4\textwidth}
        \centering
        \includegraphics[width=\textwidth]{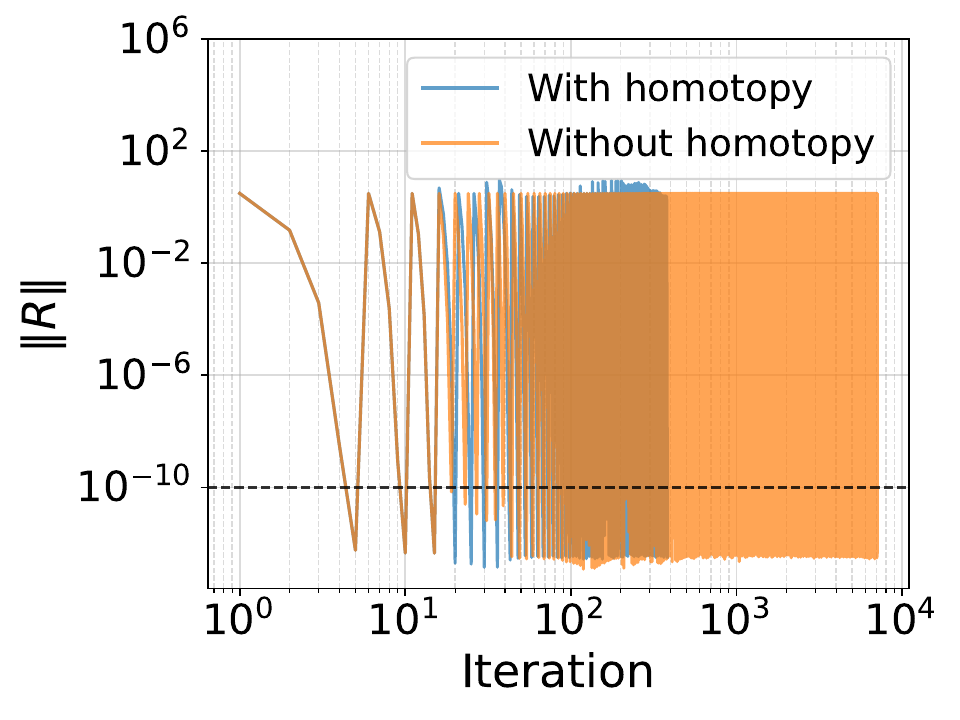}
        \caption{}
        \label{fig:pr_heart_history_b}
    \end{subfigure}
    \hfill
    \caption{
    Histories of (\subref{fig:pr_heart_history_a}) Chamfer distance and (\subref{fig:pr_heart_history_b}) residual norm for the settings with and without homotopy-based continuation.
    The tolerances for the Chamfer distance and the residual norm are shown as black dashed lines.
    The residual norm is reported in the $\ell^2$ norm.
    }
    \label{fig:pr_heart_history}
\end{figure}

\begin{figure}[htbp]
    \centering
    \begin{subfigure}{0.5\textwidth}
        \centering
        \includegraphics[width=\textwidth]{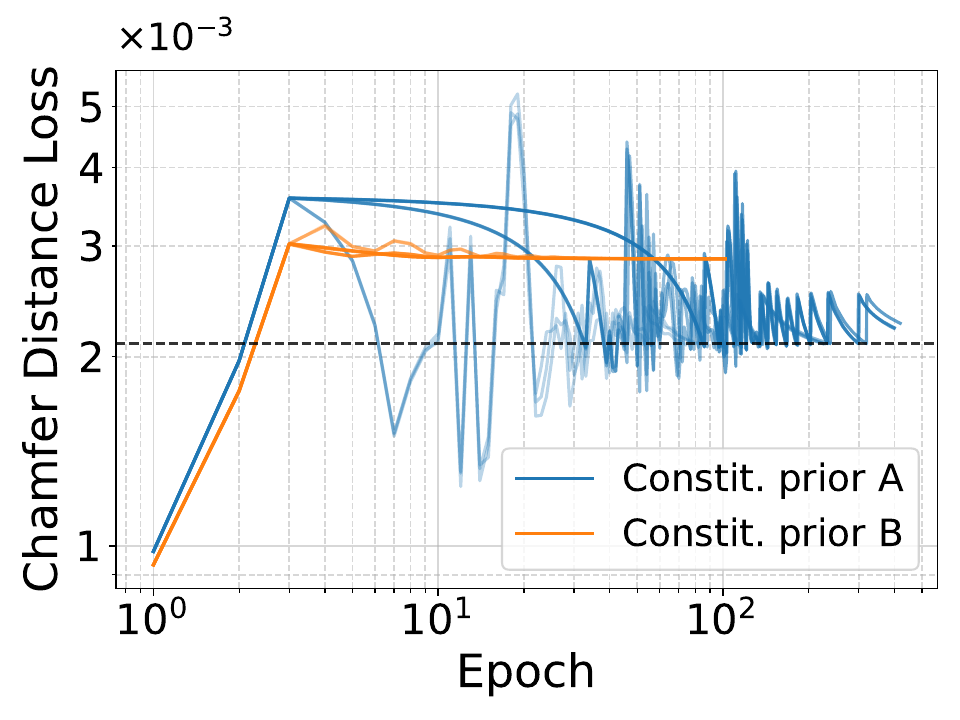}
    \end{subfigure}
    \hfill
    \caption{
    Loss histories of the inverse identification using constitutive prior A and constitutive prior B. The threshold is shown as a black dashed line; multiple trials belonging to the same learning rate are shown with the same transparency.
    }
    \label{fig:pr_heart_compare_smoothness}
\end{figure}

Similar to the study in Section~\ref{sec:result:bar:compare}, we examine the importance of a smoothness of the landscape of learned constitutive family during the inversion process.
We compare the performance of constitutive prior A, with constitutive prior B.
For each surrogate, we use the three learning rates that yielded favorable performance in \cref{fig:pr1_bar_compare}.
For each learning rate, three trials are performed.
In each trial, the optimization is terminated if the objective does not reach the prescribed threshold within $100$ iterations, indicating failure to match even the intermediate target shape. The latent variable $z$ for each truss member is initialized uniformly within $\pm 5\%$ of the latent range about its midpoint.
The results are shown in \cref{fig:pr_heart_compare_smoothness}. The proposed constitutive prior A consistently reaches the threshold (or progresses to the final load step), whereas constitutive prior B stalls at earlier stages (typically around the third load step) for the same learning rates. 
This behavior is consistent with the observations in \cref{fig:pr1_bar_compare}, supporting the conclusion that the smoother optimization landscape induced by the proposed constitutive prior is more suitable for inverse design.

\subsection{Discussion on Non-Uniqueness of the Inverse Problem}
\label{sec:nonunique}

The inverse design problems considered in this work are nonlinear, nonconvex, and defined under partial observability, and therefore are inherently non-unique. In general, multiple distinct material distributions may produce indistinguishable structural responses, resulting in a solution space that may contain several local minima as well as families of equivalent or near-equivalent global minimizers.

The introduction of a constitutive prior restricts this space by imposing a structured inductive bias over admissible material behaviors, thereby improving the tractability of the optimization. Nevertheless, even within this reduced space, multiple combinations of material assignments can yield comparable objective values, particularly in lattice systems where different configurations may lead to similar effective responses. \Cref{apdx:nonunique,apdx:continuous} provide representative examples illustrating this non-uniqueness, where multiple distinct material configurations achieve comparable objective values.

This non-uniqueness can be further mitigated by incorporating additional constraints or regularization terms, such as manufacturability requirements or spatial smoothness, which reduce the admissible design space. While these aspects are not fully explored in the present work, they represent an important direction for future research.

\section{Conclusion}
\label{sec:conclusion}

This work introduced a constitutive-prior-based inverse design framework for spatially distributing nonlinear material behaviors throughout a structure to achieve prescribed mechanical responses.
Rather than optimizing predefined material parameters within a prescribed constitutive law, the proposed formulation learns a low-dimensional latent parameterization of constitutive responses from data and uses spatially varying latent coordinates as the inverse design variables.
The framework was investigated in elastic networks, where the objective is to match the deformed configuration of the structure to a prescribed target geometry. The resulting inverse problem was formulated as a PDE-constrained optimization problem over latent constitutive coordinates constrained by a learned constitutive prior. To improve robustness in complex inverse design settings, the formulation further incorporated homotopy continuation, affine registration, and smooth latent-field parameterizations. 
The proposed optimization strategy was additionally compared with the commonly used MMA, demonstrating improved convergence behavior in terms of both the total number of optimization iterations and the attained objective values.

Across the numerical examples, the proposed framework demonstrated the ability to recover spatially varying material distributions capable of producing prescribed global deformations under increasingly challenging inverse design settings. The introduced Chamfer-distance-based objective avoided the need for pointwise nodal correspondence between the reference and target configurations, thereby enabling flexibility with respect to discretization, geometric representation, and resolution without requiring ad hoc interpolation or projection procedures. It was further shown that appropriately chosen neural network architectures can serve as effective priors within the inverse design formulation, including priors over admissible material behaviors and spatial smoothness. The former was realized through a PICNN architecture, while the latter was enforced through continuous neural representations of the latent design field with respect to spatial coordinates. The resulting smoothness of the inferred material distributions was additionally quantified through a graph-based metric.

More broadly, the present work reformulates inverse design from parameter identification within prescribed constitutive laws toward optimization within a learned admissible family of constitutive responses.
In this perspective, constitutive models are no longer treated as fixed analytical forms with unknown parameters, but instead as elements of a learned constitutive family constrained by physical principles. Such a formulation provides additional flexibility for inverse design in settings where classical constitutive assumptions may be insufficient to capture complex, heterogeneous, or spatially evolving material behavior.

Although the present work focused on elastic network systems, the proposed formulation is general and may be extended to broader classes of nonlinear and multiphysics problems. 
Future work will investigate how manufacturability and realizability
information can be incorporated into the learned constitutive representation, so that inverse design is restricted to responses that
are both mechanically admissible and physically attainable.
Other future directions include applications to continuum finite-element systems, inelastic and history-dependent constitutive behavior, damage and fracture, uncertainty-aware inverse design, and integration with experimental observations.
The proposed framework therefore provides a foundation for coupling constitutive learning, physics-constrained optimization, and inverse design in settings where admissible material behavior cannot be adequately represented through fixed parametric constitutive assumptions.

\section*{Limitations}

Despite the applicability demonstrated above, the present framework has several limitations.
First, the numerical examples are restricted to two-dimensional truss systems. This restriction is not inherent to the formulation, as neither the governing residual nor the optimization scheme references the spatial dimension explicitly, and an extension to three-dimensional systems follows by replacing the nodal coordinates with their three-dimensional counterparts.

Second, the demonstrated design variables are scalar on each truss member, namely the latent parameter $z$, the tangent modulus $E$, or the thermal expansion coefficient $\alpha$. This is not a structural restriction of the framework, since the adjoint-based update rule is agnostic to the dimension of the per-member design vector, and richer per-member parameterizations can be accommodated within the same scheme.

Third, the Chamfer distance thresholds are selected heuristically. This originates in part from the nature of the Chamfer distance, which is a discrepancy rather than a metric and may remain nonzero when two point clouds represent the same underlying geometry but differ in their sampling locations or densities.
The present values are also chosen to provide a consistent criterion across the comparisons reported in this work; in practice, one may calibrate the threshold according to the point-cloud resolution, refine the sampling, run the optimization for more iterations when optimization error is dominant, or adopt an alternative discrepancy measure.

\section*{Data Availability}
All data used in this work are either publicly available or can be reproduced from the information provided and are available from the corresponding author upon reasonable request.

\section*{Acknowledgment}
B.B. acknowledges support from the startup fund provided by Northwestern University.

\section*{Declaration on the Use of Generative AI}
The authors used ChatGPT and Claude, large language models developed by OpenAI and Anthropic, respectively, to assist with English language editing and grammar refinement.
The scientific content, technical interpretations, and conclusions are solely the responsibility of the authors.

\bibliographystyle{plainnat}
\bibliography{bibliography}

\appendix
\section*{Appendix}

\section{Partially Input-Convex Neural Networks}
\label{apdx:icnn}

In this section, we briefly review the partially input-convex neural networks (PICNNs) architecture introduced in \cite{amos2017input}. The PICNN produces a scalar-valued output that is convex with respect to a subset of the input dimensions (e.g., $\boldsymbol{x}$), while remaining unrestricted with respect to the remaining inputs (e.g., $\boldsymbol{z}$).
Specifically,
\begin{equation}
    g_{\boldsymbol{\phi}}
    :
    \mathcal{X}\times\mathcal{Z}
    \to
    \mathbb{R},
    \qquad
    (
    \boldsymbol{x},
    \boldsymbol{z}
    )
    \mapsto
    g_{\boldsymbol{\phi}}
    (
    \boldsymbol{x};
    \boldsymbol{z}
    )
\end{equation}
is convex in $\boldsymbol{x}\in\mathcal{X}$ for every fixed $\boldsymbol{z}\in\mathcal{Z}$.

The architecture maintains two parallel hidden states. Let $\boldsymbol{q}_n$ and $\boldsymbol{p}_n$ denote the $\boldsymbol{z}$-path and $\boldsymbol{x}$-path hidden states at layer $n$, initialized as $\boldsymbol{q}_0=\boldsymbol{z}$ and $\boldsymbol{p}_0=\boldsymbol{x}$.
The recursions read
\begin{align}
    \boldsymbol{q}_{n+1}
    &=
    \tilde{h}
    \bigl(
        \tilde{m}_n(\boldsymbol{q}_n)
    \bigr),
    \\
    \boldsymbol{p}_{n+1}
    &=
    h
    \bigl(
        \boldsymbol{w}^{(p)}_n
        (
            \boldsymbol{p}_n \odot
            (
            m^{(pq)}_{n}
            (
                \boldsymbol{q}_n
            )
            )_{+}
        )
        +
        \boldsymbol{w}^{(x)}_n
        \bigl(
            \boldsymbol{x} \odot
            (
                m^{(x q)}_n
                (
                    \boldsymbol{q}_n
                )
            )
        \bigr)
        +
        m^{(q)}_{n}(
            \boldsymbol{q}_n
        )
    \bigr),
\end{align}
where $\odot$ denotes the Hadamard (entry-wise) product, and the positive gate operation $(\,\cdot\,)_+$ is applied entry-wise.

All $m_n$ are arbitrary affine maps constituted of weights and biases with appropriate sizes, and all weight matrices and bias vectors are collected in the trainable parameters $\boldsymbol{\phi}$.
The scalar output $g_{\boldsymbol{\phi}}(\boldsymbol{x};\boldsymbol{z})$ is read off from a terminal layer of the same form with output dimension one.

Convexity of the output in $\boldsymbol{x}$ at fixed $\boldsymbol{z}$ is obtained by combining three structural conditions:
\begin{itemize}
    \item the activation $h$ on the $\boldsymbol{x}$-path is convex and non-decreasing
    \item the entries of the weight matrices $\boldsymbol{w}^{(p)}_n$ are nonnegative.
    \item the positive gate term is nonnegative.
\end{itemize}

Under these conditions, an inductive argument establishes that each component of $\boldsymbol{p}_n$ is convex in $\boldsymbol{x}$. The base case $\boldsymbol{p}_0=\boldsymbol{x}$ is linear, and hence convex. For the inductive step, the argument of $h$ decomposes into three contributions: a nonnegatively scaled Hadamard product between the convex state $\boldsymbol{p}_n$ and a nonnegative gate, a linear term in $\boldsymbol{x}$ with $\boldsymbol{z}$-dependent coefficients, and a term independent of $\boldsymbol{x}$.
Each contribution is convex in $\boldsymbol{x}$, and composition with the convex non-decreasing $h$ preserves convexity.

By contrast, no convexity restriction is required on the maps that generate $\boldsymbol{q}_n$, as these enter the $\boldsymbol{x}$-path only as coefficients evaluated at fixed $\boldsymbol{z}$.
Accordingly, the remainder can be chosen arbitrarily.
This permits the architecture to represent nonconvex modulation of the strain energy by the latent material parameter while enforcing convexity in $\boldsymbol{x}$.

\section{Hyperparameter Settings}\label{apdx:optimization}

This section summarizes the hyperparameters used across the problems presented in Section~\ref{sec:result}, including neural network architectures, optimization parameters, parameter bounds, and initialization strategies.

\paragraph{Neural network model configurations}

In this work, we use two neural networks: one for the material surrogate and the other for the continuous representation of the material property.

For the data-driven material surrogate in Section~\ref{sec:result}, we use a PICNN architecture.
The network consists of two hidden layers with $50$ units in each of the $\boldsymbol{z}$- and $\boldsymbol{i}$-paths, followed by a scalar linear output layer.
Softplus activations,
\begin{equation}
    \mathrm{softplus}(x)
    =
    \frac{1}{\beta}
    \ln{(1+e^{\beta x})},
\end{equation}
with $\beta=10$, are used in the hidden layers.
Convexity with respect to $\lambda$ is enforced by restricting the weights and gate operations associated with the $\boldsymbol{i}$-path recurrence to be nonnegative, implemented through a softplus reparameterization with the same $\beta$.
The surrogate is trained using the objective in \cref{eq:surrogate_loss} with regularization parameter $s=1$ for $10{,}000$ epochs and learning rate $1\times10^{-4}$.
The latent variables $z$ are initialized at zero and updated with learning rate $1\times10^{-3}$.
The non-smooth counterpart shown in \cref{fig:pr1_bar_fixed_latent_b} is trained under the same setting until it reaches the same data-misfit level as constitutive prior A in \cref{fig:pr1_deepSDF_d}.

For the continuous material-property representation in Section~\ref{sec:result:circle}, we use sinusoidal representation networks (SIRENs) \cite{sitzmann2020implicit} with two hidden layers of width 50 and a scalar output. We choose $w_0=20$ for the shown cases.
The network takes the two-dimensional spatial coordinate as input and is trained with learning rate $5\times10^{-3}$.

\paragraph{Inverse identifications} The detailed hyperparameter settings for each problem in Section~\ref{sec:result} are summarized in Table~\ref{tab:optimization}.
In the table, each case is identified by the representative boldface label used in the main text; for example, the example in Section~\ref{sec:result:bar} is denoted by \textbf{1D Bar}.
The tolerance values for the Chamfer distance are chosen empirically, as they work well for the problem settings considered in this study.

\begin{table*}[htbp]
\centering
\caption{Optimization settings for each problem.}
\label{tab:optimization}
\begin{tabular}{l c c c c c}
\hline
\textbf{Setting} & \textbf{1D Bar} & \textbf{Airfoil} & \textbf{Fracture} & \textbf{Circle} & \textbf{Heart} \\
\hline
Learning rate & $1\times10^{-4}$ & $1\times10^{-2}$ & $1\times10^{-1}$ & $1\times10^{-1}$ & $1\times10^{-2}$ \\
Tolerance     & $1.0\times10^{-6}$ & $3.0\times10^{-6}$ & $1.0\times10^{-3}$ & $2.8\times10^{-3}$ & $2.1\times10^{-3}$ \\
Load steps    & $20$ & $20$ & $10$ & $20$ & $20$ \\
\hline
\end{tabular}
\end{table*}

\paragraph{Affine registration}
The initial target shape in Section~\ref{sec:result:crack} is obtained by optimizing the entries of a linear transformation $\boldsymbol{A}\in\mathbb{R}^{2\times2}$ and a translation vector $\boldsymbol{b}\in\mathbb{R}^2$.
The transformation is initialized with $\boldsymbol{A}$ equal to the identity matrix and $\boldsymbol{b}$ equal to the zero vector.
The optimization is performed using a learning rate of $1\times10^{-3}$ for $3{,}000$ epochs.

\paragraph{Parameter bounds}
Suitable bounding strategies are additionally adopted for each problem.
For the problems in Sections~\ref{sec:result:bar}, \ref{sec:result:airfoil}, and \ref{sec:result:heart}, the design variable $z$ is clamped after each iteration so that it satisfies the bound $[z_{\mathrm{min}},z_{\mathrm{max}}]$, since the truss system is designed with respect to the material templates shown in \cref{fig:pr1_deepSDF_a}.
For the fracture problem in Section~\ref{sec:result:crack}, the tangent modulus of each truss member is updated indirectly by introducing an unconstrained variable $\tilde{E}_e\in\mathbb{R}$ such that
\begin{equation}\label{eq:tangent_modulus_param}
    E_e = E_0\exp(\tilde{E}_e),
\end{equation}
which enforces the constraint $E_e>0$, where $E_0=1\,\mathrm{MPa}$ is chosen as the reference initial tangent modulus.
In Section~\ref{sec:result:circle}, for both the SIREN-based and member-wise $\alpha_e$ demonstrations, the bound $(\alpha_{\mathrm{min}},\alpha_{\mathrm{max}})$ is enforced through the parametrization
\begin{equation}
    \alpha_e
    =
    \mathrm{sigmoid}(\tilde{\alpha}_e)
    \bigl(\alpha_{\mathrm{max}}-\alpha_{\mathrm{min}}\bigr)
    + \alpha_{\mathrm{min}},
\end{equation}
where $\tilde{\alpha}_e\in\mathbb{R}$ denotes either the SIREN output or the directly optimized unconstrained variable, respectively.

\paragraph{Initial values}
The initial values of the design variables not specified in Section~\ref{sec:result} are set as follows.
The latent variables $z$ in Section~\ref{sec:result:airfoil} are initialized to zero.
The indirect variables $\tilde{E}$ associated with the tangent modulus $E$ in Section~\ref{sec:result:crack} are initialized to zero.
The indirect variables $\tilde{\alpha}$ associated with the thermal expansion coefficient $\alpha$ in Section~\ref{sec:result:circle} are initialized from $\mathcal{N}(0,1)$.
The latent variables $z$ in Section~\ref{sec:result:heart} are initialized to zero.

\section{Additional Trials on Non-Uniqueness}
\label{apdx:nonunique}

This supplementary example illustrates the non-uniqueness inherent in the inverse design problem considered in this work: given a target shape, one seeks a material configuration that reproduces it.
As a representative case, we consider the fracture example in Section~\ref{sec:result:crack}.
To demonstrate this non-uniqueness, we run ten trials with different initial guesses, where the latent variables $\tilde{E}_e$ are initially sampled from $\mathcal{N}(0,1)$ (see \cref{eq:tangent_modulus_param}).

\begin{figure}[htbp]
    \centering
    \begin{subfigure}{0.4\textwidth}
        \centering
        \includegraphics[width=\textwidth]{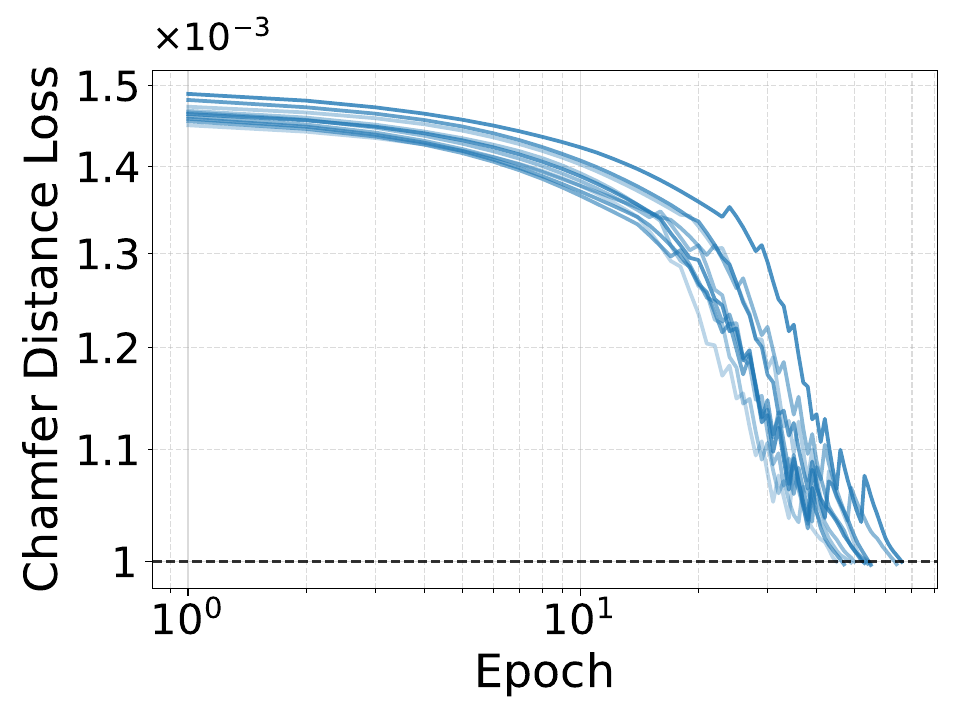}
        \label{fig:apdx_nonuniq_d}
    \end{subfigure}
    \caption{
    Loss histories over multiple trials of the problem in Section~\ref{sec:result:crack}.
    The tolerance in Chamfer distance for the last step is shown as a black dashed line.
    }
    \label{fig:apdx_nonuniq_crack_loss}
\end{figure}

\begin{figure}[htbp]
    \centering
    \begin{subfigure}{0.27\textwidth}
        \centering
        \includegraphics[width=\textwidth]{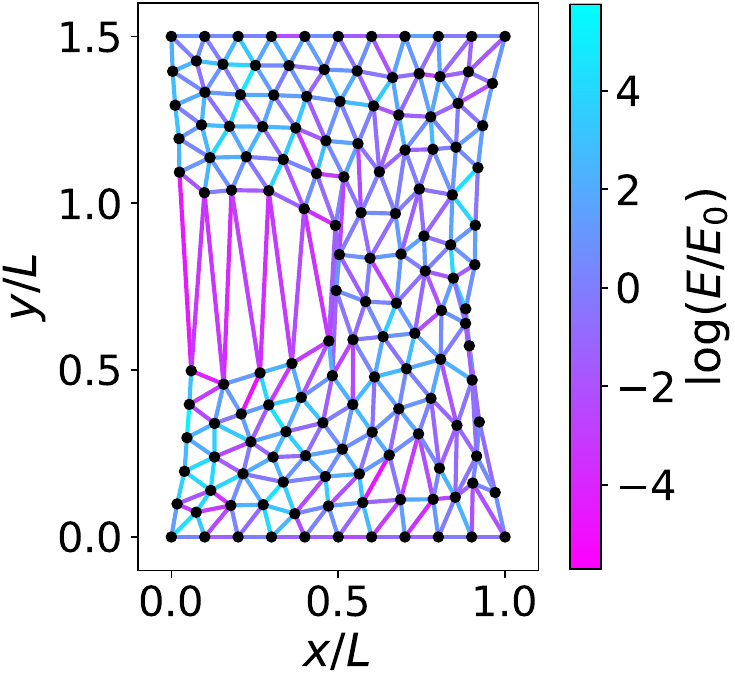}
    \end{subfigure}
    \begin{subfigure}{0.27\textwidth}
        \centering
        \includegraphics[width=\textwidth]{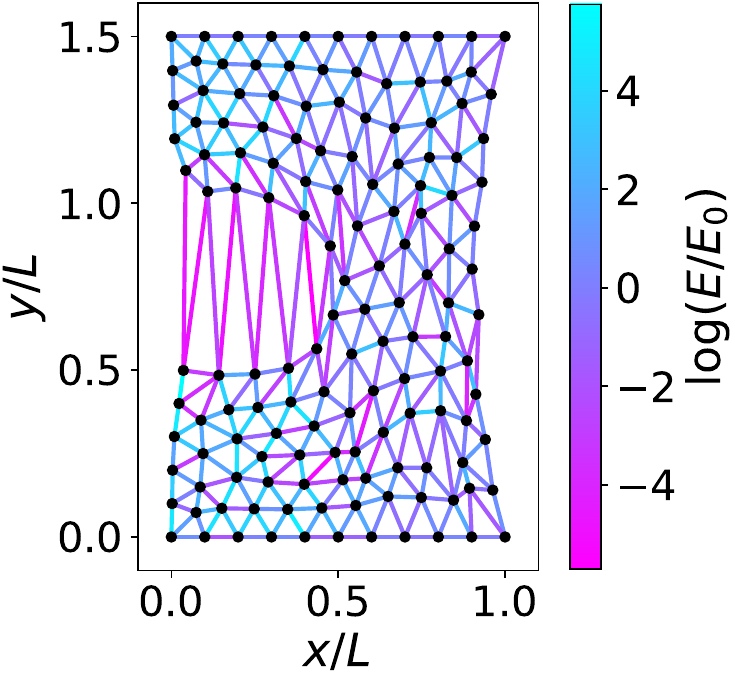}
    \end{subfigure}
    \begin{subfigure}{0.27\textwidth}
        \centering
        \includegraphics[width=\textwidth]{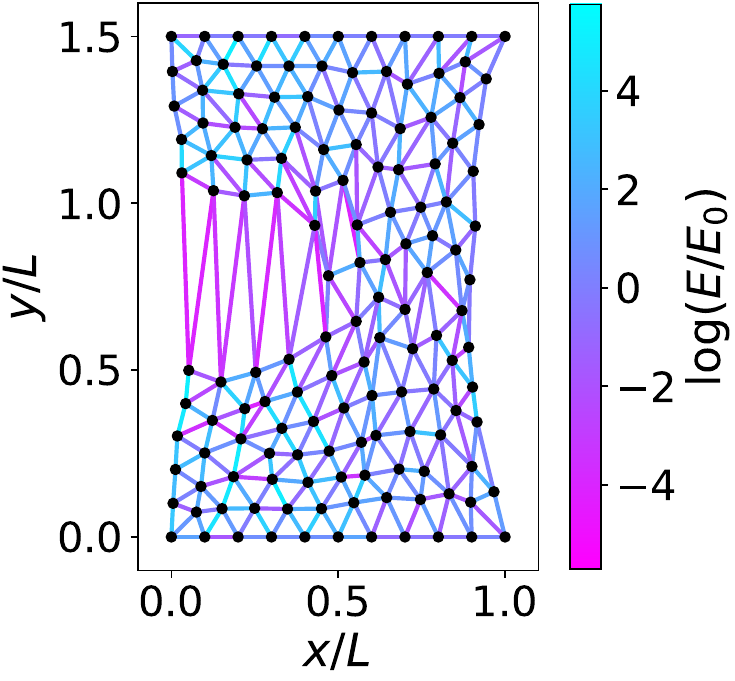}
    \end{subfigure}
    \caption{
    Current configurations at the final load step, obtained from three different initializations of the truss-member material properties.
    }
    \label{fig:apdx_nonuniq}
\end{figure}

The loss histories are shown in \cref{fig:apdx_nonuniq_crack_loss}.
As shown in the figure, all trials reach the prescribed threshold in Table~\ref{tab:optimization}.
This convergence, however, does not imply uniqueness of the inferred material distribution, as different initial guesses may still lead to distinct material fields with comparable final objective values.

\cref{fig:apdx_nonuniq} shows three representative configurations selected from the ten inferred solutions.
Although the trials converge to noticeably different designs, they attain comparable final Chamfer distance values.
This result highlights the intrinsic non-uniqueness of the inverse design problem and suggests that such non-uniqueness may provide additional flexibility for incorporating extra design requirements, such as manufacturability constraints associated with prescribed material templates.

Next, the differences among the inferred designs are quantified. Each design is interpreted as a distribution of design variables over the reference configuration and represented by a point set of triplets
\begin{equation}
\left(
x/L,\;
y/L,\;
\log(E/E_0)
\right),
\end{equation}
where $(x/L,y/L)$ denotes the midpoint coordinates of each element in the reference configuration, and $\log(E/E_0)$ denotes the corresponding normalized material property.

\begin{figure}[htbp]
    \centering
    \begin{subfigure}{0.4\textwidth}
        \centering
        \includegraphics[width=\textwidth]{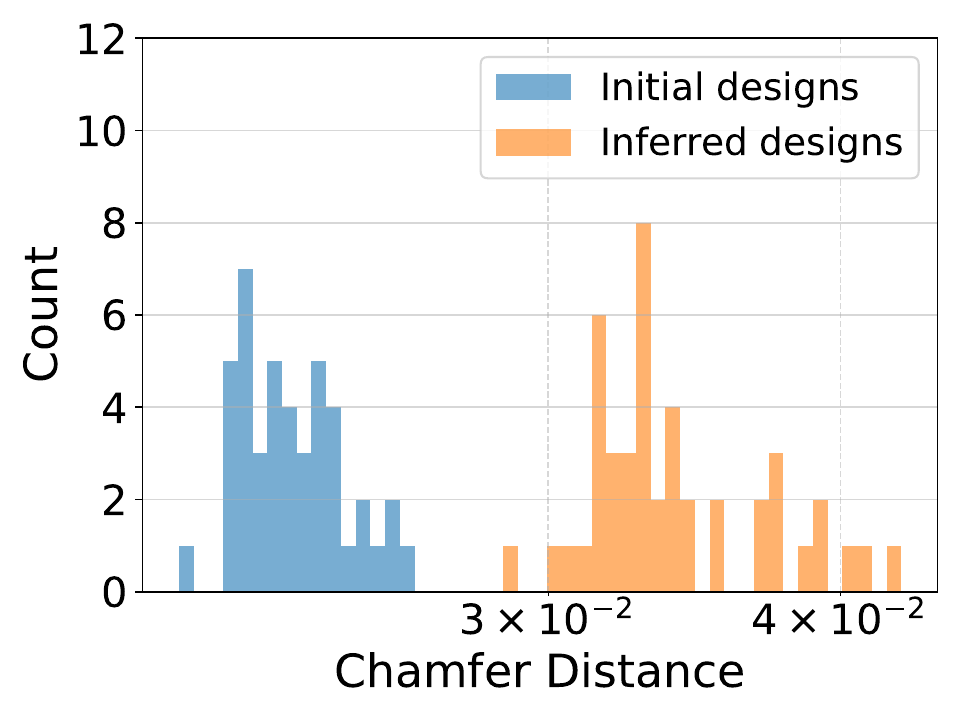}
    \end{subfigure}
    \caption{Histograms of the pairwise Chamfer distances between designs. The ten independent trials yield 45 design pairs for both the initial and inferred designs.}
    \label{fig:apdx_nonunique_design_hist}
\end{figure}

The differences between designs are quantified using the Chamfer distance in \cref{eq:chamfer_distance} between their corresponding point sets. Pairwise Chamfer distances are evaluated separately among the initial designs sampled from the isotropic Gaussian distribution and among the inferred designs. This choice is not unique, as alternative discrepancy measures or component scalings could also be adopted. It is used here because it is consistent with interpreting each design as a spatial distribution of material response, and because the isotropic Gaussian initialization provides an intuitive reference for assessing changes in design diversity.

The resulting histograms are shown in \cref{fig:apdx_nonunique_design_hist}. The inferred designs exhibit a shift toward larger pairwise Chamfer distances relative to the initial designs, indicating increased diversity in the spatial material distributions. Together with the results in \cref{fig:apdx_nonuniq_crack_loss}, this observation suggests that distinct inferred designs can produce closely matching final configurations that satisfy the prescribed objective threshold.

\section{Additional Trials on Design-Variable Representation}\label{apdx:continuous}

In this section, we report supplementary results comparing the SIREN-based continuous representation approach with the member-wise optimization approach.
Our objective is to characterize their training behavior and to examine how structured prior knowledge influences the optimization outcome.
For both approaches, each trial is run until the Chamfer-distance-based stopping criterion is satisfied.
We repeat the optimization until three successful cases are obtained for each method.

\begin{figure}[htbp]
    \centering
    \begin{subfigure}{0.4\textwidth}
        \centering
        \includegraphics[width=\textwidth]{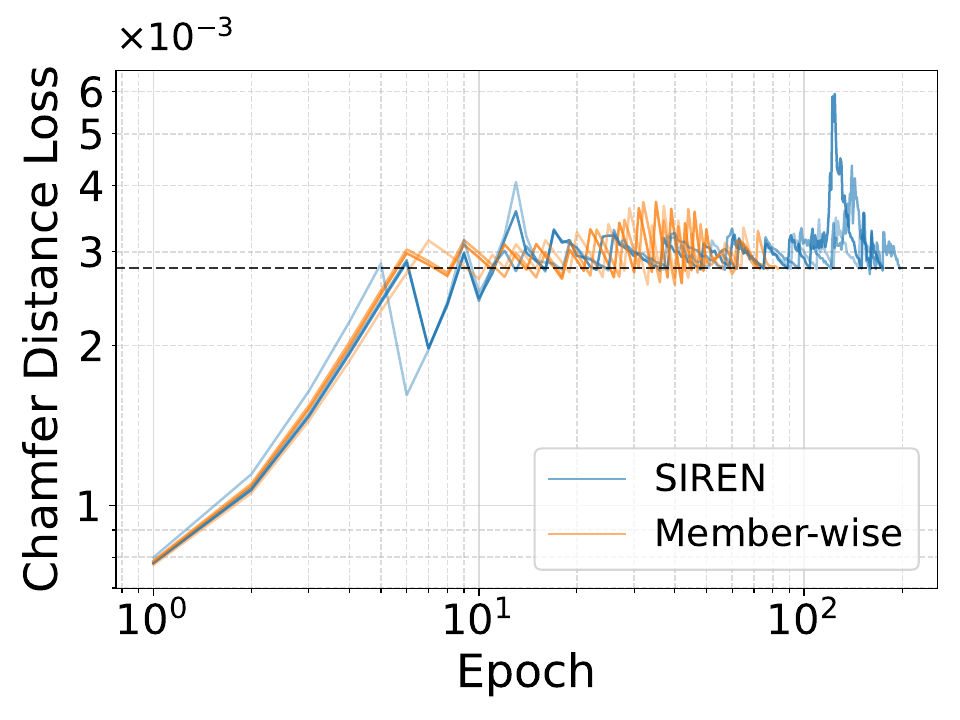}
    \end{subfigure}
    \caption{
    Loss histories over multiple trials of the problem in Section~\ref{sec:result:circle}.
    The tolerance in Chamfer distance for each step is shown as a black dashed line.
    }
    \label{fig:apdx_circle_loss}
\end{figure}

The loss histories for the two approaches are shown in \cref{fig:apdx_circle_loss}.
As shown in the figure, the SIREN-based continuous representation exhibits substantially noisier optimization behavior than the approach that directly updates the indirect variables $\alpha_e$ on a member-wise basis.
In particular, the SIREN-based losses often display pronounced spikes, with the loss not decreasing monotonically and in some cases increasing sharply before decreasing again.
This behavior suggests that the hard-coded spatial correlation and the more complex dependence of the design on the SIREN weights make the optimization problem more difficult than the member-wise parameterization.

\begin{figure}[htbp]
    \centering
    \begin{subfigure}{0.32\textwidth}
        \centering
        \includegraphics[width=\textwidth]{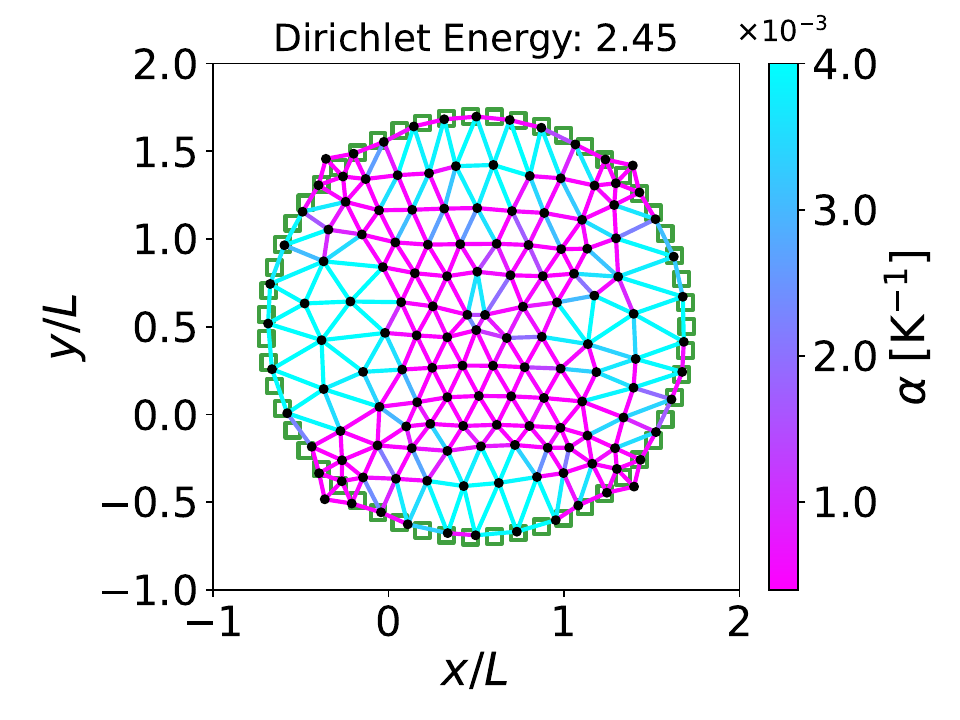}
        \label{fig:apdx_circle_a}
    \end{subfigure}
    \begin{subfigure}{0.32\textwidth}
        \centering
        \includegraphics[width=\textwidth]{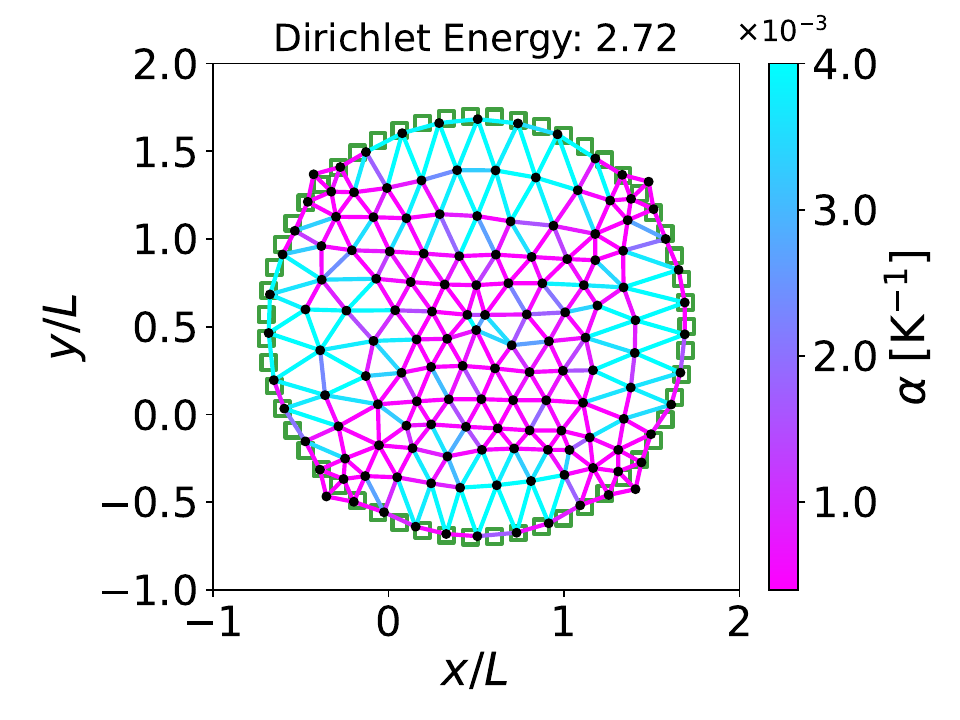}
        \label{fig:apdx_circle_b}
    \end{subfigure}
    \begin{subfigure}{0.32\textwidth}
        \centering
        \includegraphics[width=\textwidth]{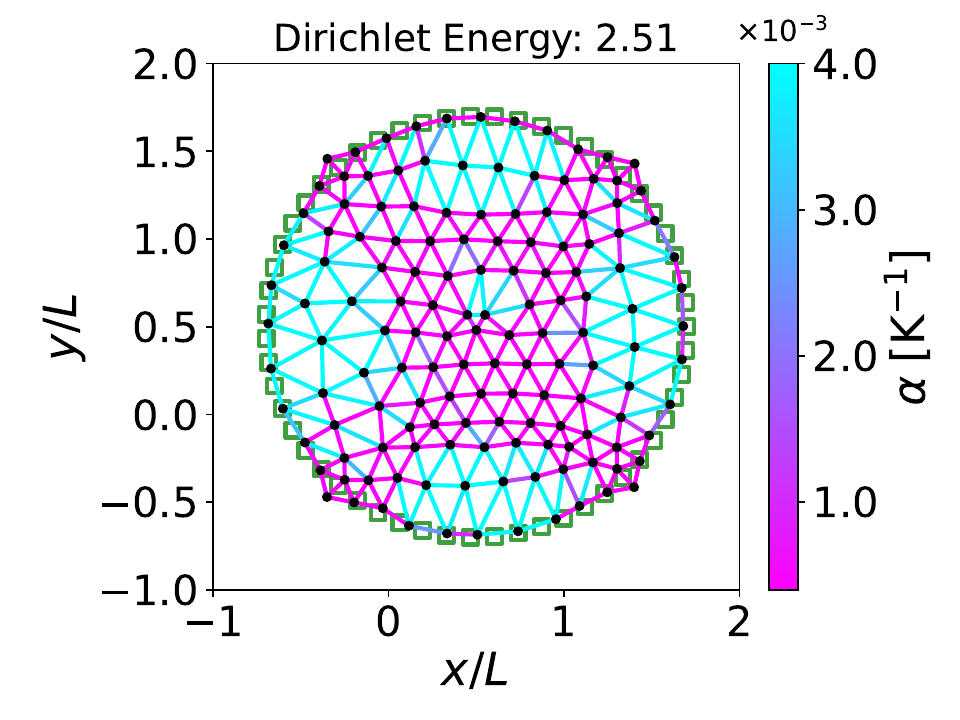}
        \label{fig:apdx_circle_c}
    \end{subfigure}
    \begin{subfigure}{0.32\textwidth}
        \centering
        \includegraphics[width=\textwidth]{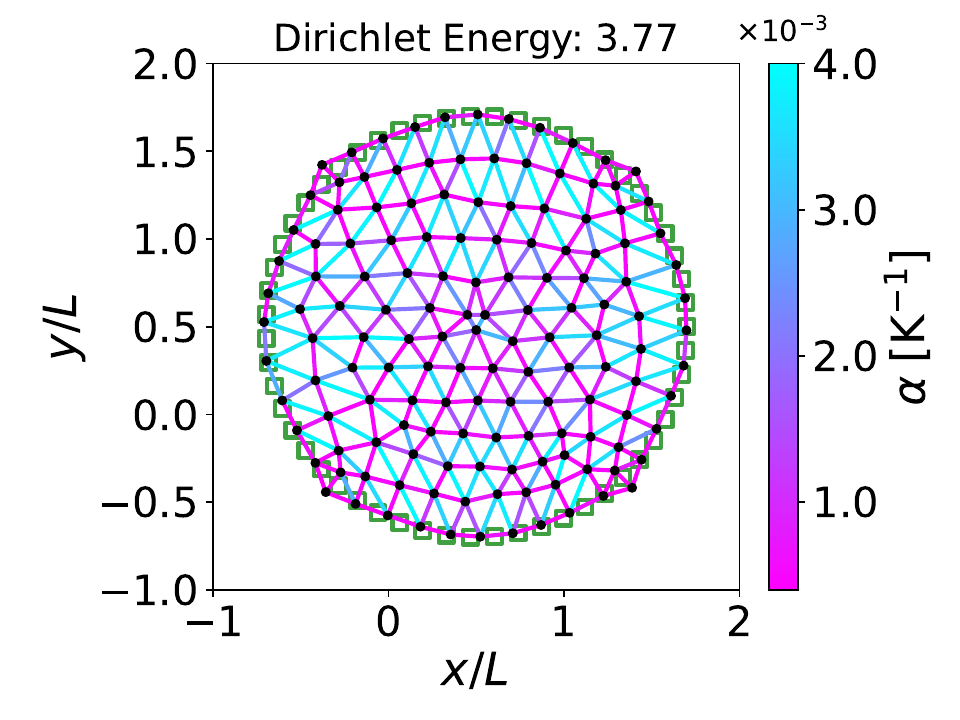}
        \label{fig:apdx_circle_d}
    \end{subfigure}
    \begin{subfigure}{0.32\textwidth}
        \centering
        \includegraphics[width=\textwidth]{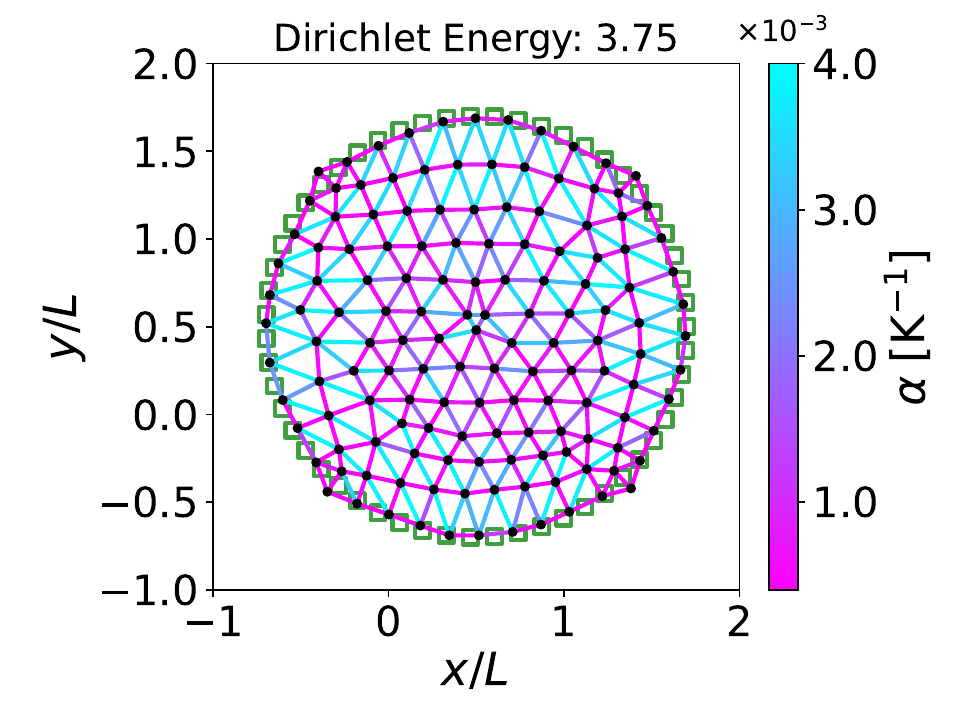}
        \label{fig:apdx_circle_e}
    \end{subfigure}
    \begin{subfigure}{0.32\textwidth}
        \centering
        \includegraphics[width=\textwidth]{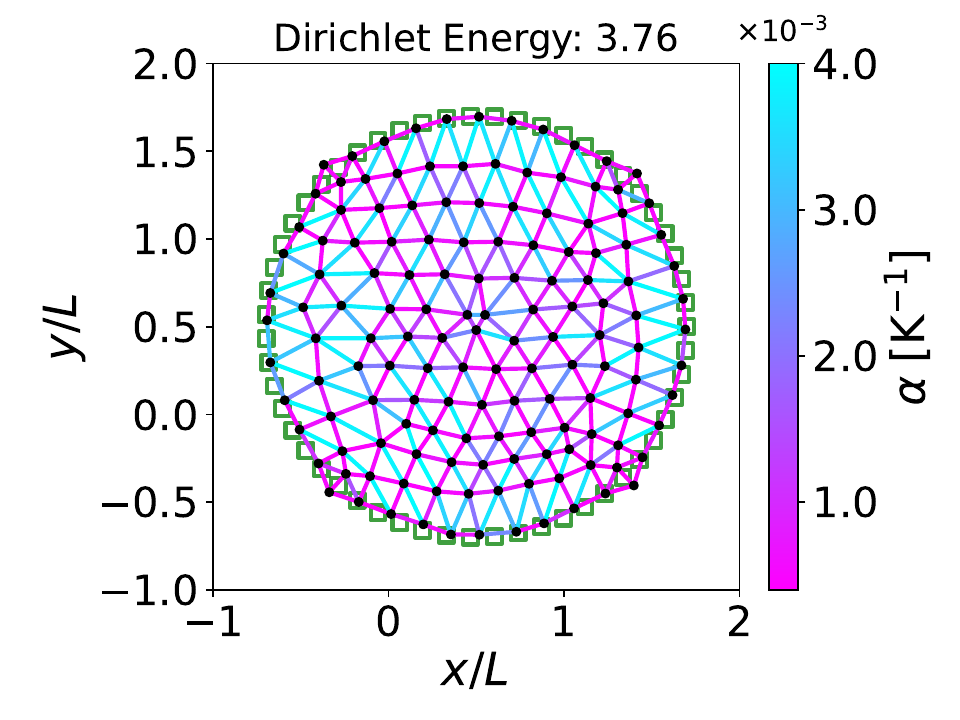}
        \label{fig:apdx_circle_f}
    \end{subfigure}
    \caption{
    Additional multiple trials of the problem in Section~\ref{sec:result:circle}.
    All results are obtained under the same setting as in that section.
    Results of (top) SIREN-based approach obtained from different random initializations of the network weights, and
    (bottom) member-wise approach obtained from random initialization of the indirect variables, $\tilde{\alpha}_e\sim\mathcal{N}(0,1)$. 
    The graph Dirichlet energy values are shown in the title of each panel.
    }
    \label{fig:apdx_circle}
\end{figure}

\cref{fig:apdx_circle} presents the corresponding optimized designs for those multiple trials.
The results show that the SIREN-based representation yields substantially more spatially correlated configurations than the member-wise approach; in particular, even the largest graph Dirichlet energy among the SIREN-based results is smaller than the smallest graph Dirichlet energy among the member-wise results.

\section{Method of Moving Asymptotes}\label{apdx:mma}

In this appendix, we describe implementation details of the method of moving asymptotes (MMA) \cite{svanberg1987method} used in the comparative study of Section~\ref{sec:result:circle}.
Let $\boldsymbol{\psi}\in\mathbb{R}^{N_\psi}$ denote the flattened vector of trainable parameters of the SIREN that continuously represents the thermal expansion coefficient $\alpha$ over the reference configuration.
The unconstrained design problem associated with this comparative study reads
\begin{equation}\label{eq:mma_design_problem}
    \underset{\boldsymbol{\psi}\in[\boldsymbol{\psi}_{\mathrm{min}},\,\boldsymbol{\psi}_{\mathrm{max}}]}{\mathrm{argmin}}
    \;\mathcal{J}(\boldsymbol{u}(\boldsymbol{\psi})),
    \qquad
    \text{subject to}\qquad
    \boldsymbol{R}(\boldsymbol{u};\boldsymbol{\psi})=\boldsymbol{0},
\end{equation}
where the objective function $\mathcal{J}$ is chosen as the Chamfer distance defined in \cref{eq:chamfer_distance}, and $[\boldsymbol{\psi}_{\mathrm{min}},\boldsymbol{\psi}_{\mathrm{max}}]$ denotes component-wise box bounds on the SIREN parameters.
In the present study these bounds are set to $\boldsymbol{\psi}_{\mathrm{min}}=-\mathbf{1}$ and $\boldsymbol{\psi}_{\mathrm{max}}=+\mathbf{1}$, where the obtained weights of the SIREN model in \cref{fig:pr4_circle_a} are found to be bounded in $[-1,1]$.
This selection is an empirical choice motivated by the observation that the parameters of the SIREN in the figure resides within this interval, and we regard it as a reasonable working domain under the premise that the MMA, if suitable for training SIREN, should yield a competitive result within a domain consistent with the baseline solution.

The design gradient $\nabla_{\boldsymbol{\psi}} \mathscr{L}$ is obtained from the same adjoint sensitivity analysis scheme as Section~\ref{sec:formulations:adjoint} combined with the chain rule through the SIREN.
MMA iteratively replaces \cref{eq:mma_design_problem} with a sequence of separable, strictly convex subproblems constructed from the current iterate and its gradient.
It is worth noting that MMA has the same optimizer-level input and output structure as conventional machine learning optimizers such as Adam, in the sense that it updates the trainable parameters using objective and gradient information; yet MMA is specifically designed for nonlinear optimization problems with multiple inequality constraints.
The remainder of this section presents the convex subproblem and its analytic solution, followed by the asymptote update, the move limit assignment, and the summary of hyperparameter settings.
The MMA-based parameter update is illustrated in Algorithm~\ref{alg:mma_step}.

\paragraph{Separable Convex Subproblem}
At iteration $\kappa$, let
$\boldsymbol{\psi}^{(\kappa)}\in\mathbb{R}^{N_\psi}$ denote the current iterate,
$\boldsymbol{\Gamma}^{(\kappa)}:=\nabla_{\boldsymbol{\psi}} \mathscr{L}|_{\boldsymbol{\psi}^{(\kappa)}}$
the gradient of the objective, and
$\boldsymbol{\psi}^{-,(\kappa)},\boldsymbol{\psi}^{+,(\kappa)}\in\mathbb{R}^{N_\psi}$ the lower and
upper moving asymptotes satisfying
$\psi_j^{-,(\kappa)}<\psi_j^{(\kappa)}<\psi_j^{+,(\kappa)}$ component-wise.
Define, for each coordinate $j=1,\dots,N_\psi$,
\begin{align}
    \mathcal{P}_j^{(\kappa)}
    &=
    \bigl(\psi_j^{+,(\kappa)}-\psi_j^{(\kappa)}\bigr)^2
    \mathrm{max}(\Gamma_j^{(\kappa)},\,0),
    \label{eq:mma_pj}\\
    \mathcal{Q}_j^{(\kappa)}
    &=
    (\psi_j^{(\kappa)}-\psi_j^{-,(\kappa)})^2
    \mathrm{max}(-\Gamma_j^{(\kappa)},0).
    \label{eq:mma_qj}
\end{align}
The solution $\psi_j^\star$ of the subproblem at iteration $\kappa$ is then
\begin{equation}\label{eq:mma_subproblem}
    \underset{\boldsymbol{\psi}}{\mathrm{argmin}}
    \sum_{j=1}^{N_\psi}
    \left[
        \frac{\mathcal{P}_j^{(\kappa)}}{\psi_j^{+,(\kappa)}-\psi_j}
        +
        \frac{\mathcal{Q}_j^{(\kappa)}}{\psi_j-\psi_j^{-,(\kappa)}}
    \right]
    \quad
    \text{subject to}
    \quad
    \psi_j\in[\xi_j^{-,(\kappa)},\,\xi_j^{+,(\kappa)}],
\end{equation}
where the per-coordinate move limits $[\xi_j^{-,(\kappa)},\xi_j^{+,(\kappa)}]$ are defined
in \cref{eq:mma_move_limits} below.
Since no inequality constraints are imposed beyond the box bounds, the
subproblem in \cref{eq:mma_subproblem} is separable across the
coordinates of $\boldsymbol{\psi}$.
Its unique minimizer admits the closed form
\begin{equation}\label{eq:mma_primal_closed_form}
    \psi_j^{\star}
    =
    \frac{
        \sqrt{\mathcal{P}_j^{(\kappa)}+\epsilon}\,\psi_j^{-,(\kappa)}
        +
        \sqrt{\mathcal{Q}_j^{(\kappa)}+\epsilon}\,\psi_j^{+,(\kappa)}
    }{
        \sqrt{\mathcal{P}_j^{(\kappa)}+\epsilon}
        +
        \sqrt{\mathcal{Q}_j^{(\kappa)}+\epsilon}
    },
\end{equation}
where $\epsilon=10^{-12}$ is a numerical regularization that prevents
division by zero when both $\mathcal{P}_j^{(\kappa)}$ and $\mathcal{Q}_j^{(\kappa)}$ vanish.
The next iterate $\boldsymbol{\psi}^{(\kappa+1)}$ is obtained by clipping
$\boldsymbol{\psi}^{\star}$ to the move-limit box
$[\boldsymbol{\xi}^{-,(\kappa)},\boldsymbol{\xi}^{+,(\kappa)}]$ and subsequently to the
global box $[\boldsymbol{\psi}_{\mathrm{min}},\boldsymbol{\psi}_{\mathrm{max}}]$.

\paragraph{Asymptote Update}
The asymptotes $\boldsymbol{\psi}^{-,(\kappa)}$ and $\boldsymbol{\psi}^{+,(\kappa)}$ are updated
using the standard oscillation-aware heuristic proposed
in~\cite{svanberg1987method}.
Let $\Delta\boldsymbol{\psi}:=\boldsymbol{\psi}_{\mathrm{max}}-\boldsymbol{\psi}_{\mathrm{min}}$
denote the component-wise design span.
For the first two iterations ($\kappa=0,1$), at which iteration history is not
yet available, the asymptotes are initialized as
\begin{equation}\label{eq:mma_asymptote_init}
    \psi_j^{-,(\kappa)}=\psi_j^{(\kappa)}-\zeta_0\,\Delta\psi_j,
    \qquad
    \psi_j^{+,(\kappa)}=\psi_j^{(\kappa)}+\zeta_0\,\Delta\psi_j,
    \qquad
    \zeta_0=0.5.
\end{equation}
For $\kappa\ge 2$, the asymptote distances are contracted when the sign of the
coordinate increment reverses between consecutive iterations, and expanded
otherwise, according to
\begin{equation}\label{eq:mma_sj}
    \chi_j^{(\kappa)}
    =
    \begin{cases}
        \chi^{+}=1.2, & \bigl(\psi_j^{(\kappa)}-\psi_j^{(\kappa-1)}\bigr)\bigl(\psi_j^{(\kappa-1)}-\psi_j^{(\kappa-2)}\bigr)>0,\\[2pt]
        \chi^{-}=0.7, & \bigl(\psi_j^{(\kappa)}-\psi_j^{(\kappa-1)}\bigr)\bigl(\psi_j^{(\kappa-1)}-\psi_j^{(\kappa-2)}\bigr)<0,\\[2pt]
        1, & \text{otherwise},
    \end{cases}
\end{equation}
and the asymptotes are updated by
\begin{equation}\label{eq:mma_asymptote_update}
    \psi_j^{-,(\kappa)}=\psi_j^{(\kappa)}-\chi_j^{(\kappa)}\bigl(\psi_j^{(\kappa-1)}-\psi_j^{-,(\kappa-1)}\bigr),
    \qquad
    \psi_j^{+,(\kappa)}=\psi_j^{(\kappa)}+\chi_j^{(\kappa)}\bigl(\psi_j^{+,(\kappa-1)}-\psi_j^{(\kappa-1)}\bigr).
\end{equation}
To prevent the asymptotes from collapsing onto the current iterate under
repeated contractions, which would reduce the subproblem in
\cref{eq:mma_subproblem} to a vanishing move-limit step, a minimum asymptote gap of $\zeta_{\mathrm{gap}}\,\Delta\psi_j$ is enforced by the post-processing
\begin{equation}\label{eq:mma_asymptote_clamp}
    \psi_j^{-,(\kappa)}\gets\min\!\bigl(\psi_j^{-,(\kappa)},\,\psi_j^{(\kappa)}-\zeta_{\mathrm{gap}}\Delta\psi_j\bigr),
    \qquad
    \psi_j^{+,(\kappa)}\gets\max\!\bigl(\psi_j^{+,(\kappa)},\,\psi_j^{(\kappa)}+\zeta_{\mathrm{gap}}\Delta\psi_j\bigr).
\end{equation}

\paragraph{Move Limits}
Move limits restrict each component of the subproblem solution to lie strictly inside the current asymptote interval, thereby preventing the denominators in \cref{eq:mma_subproblem} from vanishing.
The per-coordinate move-limit bounds are defined as
\begin{equation}\label{eq:mma_move_limits}
    \xi_j^{-,(\kappa)}
    =
    0.9\psi_j^{-,(\kappa)}+0.1\psi_j^{(\kappa)},
    \qquad
    \xi_j^{+,(\kappa)}
    =
    0.9\psi_j^{+,(\kappa)}+0.1\psi_j^{(\kappa)},
\end{equation}
subsequently intersected with the global box $[\boldsymbol{\psi}_{\mathrm{min}},\boldsymbol{\psi}_{\mathrm{max}}]$.

\paragraph{Hyperparameter Summary}
The hyperparameters used in the MMA implementation adopted for the comparative study in Section~\ref{sec:result:circle} are set to $\zeta_0=0.5$, $\chi^{+}=1.2$, $\chi^{-}=0.7$, and $\zeta_{\mathrm{gap}}=10^{-2}$.
These values are applied identically over the $300$ iterations reported in \cref{fig:pr4_circle_mma_b}.

\begin{algorithm}[htbp]
\caption{One update step of MMA.}
\label{alg:mma_step}
\begin{algorithmic}[1]
\Require Current iterate $\boldsymbol{\psi}^{(\kappa)}$, past iterates $\boldsymbol{\psi}^{(\kappa-1)},\boldsymbol{\psi}^{(\kappa-2)}$, gradient $\boldsymbol{\Gamma}^{(\kappa)}=\nabla_{\boldsymbol{\psi}} \mathscr{L}|_{\boldsymbol{\psi}^{(\kappa)}}$, past asymptotes $\boldsymbol{\psi}^{-,(\kappa-1)},\boldsymbol{\psi}^{+,(\kappa-1)}$, box bounds $\boldsymbol{\psi}_{\mathrm{min}},\boldsymbol{\psi}_{\mathrm{max}}$, hyperparameters $\zeta_0,\chi^{+},\chi^{-},\zeta_{\mathrm{gap}}$
\Ensure Updated iterate $\boldsymbol{\psi}^{(\kappa+1)}$, updated asymptotes $\boldsymbol{\psi}^{-,(\kappa)},\boldsymbol{\psi}^{+,(\kappa)}$
\If{$\kappa<2$}
    \State Initialize $\boldsymbol{\psi}^{-,(\kappa)},\,\boldsymbol{\psi}^{+,(\kappa)}$ \Comment{\cref{eq:mma_asymptote_init}}
\Else
    \State Update $\boldsymbol{\chi}^{(\kappa)}$ \Comment{\cref{eq:mma_sj}}
    \State Update $\boldsymbol{\psi}^{-,(\kappa)},\,\boldsymbol{\psi}^{+,(\kappa)}$ \Comment{\cref{eq:mma_asymptote_update}}
\EndIf
\State Enforce the minimum asymptote gap \Comment{\cref{eq:mma_asymptote_clamp}}
\State Compute $\boldsymbol{\mathcal{P}}^{(\kappa)}$ \Comment{\cref{eq:mma_pj}}
\State Compute $\boldsymbol{\mathcal{Q}}^{(\kappa)}$ \Comment{\cref{eq:mma_qj}}
\State Set the move limit bounds $\boldsymbol{\xi}^{-,(\kappa)},\,\boldsymbol{\xi}^{+,(\kappa)}$ \Comment{\cref{eq:mma_move_limits}}
\State Compute the subproblem minimizer $\boldsymbol{\psi}^{\star}$ \Comment{\cref{eq:mma_primal_closed_form}}
\State Project $\boldsymbol{\psi}^{\star}$ onto $[\boldsymbol{\xi}^{-,(\kappa)},\boldsymbol{\xi}^{+,(\kappa)}]\cap[\boldsymbol{\psi}_{\mathrm{min}},\boldsymbol{\psi}_{\mathrm{max}}]$ to obtain $\boldsymbol{\psi}^{(\kappa+1)}$
\State \Return $\boldsymbol{\psi}^{(\kappa+1)},\,\boldsymbol{\psi}^{-,(\kappa)},\,\boldsymbol{\psi}^{+,(\kappa)}$
\end{algorithmic}
\end{algorithm}

\section{Effect of Noise on Data}\label{apdx:noise}

In this section, we investigate the effect of noise in the training data on the learned constitutive prior and the subsequent inverse-design results. Controlled noise is introduced into the training data, and the errors of the trained models and the corresponding inverse-design results are evaluated over multiple trials. Five independent end-to-end trials are conducted at each noise level, where each trial comprises data perturbation, where applicable, model training, and inverse design. Thus, a total of 15 trials are performed. Unless otherwise stated, the problem setting follows Section~\ref{sec:result:bar}.

The effect of noise is investigated at three distinct levels by adding sampled perturbations to the clean data. The perturbations are realized as pointwise independent and identically distributed Gaussian random variables sampled from $\mathcal{N}(0,\sigma^2)$, with $\sigma=0$, $\sigma=0.05$, and $\sigma=0.10$. these sampled noises are then multiplied with the reference stress value $P_0$.
The case $\sigma=0$ corresponds to the clean data. The underlying clean data are generated using the neo-Hookean model in \cref{eq:neohookean} with $E=1\;\mathrm{MPa}$, $E=2\;\mathrm{MPa}$, and $E=3\;\mathrm{MPa}$. For each material response, 20 equispaced data points are generated over the stretch interval $\lambda\in[0.5,3]$.

\begin{figure}[htbp]
    \centering
    \begin{subfigure}{0.33\textwidth}
        \centering
        \includegraphics[width=\textwidth]{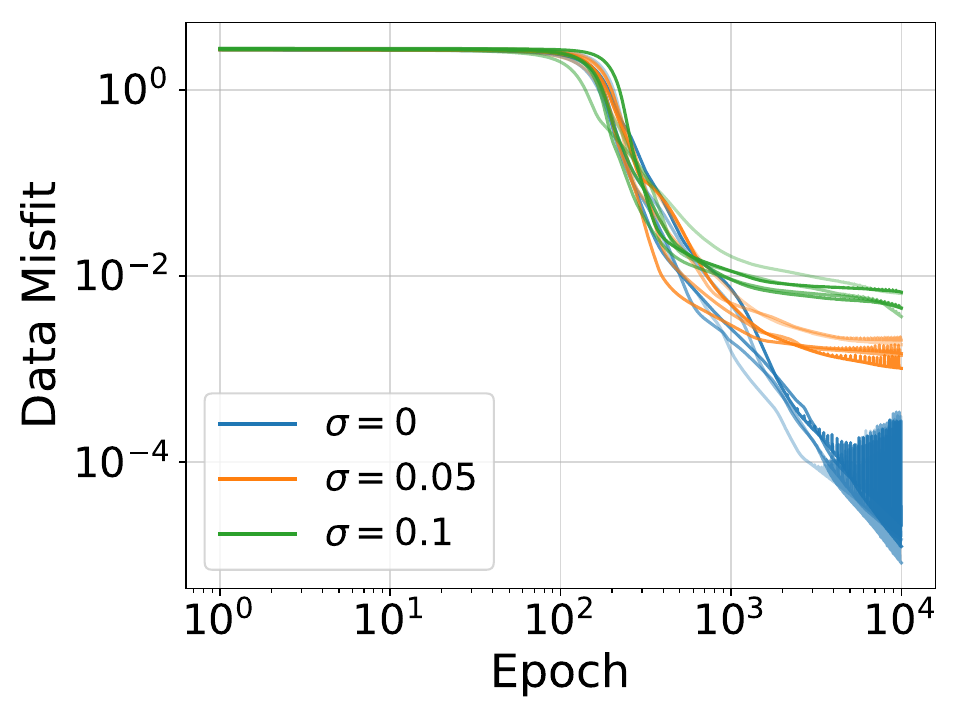}
    \end{subfigure}
    \hspace{0.05\textwidth}
    \begin{subfigure}{0.33\textwidth}
        \centering
        \includegraphics[width=\textwidth]{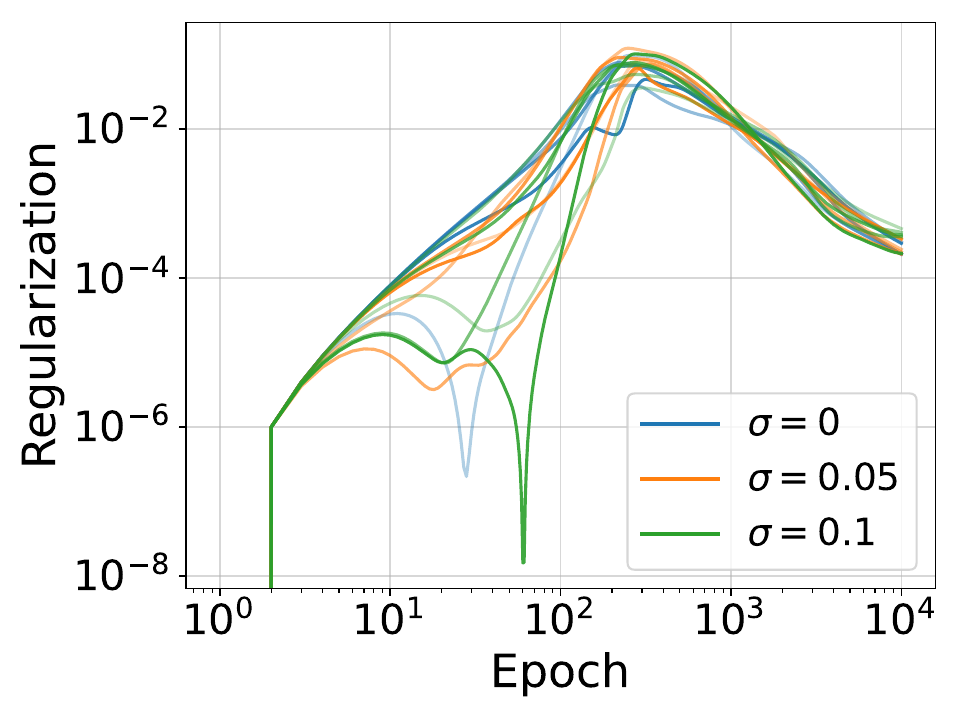}
    \end{subfigure}
    \caption{
    Loss histories over multiple trials of model training on the perturbed datasets.
    A total of 15 histories are shown, with five trials for each noise level. 
    (left) Data-misfit term and (right) regularization term are shown.
    }
    \label{fig:apdx_noise_loss}
\end{figure}

For each realized dataset, a model is independently trained from a different random initialization. Each model is trained for a fixed number of epochs using the objective function in \cref{eq:surrogate_loss}. The resulting loss histories are shown in \cref{fig:apdx_noise_loss}. The left panel shows a clear separation between the models trained on the clean and perturbed data. In particular, the final data-misfit term generally increases with $\sigma$, indicating that the models achieve a poorer fit to the perturbed observations than to the clean data.

This behavior is further illustrated in \cref{fig:apdx_noise_pred}. Because the pointwise perturbations may introduce locally non-monotone responses, the perturbed data need not be consistent with an energy function that is convex in $\lambda$, stress-free, and energy-free. In contrast, the constitutive model adopted in the present work enforces these considerations by construction. The trained model therefore provides a relatively smooth and mechanically admissible approximation rather than reproducing all perturbations in the data, resulting in a larger data-misfit term as the noise level increases.

\begin{figure}[htbp]
    \centering
    \begin{subfigure}{0.3\textwidth}
        \centering
        \includegraphics[width=\textwidth]{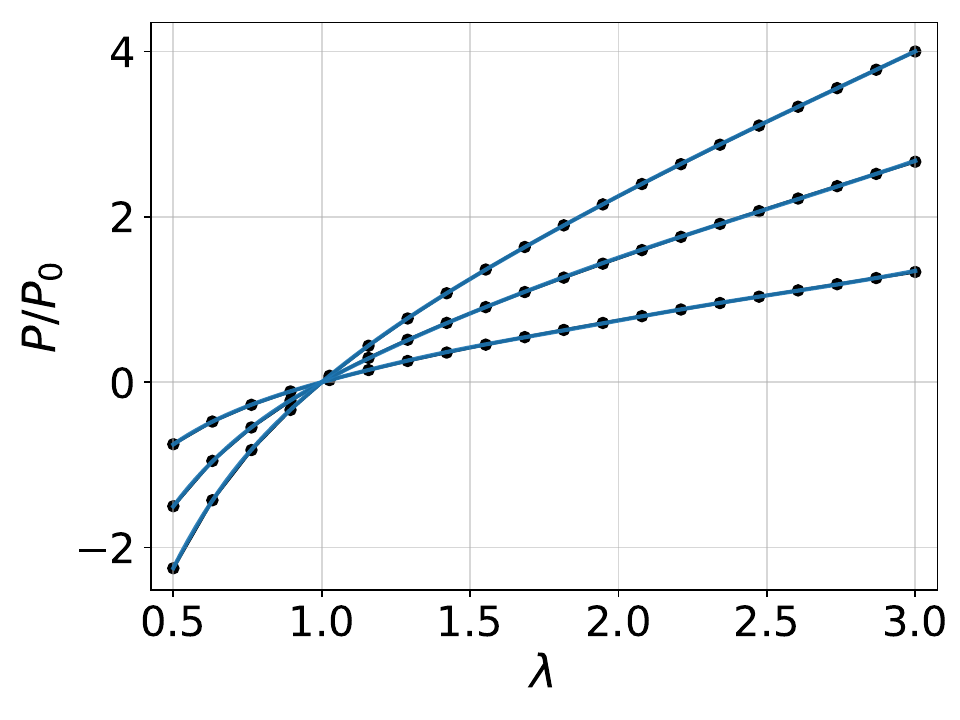}
        \caption{}
        \label{fig:apdx_noise_pred_a}
    \end{subfigure}
    \begin{subfigure}{0.3\textwidth}
        \centering
        \includegraphics[width=\textwidth]{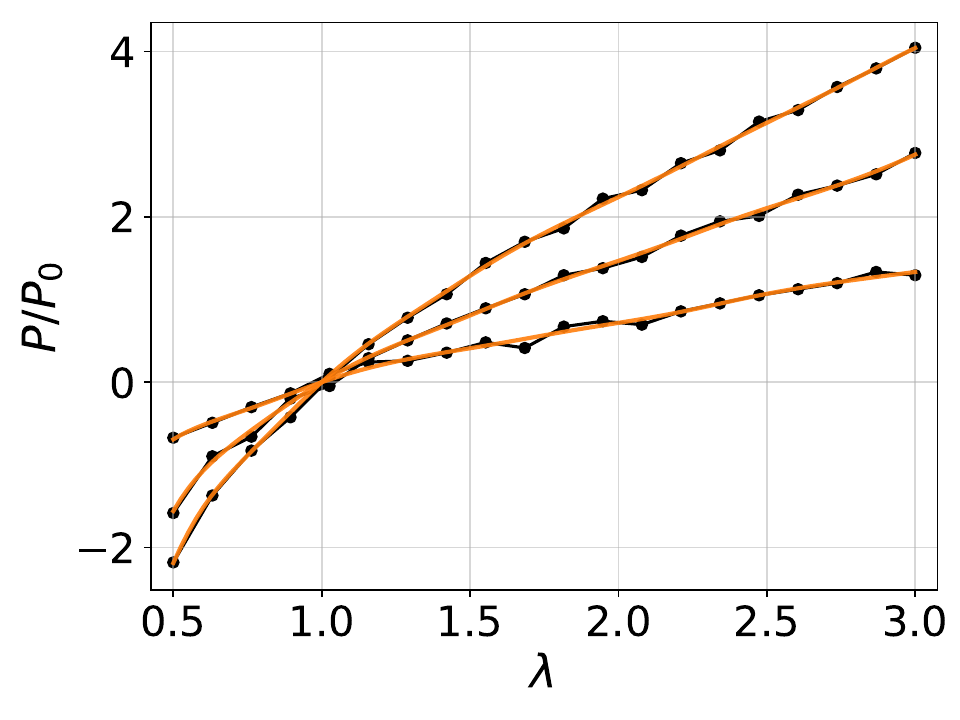}
        \caption{}
        \label{fig:apdx_noise_pred_b}
    \end{subfigure}
    \begin{subfigure}{0.3\textwidth}
        \centering
        \includegraphics[width=\textwidth]{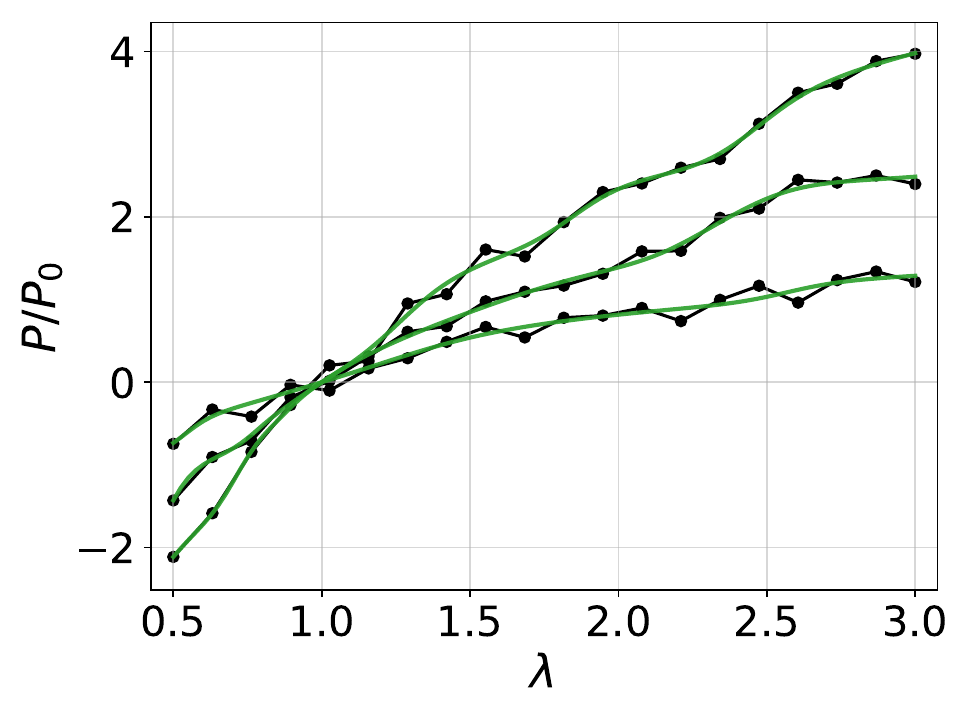}
        \caption{}
        \label{fig:apdx_noise_pred_c}
    \end{subfigure}
    \caption{
    Representative trained models obtained at different noise levels: (\subref{fig:apdx_noise_pred_a}) clean data, (\subref{fig:apdx_noise_pred_b}) $\sigma=0.05$, and (\subref{fig:apdx_noise_pred_c}) $\sigma=0.10$.
    The training data used for each model are depicted as black markers connected by solid lines.
    }
    \label{fig:apdx_noise_pred}
\end{figure}

The right panel of \cref{fig:apdx_noise_loss} shows that the regularization terms remain comparable across the three noise levels. This indicates that the magnitudes of the learned latent variables may remain similar under the considered perturbations. The results therefore suggest that the imposed mechanical constraints primarily limit the model's ability to reproduce noise-induced irregularities in the observed responses.

\begin{figure}[htbp]
    \centering
    \begin{subfigure}{0.4\textwidth}
        \centering
        \includegraphics[width=\textwidth]{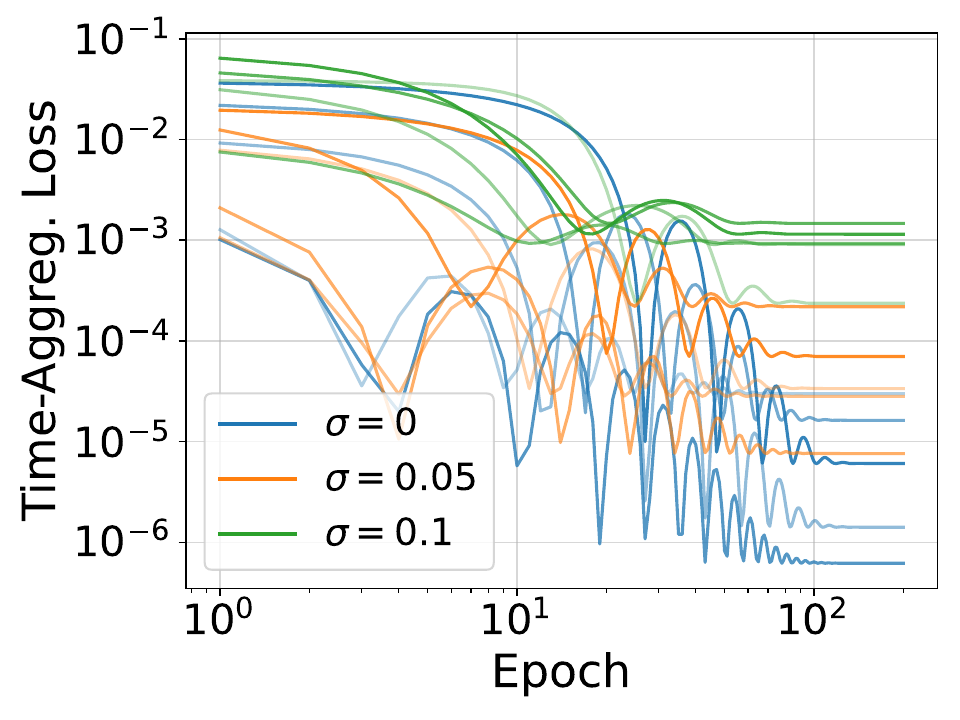}
    \end{subfigure}
    \caption{
    Loss histories of the multiple trials of the considered inverse problem.
    }
    \label{fig:apdx_noise_inverse}
\end{figure}

For each trained model, inverse-design trials are conducted using different initial guesses. The initial guesses are sampled uniformly from the learned admissible latent range $\mathcal{Z}$ from training of each model. 
Here, we set ground truth to be the trajectory of material response with $E=1.5\;\mathrm{MPa}$, with external force $f=1\;\mathrm{N}$ and number of load step $20$.
The maximum number of inverse-design iterations is fixed at 200 for all trials.
The learning rate is set to $1\times10^{-3}$, which performs well in this setting.

\begin{figure}[htbp]
    \centering
    \begin{subfigure}{0.3\textwidth}
        \centering
        \includegraphics[width=\textwidth]{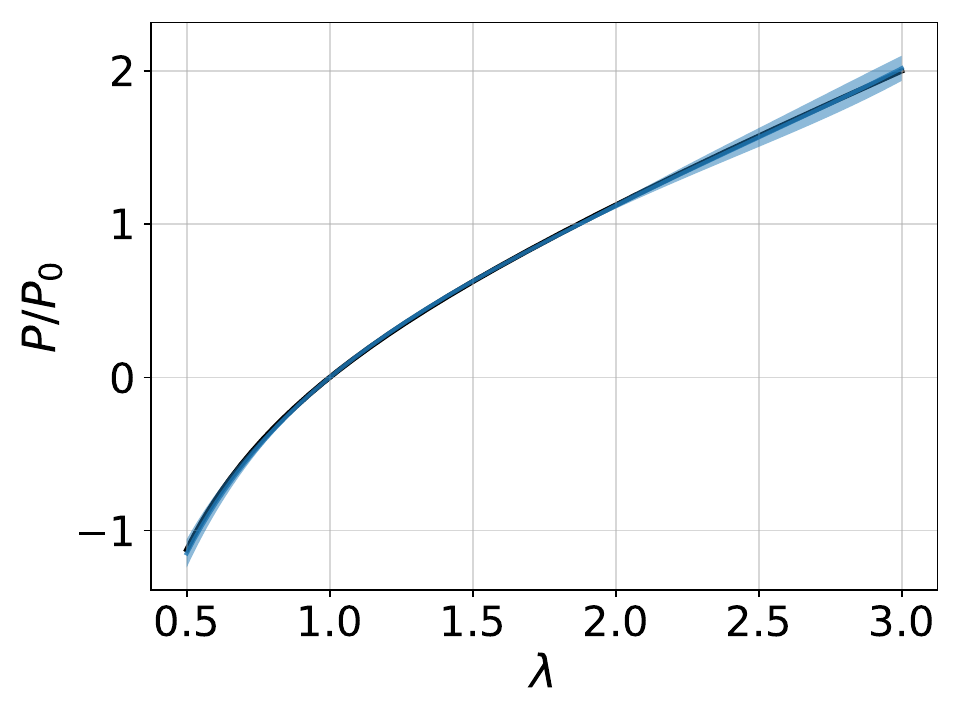}
        \caption{}
        \label{fig:apdx_noise_recover_a}
    \end{subfigure}
    \begin{subfigure}{0.3\textwidth}
        \centering
        \includegraphics[width=\textwidth]{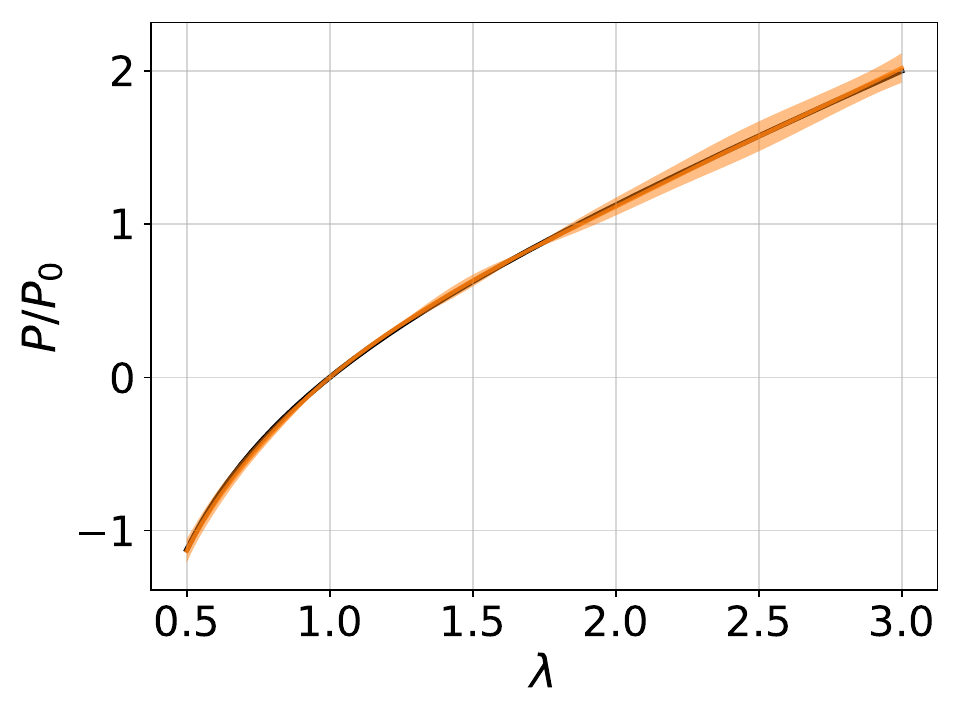}
        \caption{}
        \label{fig:apdx_noise_recover_b}
    \end{subfigure}
    \begin{subfigure}{0.3\textwidth}
        \centering
        \includegraphics[width=\textwidth]{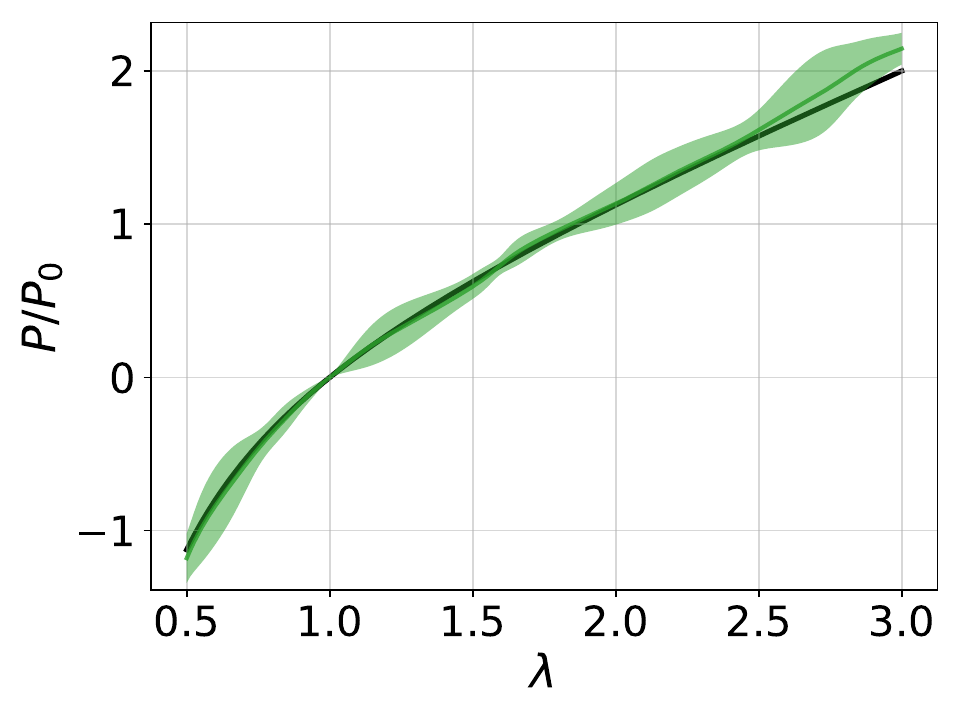}
        \caption{}
        \label{fig:apdx_noise_recover_c}
    \end{subfigure}
    \caption{
    Pointwise empirical mean and variability of the constitutive responses recovered from the inverse-identification trials at three noise levels: (\subref{fig:apdx_noise_recover_a}) clean data, (\subref{fig:apdx_noise_recover_b}) $\sigma=0.05$, and (\subref{fig:apdx_noise_recover_c}) $\sigma=0.10$. The ground-truth response is depicted by a solid black line. The sample mean of the recovered responses is shown together with overlapping translucent bands representing $\pm2$ sample standard deviations.
    }
    \label{fig:apdx_noise_recover}
\end{figure}

The corresponding inverse-identification results are presented in \cref{fig:apdx_noise_inverse}. The objective values decrease during the initial iterations and approach a plateau after approximately 100 iterations, with only minor subsequent changes. The results also show a general tendency for the models trained on more strongly perturbed data to attain higher final objective values.

The constitutive responses recovered from the inverse-identification trials are shown in \cref{fig:apdx_noise_recover}. Consistent with the objective values in \cref{fig:apdx_noise_inverse}, the models trained on the clean data recover responses close to the ground truth with relatively small trial-to-trial variability. As the noise level increases, the recovered responses exhibit larger deviations and greater variability.
Nevertheless, \cref{fig:apdx_noise_recover_c} shows that the responses recovered under the highest noise level remain mechanically admissible and that their average response does not deviate substantially from the ground truth, despite the comparatively large final objective values reported in \cref{fig:apdx_noise_inverse}.

Collectively, these results demonstrate that the proposed approach remains applicable when the constitutive prior is learned from perturbed training data. Although the accuracy and consistency of the recovered responses deteriorate as $\sigma$ increases, the inferred responses remain within a mechanically admissible range and retain the overall characteristics of the ground-truth response under the noise levels considered.

\section{Computational Setup and Cost}\label{apdx:compute_cost}

In this section, we summarize the computational setup and cost for the numerical examples presented in Section~\ref{sec:result}.
All computations are performed on an Apple M5 Pro chip with 15 CPU cores (5 Super cores and 10 Performance cores) and 48~GB of unified memory.
The implementations are executed using \texttt{Python} and \texttt{PyTorch}~\cite{paszke2019pytorch}, with double-precision floating-point arithmetic (\texttt{float64}) used throughout. 

The computational cost associated with each example is summarized in Table~\ref{tab:compute_cost}.
In the table, each case is identified using the corresponding boldface label adopted in the main text; for example, the case in Section~\ref{sec:result:bar} is labeled \textbf{1D Bar}. For each numerical example, the reported values correspond to the representative results shown in \cref{fig:pr1_trainsient_bar,fig:pr2_airfoil_result,fig:pr3_crack_result,fig:pr4_circle_b,fig:pr3_heart}, respectively. The wall-clock time is reported as the mean and sample standard deviation over five trials. The system degrees of freedom used in each forward solve and the number of design variables, corresponding to the number of elements in each system, are also reported.

\begin{table*}[htbp]
\centering
\caption{Summary of the computational cost of each problem in Section~\ref{sec:result}.}
\label{tab:compute_cost}
\begin{tabular}{l c c c c c}
\hline
\textbf{Setting} & \textbf{1D Bar} & \textbf{Airfoil} & \textbf{Fracture} & \textbf{Circle} & \textbf{Heart} \\
\hline
Wall-clock time [s] & $2.227\pm0.015$ & $1.813\pm0.019$ & $5.323\pm0.030$ & $8.316\pm0.025$ & $69.912\pm0.709$ \\ 
Degrees of freedom & 1 & 56 & 278 & 278 & 278 \\ 
Number of design variables & 1 & 65 & 383 & 383 & 383 \\
Number of epochs & 27 & 29 & 48 & 72 & 76 \\
\hline
\end{tabular}
\end{table*}

\section{Energy- and Stress-Free Normalization and Convexity Preservation}
\label{apdx:stress_free}

In this section, we show that the construction introduced in
\cref{eq:stress_free} yields an energy-free and stress-free reference configuration. We further show that the normalization preserves convexity in $\boldsymbol{i}$.

In the present setting, it suffices to show that the Biot stress $\boldsymbol{T}_{\boldsymbol{\phi}}=\partial\mathcal{M}_{\boldsymbol{\phi}}/\partial\boldsymbol{U}$ vanishes at the reference configuration, i.e., $\boldsymbol{T}_{\boldsymbol{\phi}}(\boldsymbol{I};\boldsymbol{z})=\boldsymbol{0}$.

The derivatives of the principal invariants of $\boldsymbol{U}$ are~\cite{truesdell2004non}
\begin{align}
\frac{\partial i_1}{\partial\boldsymbol{U}}
=
\boldsymbol{I},
\qquad
\frac{\partial i_2}{\partial\boldsymbol{U}}
=
i_1\boldsymbol{I}-\boldsymbol{U},
\qquad
\frac{\partial i_3}{\partial\boldsymbol{U}}
=
i_3\boldsymbol{U}^{-1}.
\end{align}
Taking the derivative of \cref{eq:stress_free} with respect to
$\boldsymbol{U}$ gives
\begin{equation}
\boldsymbol{T}_{\boldsymbol{\phi}}
=
\frac{
\partial g_{\boldsymbol{\phi}}
}{
\partial i_k
}
\frac{
\partial i_k
}{
\partial\boldsymbol{U}
}
-
c(\boldsymbol{z})
\frac{
\partial i_3
}{
\partial\boldsymbol{U}
}, \quad \text{summation over $k=1,2,3$}.
\end{equation}

At the reference configuration, $\boldsymbol{U}=\boldsymbol{I}$, and therefore
\begin{align}
\boldsymbol{T}_{\boldsymbol{\phi}}
(\boldsymbol{I};\boldsymbol{z})
=
\left.
\left(
\frac{\partial g_{\boldsymbol{\phi}}}{\partial i_1}
+
2
\frac{\partial g_{\boldsymbol{\phi}}}{\partial i_2}
+
\frac{\partial g_{\boldsymbol{\phi}}}{\partial i_3}
-
c(\boldsymbol{z})
\right)
\right|_{\boldsymbol{i}
=
\boldsymbol{i}_0}
\boldsymbol{I}
\
=
\boldsymbol{0},
\end{align}
where the last equality follows directly from the definition of
$c(\boldsymbol{z})$ in \cref{eq:stress_free_fact}. 

The energy-free condition follows directly from
\begin{equation}
\mathcal{M}_{\boldsymbol{\phi}}
(\boldsymbol{i}_0;\boldsymbol{z})
=
g_{\boldsymbol{\phi}}
(\boldsymbol{i}_0;\boldsymbol{z})
-
g_{\boldsymbol{\phi}}
(\boldsymbol{i}_0;\boldsymbol{z})
-
c(\boldsymbol{z})(1-1)
=
0.
\end{equation}

\begin{lemma}[Affine perturbations preserve convexity]
\label{lemma:affine_convexity}
Let $g:\mathbb{R}^m\to\mathbb{R}$ be convex, and let
$a(\boldsymbol{i})
=\boldsymbol{b}^{\mathsf T}\boldsymbol{i}+d$
be affine. Then the function
\[
\widetilde{g}(\boldsymbol{i})
=
g(\boldsymbol{i})
+
a(\boldsymbol{i})
\]
is convex in $\boldsymbol{i}$.
\end{lemma}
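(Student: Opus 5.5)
The plan is to verify the defining inequality of convexity directly. Fix arbitrary $\boldsymbol{i},\boldsymbol{j}\in\mathbb{R}^m$ and $t\in[0,1]$, and show that
\[
\widetilde{g}\bigl(t\boldsymbol{i}+(1-t)\boldsymbol{j}\bigr)\le t\,\widetilde{g}(\boldsymbol{i})+(1-t)\,\widetilde{g}(\boldsymbol{j}).
\]

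First, I will show that the affine part satisfies this relation with equality. Expanding gives
\[
a\bigl(t\boldsymbol{i}+(1-t)\boldsymbol{j}\bigr)=t\,\boldsymbol{b}^{\mathsf T}\boldsymbol{i}+(1-t)\,\boldsymbol{b}^{\mathsf T}\boldsymbol{j}+\bigl(t+(1-t)\bigr)d = t\,a(\boldsymbol{i})+(1-t)\,a(\boldsymbol{j}).
\]
The only point that needs care is splitting the constant $d$ as $td+(1-t)d$, which uses that the convex weights sum to one.

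Second, I will invoke convexity of $g$, which gives $g(t\boldsymbol{i}+(1-t)\boldsymbol{j})\le t\,g(\boldsymbol{i})+(1-t)\,g(\boldsymbol{j})$. Adding this inequality to the equality from the first step, and regrouping the terms as $\widetilde g=g+a$, yields exactly the required inequality for $\widetilde g$.

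I do not expect any step to be a real obstacle. The only thing to get right is the bookkeeping of the constant term in the first step.

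Finally, I will connect the lemma to the construction in \cref{eq:stress_free}. For fixed $\boldsymbol{z}$, the subtracted quantity $g_{\boldsymbol{\phi}}(\boldsymbol{i}_0;\boldsymbol{z})+c(\boldsymbol{z})(i_3-1)$ is affine in $\boldsymbol{i}$. Its coefficient vector is $\boldsymbol{b}=(0,0,-c(\boldsymbol{z}))$ and its constant is $d=c(\boldsymbol{z})-g_{\boldsymbol{\phi}}(\boldsymbol{i}_0;\boldsymbol{z})$. Applying the lemma with $g=g_{\boldsymbol{\phi}}(\cdot;\boldsymbol{z})$ then shows that $\mathcal{M}_{\boldsymbol{\phi}}$ is convex in $\boldsymbol{i}$. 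The same argument covers the axial-rod case, where the affine term is in $\lambda$.
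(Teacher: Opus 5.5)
Your proposal is correct and follows the paper's proof: the paper likewise verifies the defining inequality directly, combining convexity of $g$ with the identity $a(\alpha\boldsymbol{i}_1+(1-\alpha)\boldsymbol{i}_2)=\alpha a(\boldsymbol{i}_1)+(1-\alpha)a(\boldsymbol{i}_2)$, and then applies the lemma to the normalization in \cref{eq:stress_free}. One harmless labeling slip in your last paragraph: $\boldsymbol{b}=(0,0,-c(\boldsymbol{z}))$ and $d=c(\boldsymbol{z})-g_{\boldsymbol{\phi}}(\boldsymbol{i}_0;\boldsymbol{z})$ are the coefficients of the added perturbation $a$, which is the negative of the subtracted quantity, and this is exactly the form the lemma needs.
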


\begin{proof}
For any $\boldsymbol{i}_1,\boldsymbol{i}_2\in\mathbb{R}^m$
and $\alpha\in[0,1]$, convexity of $g$ and affinity of $a$ give
\begin{align}
\widetilde{g}
\bigl(
\alpha\boldsymbol{i}_1
+
(1-\alpha)\boldsymbol{i}_2
\bigr)
&=
g\bigl(
\alpha\boldsymbol{i}_1
+
(1-\alpha)\boldsymbol{i}_2
\bigr)
+
a\bigl(
\alpha\boldsymbol{i}_1
+
(1-\alpha)\boldsymbol{i}_2
\bigr)
\\
&\le
\alpha g(\boldsymbol{i}_1)
+
(1-\alpha)g(\boldsymbol{i}_2)
+
\alpha a(\boldsymbol{i}_1)
+
(1-\alpha)a(\boldsymbol{i}_2)
\\
&=
\alpha\widetilde{g}(\boldsymbol{i}_1)
+
(1-\alpha)\widetilde{g}(\boldsymbol{i}_2).
\end{align}
Hence, $\widetilde{g}$ is convex.
\end{proof}

For each fixed $\boldsymbol{z}$, the normalization in
\cref{eq:stress_free} modifies
$g_{\boldsymbol{\phi}}(\boldsymbol{i};\boldsymbol{z})$
only by an affine function of $\boldsymbol{i}$. Therefore,
by Lemma~\ref{lemma:affine_convexity}, 
$\mathcal{M}_{\boldsymbol{\phi}}(\boldsymbol{i};\boldsymbol{z})$
is convex in $\boldsymbol{i}$ whenever
$g_{\boldsymbol{\phi}}(\boldsymbol{i};\boldsymbol{z})$
is convex in $\boldsymbol{i}$.
\begin{remark}
    Related affine modifications have also been discussed in the literature. In particular, \citet{linden2023neural} consider stress normalization and polyconvexity for hyperelastic models formulated in terms of invariants of $\mathbf{C}$. The discussion here instead concerns the stretch-based formulation in terms of invariants of $\mathbf{U}$, following \citet{steigmann2003isotropic}.
\end{remark}

\end{document}